\newcommand{\algo}{{\sc mcts4dm}}
\theoremstyle{plain}
\newtheorem{thm}{Theorem}[section]
\newtheorem{proposition}[thm]{Proposition}
\newtheorem{algorithm}{Algorithm}[section]
\theoremstyle{definition}
\newtheorem{definition}{Definition}[]
\newtheorem{problem}{Problem}[]
\theoremstyle{remark}
\newtheorem{example}{Example}[]
\DeclareMathOperator*{\argmax}{arg\,max}
\begin{document}

\title{Any-time Diverse Subgroup Discovery with Monte Carlo Tree Search}

\author[1]{Guillaume Bosc}
\author[1]{Jean-Fran\c{c}ois Boulicaut}
\author[2]{Chedy Ra\"{i}ssi}
\author[1]{Mehdi Kaytoue}
\affil[1]{Universit\'e de Lyon, CNRS, INSA-Lyon, LIRIS, UMR5205, F-69621, France, email: firstname.lastname@insa-lyon.fr}
\affil[2]{INRIA Nancy Grand Est, France, email: chedy.raissi@inria.fr}
\date{}
\maketitle

\begin{abstract}
The discovery of patterns that accurately discriminate one class label from another remains a challenging data mining task. Subgroup discovery (SD) is one of the frameworks that enables to elicit such interesting patterns from labeled data. A question remains fairly open: How to select an accurate heuristic search technique when exhaustive enumeration of the pattern space is infeasible? 
Existing approaches make use of beam-search, sampling, and genetic algorithms for discovering a pattern set that is non-redundant and of high quality w.r.t.  a  pattern quality measure. 
We argue that such approaches produce pattern sets that lack of diversity: Only few patterns of high quality, and different enough, are discovered. 
Our main contribution is then to formally define pattern mining as a game and to solve it with Monte Carlo tree search (MCTS). It can be seen as an exhaustive search guided by random simulations which can be stopped early (limited budget) by virtue of its \textit{best-first search} property.
We show through a comprehensive set of experiments how MCTS enables the
anytime discovery of a diverse pattern set of high quality. It outperforms other approaches when dealing with a large pattern search space and for different quality measures.
Thanks to its genericity, our MCTS approach can be used for SD but also for many other pattern mining tasks.
\end{abstract}

\section{Introduction\label{sec:introduction}}
The discovery of patterns, or descriptions, which discriminate a group of objects given a target (class label) has been widely studied as overviewed by \cite{DBLP:journals/jmlr/NovakLW09}. 
Discovering such descriptive rules can be formalized as the so-called subgroup discovery task (SD introduced by  \cite{DBLP:conf/pkdd/Wrobel97}). 
Given a set of objects, each being associated to a description and  a class label, a subgroup is a description generalization whose discriminating ability is evaluated by a quality measure (F1-score, accuracy, etc). 
In the last two decades, different aspects of SD have been studied: The description and target languages (itemset, sequences, graphs on one side, quantitative and qualitative targets on the other), the algorithms that enable the discovery of the best subgroups, and  the definition of measures that express  pattern interestingness. 
These directions of work are closely related and many of the pioneer approaches were \emph{ad hoc} solutions lacking from easy implementable generalizations (see for examples the surveys of  \cite{DBLP:journals/jmlr/NovakLW09} and \cite{DBLP:journals/datamine/DuivesteijnFK16}). 
SD still faces two important challenges: 
First, how to characterize the interest of a pattern? Secondly, how to design an accurate heuristic search technique when exhaustive enumeration of the pattern space is unfeasible?

\cite{DBLP:conf/pkdd/LemanFK08} introduced a more general framework than SD called exceptional model mining (EMM). It tackles the first issue. 
EMM aims to find patterns that cover tuples that locally induce a model that substantially differs from the model of the whole dataset, this difference being measured with a quality measure. This rich framework extends the classical SD settings to multi-labeled data and it leads to a large class of models, quality measures, and applications [\cite{DBLP:journals/datamine/LeeuwenK12,DBLP:journals/datamine/DuivesteijnFK16,KaytoueEtAl/Mach16}].
In a similar fashion to other pattern mining approaches, SD and EMM have to perform a heuristic search when exhaustive search fails. The most widely used techniques are \textit{beam search} [\cite{DBLP:journals/datamine/LeeuwenK12,DBLP:conf/sdm/MeengDK14}], \textit{genetic algorithms} [\cite{DBLP:journals/tfs/JesusGHM07,Lucas2017}], and \textit{pattern sampling} [\cite{DBLP:conf/ida/MoensB14,BendimeradEtAl/ICDM2016}]. 
	
The main goal of these heuristics is to drive the search towards the most interesting parts, i.e., the regions of the search space where patterns maximize a given quality measure. However, it often happens that the best patterns are \textit{redundant}: 
They tend to represent the same description, almost the same set of objects, and consequently slightly differ on their pattern quality measures. Several solutions have been proposed to filter out redundant subgroups,  e.g. as did \cite{DBLP:journals/kais/BringmannZ09,DBLP:journals/datamine/LeeuwenK12,DBLP:conf/sdm/MeengDK14,DBLP:conf/dis/BoscGBRPBK16}.
Basically, a neighboring function enables to keep only local optima. 
However, one may end up with a pattern set of small cardinality: This is the problem of \textit{diversity}, that is, many local optima have been missed. 

Let us illustrate this problem on Figure \ref{fig:exploration-methods}. 
The search space of patterns, which can be represented as a lattice, hides several local optima (patterns maximizing a pattern quality measure in a neighborhood). Figure \ref{fig:exploration-methods}(a) presents such optima with red dots, surrounded with redundant patterns in their neighborhood. 
Given the minimal number of objects a pattern must cover, exhaustive search algorithms, such as SD-Map [\cite{DBLP:conf/pkdd/AtzmullerP06,DBLP:conf/ismis/AtzmullerL09}], are able to traverse this search space efficiently: The monotonocity of the minimum support and upper bounds on some quality measures such as the \textit{weighted relative accuracy} ($WRAcc$) enable efficient and safe pruning of the search space. However, when the search space of patterns becomes tremendously large, either the number of patterns explodes or the search is intractable. Figure \ref{fig:exploration-methods}(b) presents beam-search, probably the most popular technique within the SD and EMM recent literature. It operates a top-down level-wise greedy exploration of the patterns with a controlled level width that penalizes diversity (although several enhancements to favor diversity have been devised [\cite{DBLP:journals/datamine/LeeuwenK12,DBLP:conf/pkdd/LeeuwenU13,DBLP:conf/sdm/MeengDK14}]). %
Genetic algorithms have been proposed as well [\cite{DBLP:journals/isci/RodriguezRRA12,DBLP:conf/hais/PachonVDL11,DBLP:journals/tfs/CarmonaGJH10}]. They give however no guarantees that all local optima will be found and they have been designed for specific pattern languages and quality measures [\cite{Lucas2017}]. Finally, pattern sampling  is attractive as it enables direct interactions with the user for using his/her preferences to drive the search [\cite{DBLP:conf/kdd/BoleyLPG11,DBLP:conf/ida/MoensB14}]. Besides, with sampling methods, a result is available anytime. However, traditional sampling methods used for pattern mining need a given probability distribution over the pattern space which depends on both the data and the measure and may be costly to compute [\cite{DBLP:conf/kdd/BoleyLPG11,DBLP:conf/ida/MoensB14}].  Each iteration is independent and draws a pattern given this probability distribution (Figure \ref{fig:exploration-methods}(c)).

In this article, we propose to support subgroup discovery with a novel search method, Monte Carlo tree search (MCTS). It has been mainly used in AI for domains such as games and planning problems, that can be represented as trees of sequential decisions [\cite{DBLP:journals/tciaig/BrownePWLCRTPSC12}]. It has been popularized as definitively successful for the game of Go in \cite{DBLP:journals/nature/SilverHMGSDSAPL16}. MCTS explores a search space by building a game tree in an incremental and asymmetric manner: The tree construction is driven by random simulations and an exploration/exploitation trade-off provided by the so called upper confidence bounds (UCB) [\cite{DBLP:conf/ecml/KocsisS06}]. The construction can be stopped anytime, e.g., when a maximal budget is reached.
As illustrated on Figure \ref{fig:exploration-methods}(d), our intuition for pattern mining is that MCTS searches for some local optima, and once found, the search can be redirected towards other local optima. This principle enables \textit{per se} a diversity of the result set: Several high quality patterns covering different parts of the data set  can be extracted. More importantly, the power of random search leads to \emph{anytime mining}: A solution is always available, it improves with time and it converges to the optimal one if given enough time and memory budget. This is  a \textit{best-first search}. Given a reasonable time and memory budget, MCTS quickly drives the search towards a diverse pattern set of high quality. Interestingly, it can consider, in theory, any pattern quality measure and pattern language (in contrast to current sampling techniques  as developped by \cite{DBLP:conf/kdd/BoleyLPG11,DBLP:conf/ida/MoensB14}).

Our main contribution is to a complete characterization of MCTS for subgroup discovery and pattern mining in general. Revisiting MCTS in such a setting is not simple and the definition of it requires  smart new policies. We show through an extensive set of experiments that MCTS is a compelling  solution for a pattern mining task and that it outperforms the state-of-the-art approaches (exhaustive search, beam search, genetic algorithm, pattern sampling) when dealing with large search space of numerical and nominal attributes and for different quality measures.

The rest of this article is organized as follows. Section \ref{Sec:Def} formally introduces the pattern set discovery problem. 
Section \ref{Sec:Mcts} then recalls the basic definitions of MCTS. 
We present our MCTS method, called \algo{}, in Section \ref{Sec:Method}.
After discussing the related work in Section \ref{sec:RW}, we report on experiments for understanding how to configure a MCTS for pattern mining (Section \ref{sec:quanti-exp}) and how does MCTS compare to competitors (Section \ref{sec:xp-comparisons}).

{\begin{figure}[t!] 
	\subfloat[Redundancy problem\label{fig:exploration-exhaustive}.]{\includegraphics[width=.49\columnwidth]{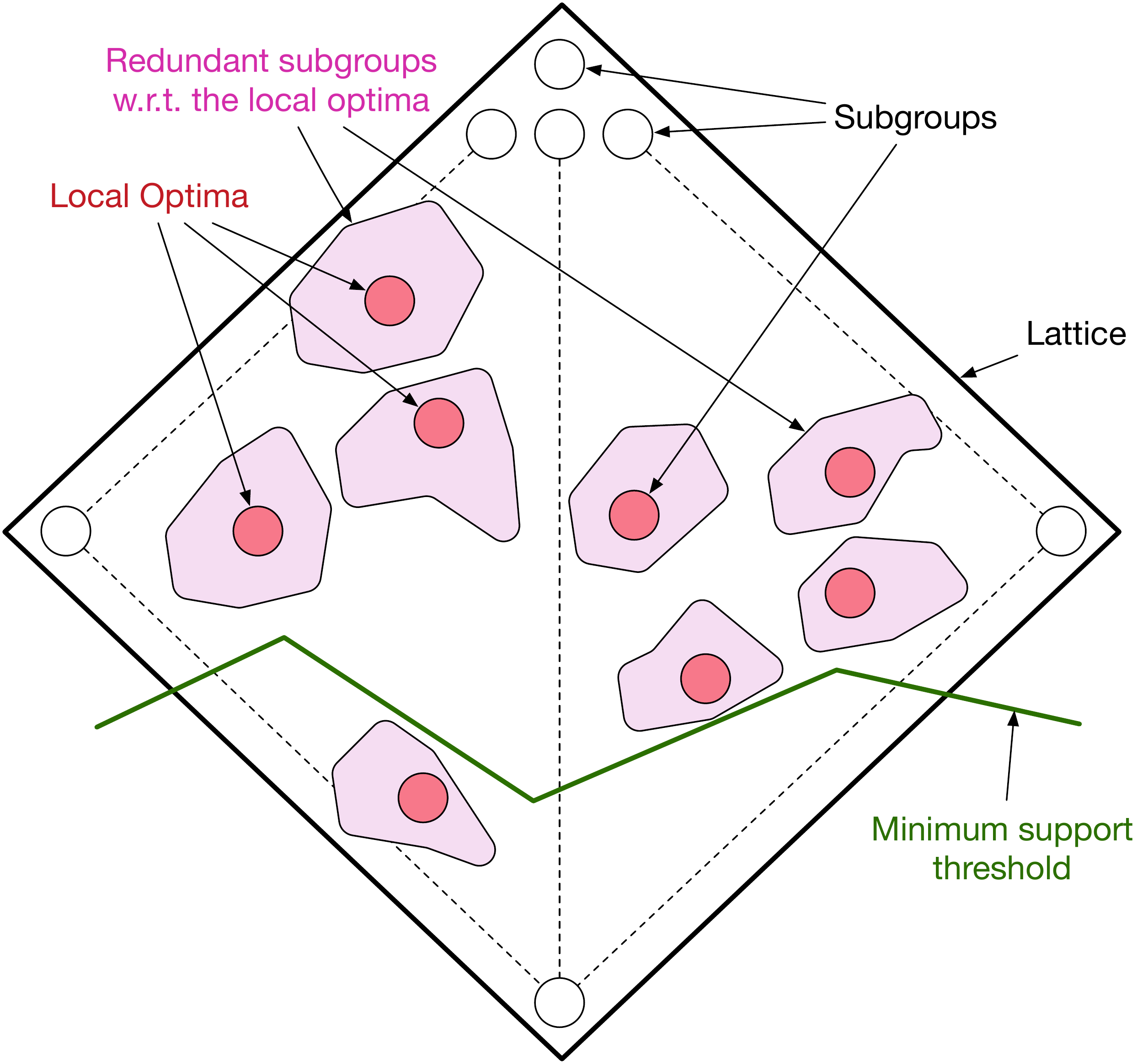}}\centering	
	\subfloat[Beam search\label{fig:exploration-beamsearch}.]{\includegraphics[width=.49\columnwidth]{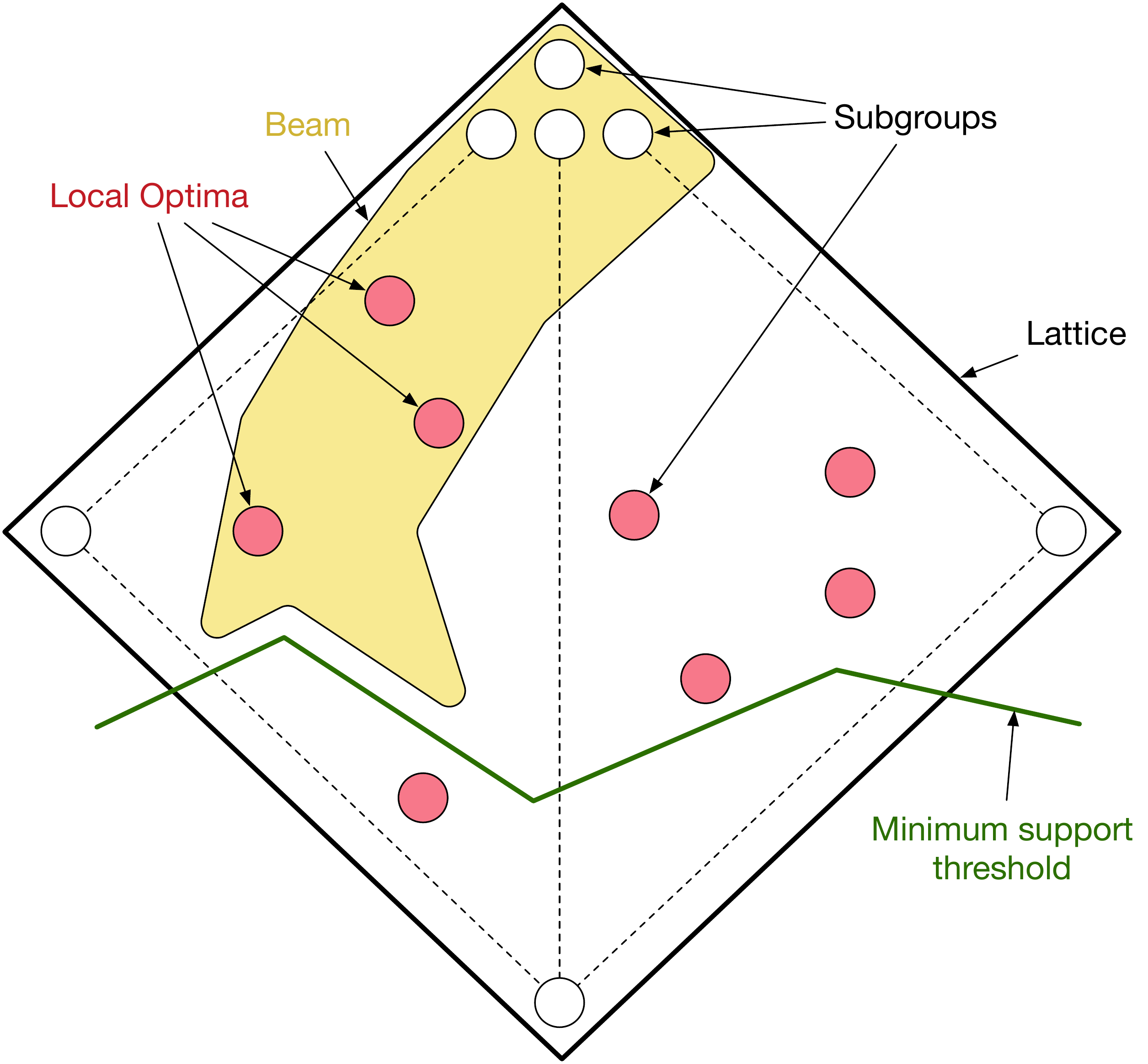}}\centering \\
	\subfloat[Sampling exploration\label{fig:exploration-sampling}.]{\includegraphics[width=.49\columnwidth]{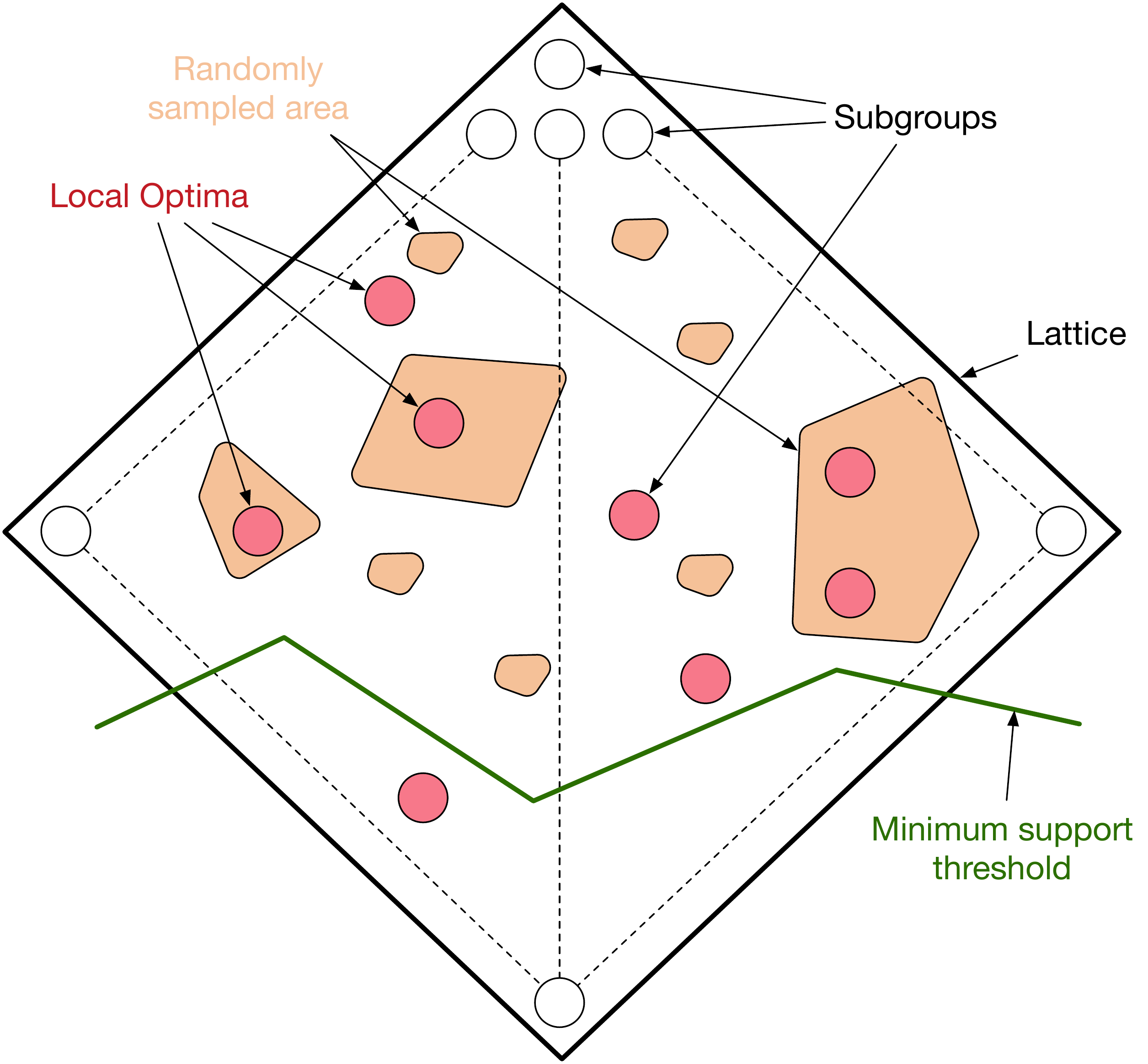}}\centering	
	\subfloat[MCTS-based exploration\label{fig:exploration-mcts}.]{\includegraphics[width=.49\columnwidth]{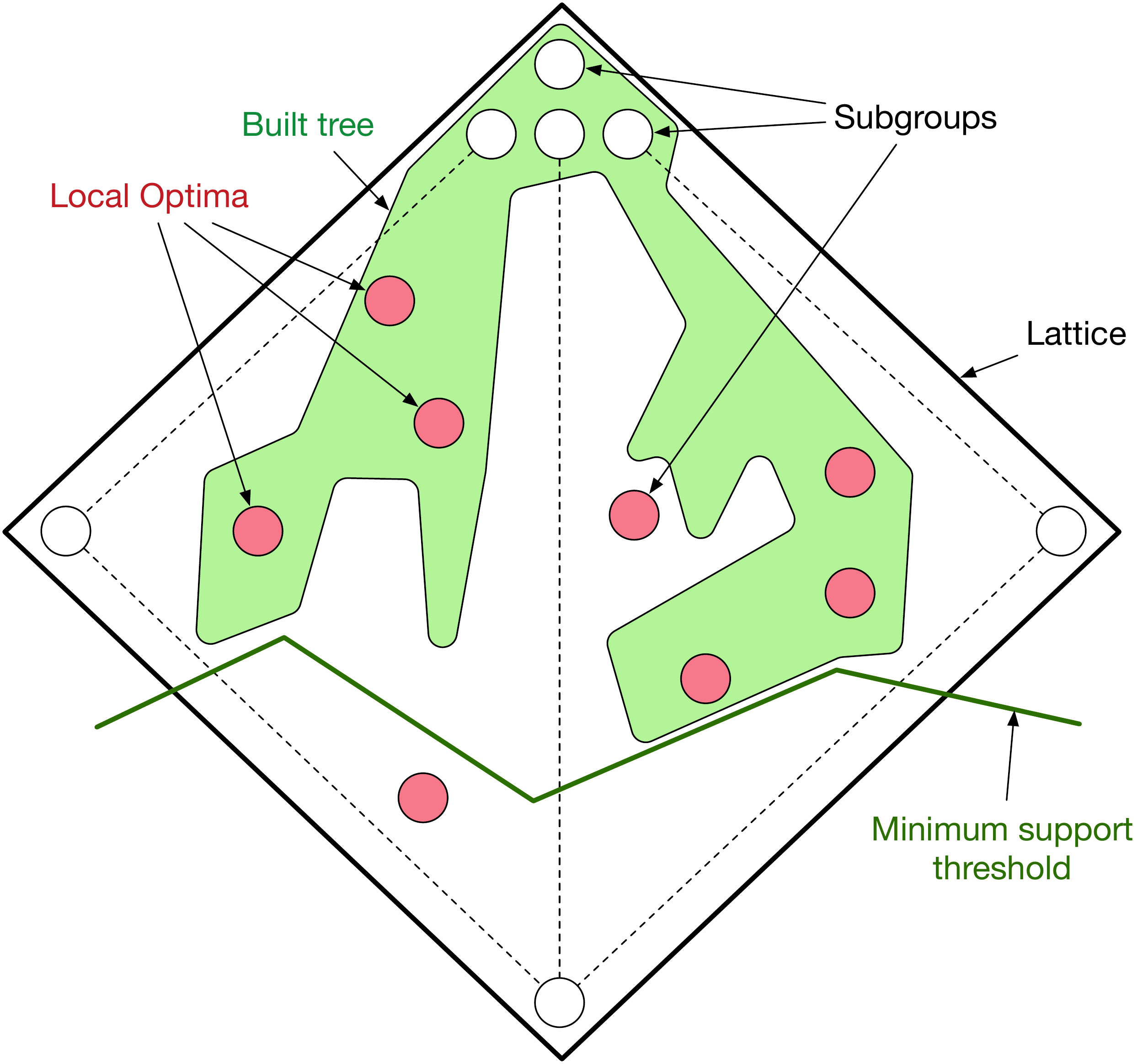}}\centering
	\caption{Illustration of different SD search algorithms.\label{fig:exploration-methods}}
\end{figure}
}

\section{Pattern set discovery\label{Sec:Def}} 

There exists several formal pattern mining frameworks  and we choose here subgroup discovery to illustrate our purpose. We provide some basic definitions and then formally define pattern set discovery.

\begin{definition}[Dataset $\mathcal{D} (\mathcal{O}, \mathcal{A}, \mathcal{C}, \mathit{class})$] 
Let $\mathcal{O}$, $\mathcal{A}$ and $\mathcal{C}$ be respectively a set of objects, a set of attributes, and a set of class labels. The domain of an attribute $a \in \mathcal{A}$ is $Dom(a)$ where $a$ is either nominal or numerical. The mapping $\mathit{class} : \mathcal{O}\mapsto {\mathcal{C}}$ associates 
each object to a unique class label. 
\label{def:data}
\end{definition}
A subgroup can be represented either by a description (the pattern) or by its coverage, also called its extent. 

\begin{table}\centering
\caption{Toy dataset\label{tab:toyData}}
\begin{tabular}{|c||ccc||c|}
\hline
ID	&	$a$		&	$b$	&	$c$	&	$\mathit{class}(.)$	\\
\hline
1	&	150	&	21	&	11	&$l_1$	\\
2	&	128	&	29	&	9	&	$l_2$	\\
3	&	136	&	24	&	10	&	$l_2$	\\
4	&	152	&	23	&	11	&	$l_3$	\\
5	&	151	&	27	&	12	&	$l_2$	\\
6 &	142	&	27	&	10	&	$l_1$	\\
\hline
\end{tabular}
\end{table}

\begin{definition}[Subgroup]
The description of a subgroup, also called pattern, is given by $d = \langle f_1, \dots, f_{|\mathcal{A}|} \rangle$ where each $f_i$ is a restriction on the value domain of the attribute $a_i \in \mathcal{A}$. 
A restriction for a nominal attribute $a_i$ is a symbol $a_i=v$ with $v \in Dom(a_i)$.
A restriction for a numerical\footnote{We consider the finite set of all intervals from the data, without greedy discretization. As shown later, better patterns can be found in that case, when using only MCTS on large datasets.} attribute $a_i$ is an interval $[l,r]$ with $l,r \in Dom(a_i)$.
The description $d$ covers a set of objects called the extent of the subgroup, denoted $ext(d) \subseteq \mathcal{O}$. The support of a subgroup is the cardinality  of its extent: $supp(d) = |ext(d)|$. %
\label{def:subgroup}
\end{definition}

The subgroup search space is structured as a lattice.
\begin{definition}[Subgroup search space]
The set of all subgroups forms a lattice, denoted as the poset $(\mathcal{S},\preceq)$. The top is the most general pattern, without restriction. Given any $s_1, s_2 \in \mathcal{S}$, we note $s_1 \prec s_2$ to denote that $s_1$ is strictly more specific, i.e. it contains more stringent restrictions.
\end{definition}
If follows that $ext(s_1) \subseteq ext(s_2)$ when $s_1 \preceq s_2$.

The ability of a subgroup to discriminate a class label is evaluated by means of a quality measure.
The weighted relative accuracy (WRAcc), intoduced by \cite{DBLP:conf/ilp/LavracFZ99}, is among the most popular measures for rule learning and subgroup discovery. Basically, WRAcc considers the precision of the subgroup w.r.t. to a class label relatively to the appearance probability of the label in the whole dataset. This difference is weighted with the support of the subgroup to avoid to consider small ones as  interesting.

\begin{definition}[WRAcc] Given a dataset $\mathcal{D} (\mathcal{O}, \mathcal{A}, \mathcal{C}, \mathit{class})$, the WRAcc of a subgroup $d$ for a label $l \in Dom(\mathcal{C})$ is given by:
\[\textit{WRAcc}(d,l) = \frac{supp(d)}{|\mathcal{O}|} \times \left(p^l_d - p^l \right)\]
where ${p^{l}_d = \frac{|\{o \in ext(d) | class(o) = l\}|}{supp(d)}}$
 and  ${p^{l} = \frac{|\{o \in \mathcal{O} | class(o) = l\}|}{|\mathcal{O}|}}$.
\end{definition}
WRAcc returns values in $[-0.25, 0,25]$, the higher and positive, the better the pattern discriminates the class label. Many quality measures other than WRAcc have been introduced in the literature of rule learning and subgroup discovery (Gini index, entropy, F score, Jaccard coefficient, etc. [\cite{DBLP:conf/pkdd/AbudawoodF09}]). Exceptional model mining (EMM) considers multiple labels (label distribution difference in \cite{DBLP:journals/datamine/LeeuwenK12}, Bayesian model difference in \cite{DBLP:conf/icdm/DuivesteijnKFL10}, etc.). The choice of a pattern quality measure, denoted $\varphi$ in what follows, is generally application dependant as explained by \cite{jf:Book-Nada}. 

\begin{example}
Consider the dataset in Table~\ref{tab:toyData} with objects in $\mathcal{O} = \{1,...,6\}$ and attributes in $\mathcal{A} = \{a,b,c\}$. Each object is labeled with a class label from $\mathcal{C} = \{ l_1, l_2, l_3 \}$. 
Consider an arbitrary subgroup with description $d = \langle [128 \leq a \leq 151], [23 \leq b \leq 29] \rangle$. 
Note that, for readability, we omit restrictions satisfied by all objects, e.g., $[9 \leq c \leq 12]$, and thus we denote that $ext(\langle \rangle) = \mathcal{O}$. 
The extent of $d$ is composed of the objects in $ext(d) = \{2,3,5,6\}$ and we have 
${WRAcc(d,l_2) = \frac{4}{6} (\frac{3}{4} - \frac{1}{2}) = \frac{1}{6}}$.
The upper part of the search space (most general subgroups) is given in Figure~\ref{Fig:searchSpace}. The direct specializations of a subgroup are given, for each attribute, by adding a restriction: Either by shrinking the interval of values to the left (take the right next value in its domain) or to the right (take the left next value). In this way, the finite set of all intervals taking borders in the attributes domain will be explored (see \cite{DBLP:conf/ijcai/KaytoueKN11}).

\begin{figure}[tb!]
		\includegraphics[width=.95\columnwidth]{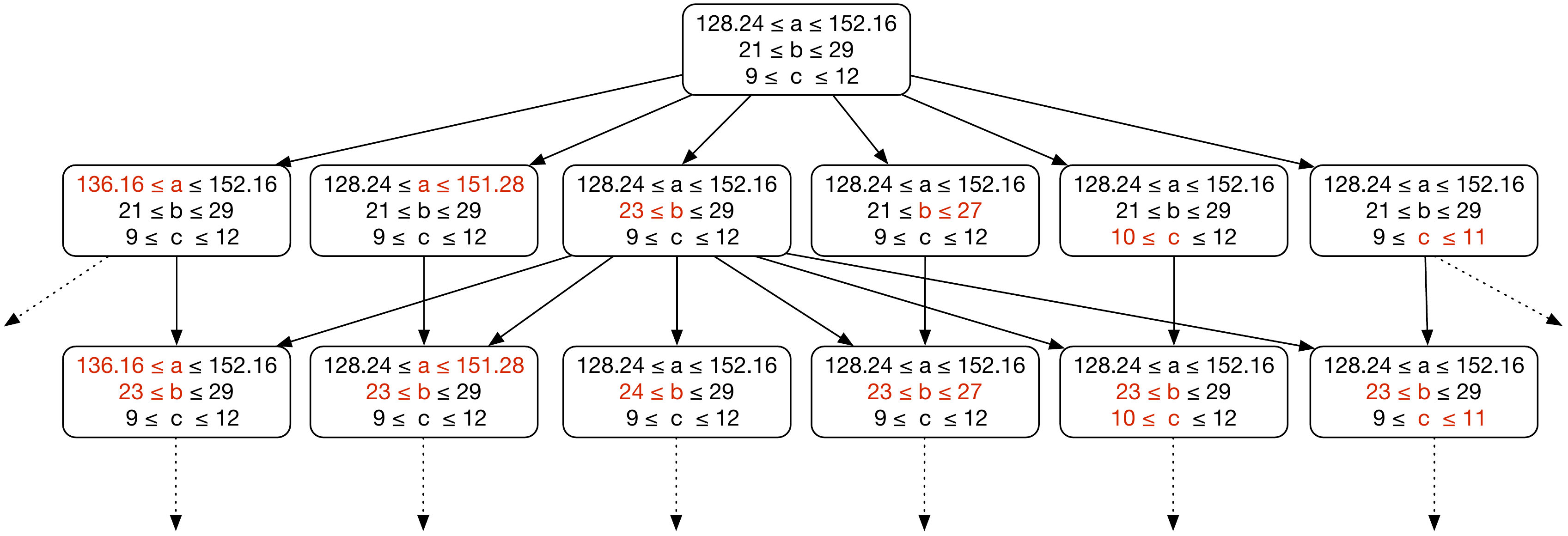}\centering	
		\caption{The upper part of the search space for Table~\ref{tab:toyData}.\label{Fig:searchSpace}}
\end{figure}
\end{example}

\textit{Pattern set discovery} consists in searching for a set of patterns $\mathcal{R} \subseteq \mathcal{S}$ of high quality on the quality measure $\varphi$ and whose patterns are not redundant. As similar patterns generally have similar values on $\varphi$, we design the pattern set discovery problem as the identification 
of the local optima w.r.t. $\varphi$. As explained below, this has two
main advantages: Redundant patterns of lower quality on $\varphi$ are pruned and the
extracted local optima are diverse and potentially interesting patterns.

\begin{definition}[Local optimum as a non redundant pattern]\label{def:redundancy}
  Let $sim : \mathcal{S}\times\mathcal{S}\rightarrow [0, 1]$ be a
  similarity measure on $\mathcal{S}$ that, given a real value
  $\Theta >0$, defines neighborhoods on $\mathcal{R}\subseteq
  \mathcal{S}: N_\mathcal{R}(x)=\{s\in\mathcal{R}\mid sim(x,s)\geq
  \Theta\}$.  $r^\star$ is a local optimum of $\mathcal{R}$ on
  $\varphi$ iff $\forall r\in N_\mathcal{R}(r^\star)$,
  $\varphi(r^\star)\geq \varphi(r)$. We denote by
  $filter(\mathcal{R})$ the set of local optima of $\mathcal{R}$ and
  by
  $redundancy(\mathcal{R})=1 - \frac{|filter(\mathcal{R})|}{|\mathcal{R}|}$
  the measure of redundancy of $\mathcal{R}$.
\end{definition}
In this paper, the similarity measure on $\mathcal{S}$ will be the Jaccard measure
defined by $$sim(r,r^\prime)=\frac{ext(r)\cap
  ext(r^\prime)}{ext(r)\cup ext(r^\prime)}$$
  
\medskip 
  
We propose to evaluate the diversity of a pattern set
$\mathcal{R}\subseteq \mathcal{S}$ by the sum of the quality of its
patterns.  Indeed, the objective is to obtain the largest set of high
quality patterns: 

\begin{definition}[Pattern set diversity]
  The diversity of a pattern set $\mathcal{R}$ is evaluated by:
  $diversity(\mathcal{R})=\sum_{r\in filter(\mathcal{R})}\varphi(r)$. \label{def:pattern-set-diversity}
\end{definition}
 
The function $filter()$ is generally defined in a greedy or heuristic way in the literature. \cite{DBLP:journals/datamine/LeeuwenK12} called  it pattern set selection and we use their implementation in this article. First all extracted patterns are sorted according to the quality measure and the best one is kept. The next patterns in the order are discarded if they are too similar with the best pattern (Jaccard similarity between pattern supports is used). Once a non similar pattern is found, it is kept for the final result and the process is reiterated: Following patterns will be compared to it. 
 
\begin{problem}[Pattern set discovery]
Compute a set of patterns $\mathcal{R}^* \subseteq \mathcal{S}$ such that $\forall r\in\mathcal{R}^*$, $r$ is a local optimum on $\varphi$ and $$\mathcal{R}^*=\mbox{argmax}_{\mathcal{R}\subseteq \mathcal{S}} diversity(\mathcal{R}).$$
\end{problem}
By construction, $\mathcal{R}^*$ maximizes diversity and it minimizes redundancy. 
Naturally, $\mathcal{R}^*$ is not unique. Existing approaches sometimes search for a pattern set of size $k$ [\cite{Lucas2017}],  with a minimum support threshold $minSupp$  [\cite{DBLP:conf/pkdd/AtzmullerP06}].

\section{Monte Carlo tree search} \label{Sec:Mcts}

MCTS is a search method used in several domains to find an optimal decision (see the survey by \cite{DBLP:journals/tciaig/BrownePWLCRTPSC12}).
It merges theoretical results from decision theory [\cite{DBLP:books/daglib/0023820}], game theory, Monte Carlo [\cite{DBLP:journals/pami/Abramson90}] and bandit-based methods [\cite{DBLP:journals/ml/AuerCF02}].
MCTS is a powerful search method because it enables the use of random simulations for characterizing a trade-off between the exploration of the search tree and the exploitation of an interesting solution, based on past observations. Considering a two-players game (e.g., Go): The goal of MCTS is to find the best action to play given a current game state.
MCTS proceeds in several (limited) iterations that build a partial game tree (called the search tree) depending on the results of previous iterations. 
The nodes represent game states. 
The root node is the current game state.
The children of a node are the game states accessible from this node by playing an available action. 
The leaves are the terminal game states (game win/loss/tie). 
Each iteration, consisting of 4 steps (see Figure~\ref{MCTS_Fig}), leads to the generation of a new node in the search tree (depending on the exploration/exploitation trade-off due to the past iterations) followed by a simulation (sequence of actions up to a terminal node).
Any node $s$ in the search tree is provided with two values: The number $N(s)$ of times it has been visited, and a value $Q(s)$ that corresponds to the aggregation of rewards of all simulations walked through $s$ so far (e.g., the proportion of wins obtained for all simulations walked through $s$). 
The aggregated reward of each node is updated through the iterations such that it becomes more and more accurate.
Once the computation budget is reached, MCTS returns the best move that leads to the child of the root node with the best aggregated reward $Q(.)$. 

In the following, we detail the 4 steps of a MCTS iteration applied to a game.
Algorithm~\ref{algoMCTS} gives the pseudo code of the most popular algorithm in the MCTS family, namely UCT (upper confidence bound for trees), as given in \cite{DBLP:conf/ecml/KocsisS06}.

\begin{algorithm}

\begin{algorithmic}[1]  
\Function{Mcts}{$\mathit{budget}$}
	\State create root node $s_0$ for current state
	\While{within computational budget $\mathit{budget}$}
		\State $s_{sel} \leftarrow$ \Call{Select}{$s_0$}
		\State $s_{exp} \leftarrow$ \Call{Expand}{$s_{sel}$}
		\State $\Delta \leftarrow$ \Call{RollOut}{$s_{exp}$}
		\State \Call{Update}{$s_{exp}, \Delta$}
	\EndWhile
	\State \Return the action that reaches the child $s$ of $s_0$ with the highest $Q(s)$
\EndFunction
\Statex
\Function{Select}{$s$}
	\While{$s$ is non-terminal}
		\If{$s$ is not fully expanded}  {~} \Return $s$
		\Else {~} $s \leftarrow$ \Call{BestChild}{$s$}
		\EndIf
	\EndWhile
	\State \Return $s$	
\EndFunction
\Statex
\Function{Expand}{$s_{sel}$}
	\State randomly choose $s_{exp}$ from non expanded children of $s_{sel}$
	\State add new child $s_{exp}$ to $s_{sel}$
	\State \Return $s_{exp}$	
\EndFunction
\Statex
\Function{RollOut}{$s$}
	\State $\Delta \leftarrow 0$
	\While{$s$ is non-terminal}
		\State choose randomly a child $s'$ of $s$
		\State $s \leftarrow s'$
	\EndWhile
	\State \Return the reward of the terminal state $s$
\EndFunction
\Statex
\Function{Update}{$s, \Delta$}
	\While{$s$ is not null}
		\State $Q(s) \leftarrow \frac{N(s) \times Q(s) + \Delta}{N(s) + 1}$
		\State $N(s) \leftarrow N(s) + 1$
		\State $s \leftarrow$ parent of $s$
	\EndWhile
\EndFunction
\Statex
\Function{BestChild}{$s$}
	\State \Return $\argmax\limits_{s' \in\ children\ of\ s} UCB(s,s')$
\EndFunction
\end{algorithmic}
\center{\textsc{UCT}:  The popular MCTS algorithm.\label{algoMCTS}}
\end{algorithm}

\begin{figure}[b!] \centering
\includegraphics[width=0.95\columnwidth]{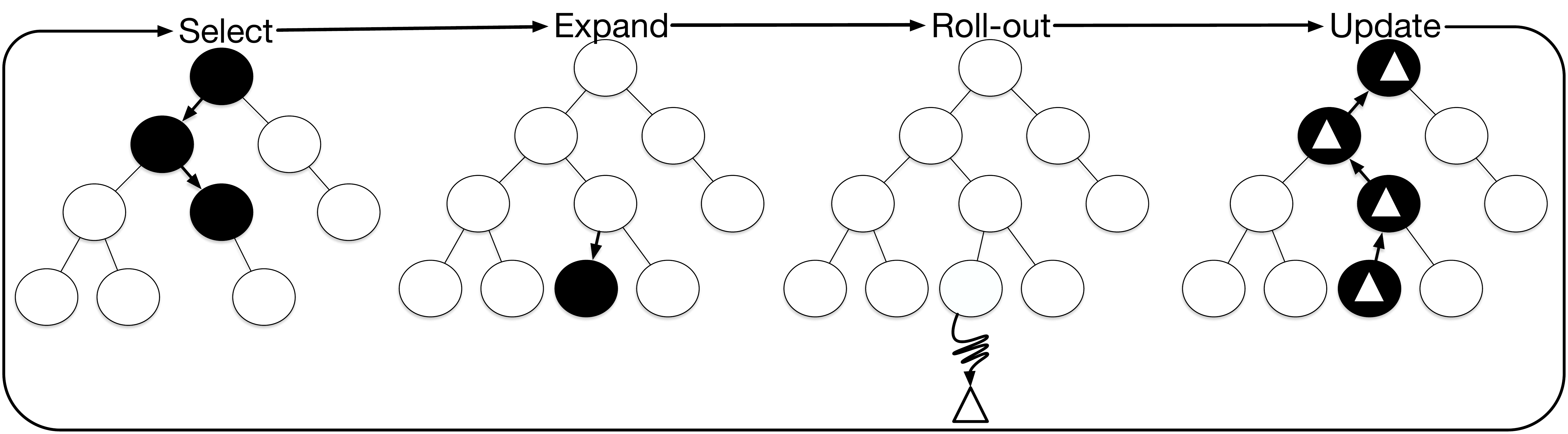}
\caption{One MCTS iteration (taken from~\cite{DBLP:journals/tciaig/BrownePWLCRTPSC12}). \label{MCTS_Fig}}
\end{figure}

\smallskip\noindent\textbf{The \textsc{Select} policy. } 
Starting from the root node, the \textsc{Select} method recursively selects an action (an edge) until the selected node is either a terminal game state or is not fully expanded (i.e., some children of this node are not yet expanded in the search tree). 
The selection of a child of a node $s$ is based on the exploration/exploitation trade-off. For that, upper confidence bounds (UCB) are used. They bound the regret of choosing a non-optimal child.
The original UCBs used in MCTS are the UCB1 from \cite{DBLP:journals/ml/AuerCF02} and the UCT from \cite{DBLP:conf/ecml/KocsisS06}:
\[UCT(s,s') = Q(s') + 2C_p\sqrt{\frac{2 \ln N(s)}{N(s')}}\]
where $s'$ is a child of a node $s$ and $C_p > 0$ is a constant (generally, $C_p = \frac{1}{\sqrt{2}}$). 
This step selects the most urgent node to be expanded, called $s_{sel}$ in the following, considering both the exploitation of interesting actions (given by the first term in UCT) and the exploration of lightly explored areas of the search space (given by the second term in UCT) based on the result of past iterations. The constant $C_p$ can be adjusted to lower or increase the exploration weight
in the exploration/exploitation trade-off . Note that when $C_p = \frac{1}{2}$, the UCT is called UCB1.

\smallskip\noindent\textbf{The \textsc{Expand} policy.} 
A new child, denoted $s_{exp}$, of the selected node $s_{sel}$ is added to the tree according to the available actions. 
The child $s_{exp}$ is randomly picked among all available children of $s_{sel}$ not yet expanded in the search tree. 

\smallskip\noindent\textbf{The \textsc{RollOut} policy.} 
From this expanded node $s_{exp}$, a simulation is played based on a specific policy. This simulation consists of exploring the search tree (playing a sequence of actions) from $s_{exp}$ until a terminal state is reached. It returns the reward $\Delta$ of this terminal state: $\Delta = 1$ if the terminal state is a win, $\Delta = 0$ otherwise.

\smallskip\noindent\textbf{The \textsc{Update} policy.} 
The reward $\Delta$ is back-propagated to the root, updating for each parent the number of visits $N(.)$ (incremented by 1) and the aggregation reward $Q(.)$ (the new proportion of wins).

\medskip\noindent\textit{Example. } Figure~\ref{MCTS_Fig} depicts a MCTS iteration. Each node has no more than 2 children. In this scenario, the search tree is already expanded:  We consider the $9^{th}$ iteration since $8$ nodes of the tree have been already added. The first step consists in running the \textsc{Select} method starting from the root node. Based on a UCB, the \emph{selection policy} chooses the left child of the root. As this node is fully expanded, the algorithm randomly selects a new node among the children of this node: Its right child. This selected node $s_{sel}$ is not fully expanded since its left hand side child is not in the search tree yet. From this not fully expanded node $s_{sel}$, the \textsc{Expand} method adds the left hand side child $s_{exp}$ of the selected node $s_{sel}$ to expand the search tree. From this added node $s_{exp}$, a random simulation is rolled out until reaching a terminal state. The reward $\Delta$ of the terminal node is  back-propagated with \textsc{Update}.

\section{Pattern set discovery with MCTS} 
\label{Sec:Method}

Designing a MCTS approach for a pattern mining problem is different than for a combinatorial game:
The goal is not to decide, at each turn, what is the best action to play, but to explore the search space: The pattern mining problem can thus be considered as a \textit{single-turn single-player game}.
Most importantly, MCTS offers a natural way to explore the search space of patterns with the benefit of the exploitation/exploration trade-off to improve diversity while limiting redundancy. For example, an exhaustive search will maximize diversity, but it will return a very large and redundant collection (but an exhaustive search is usually impossible). In contrast, a beam search can extract a limited number of patterns but it will certainly lack diversity (empirical evidences are given later in Section \ref{sec:xp-comparisons}).

\begin{figure}
\includegraphics[width=\textwidth]{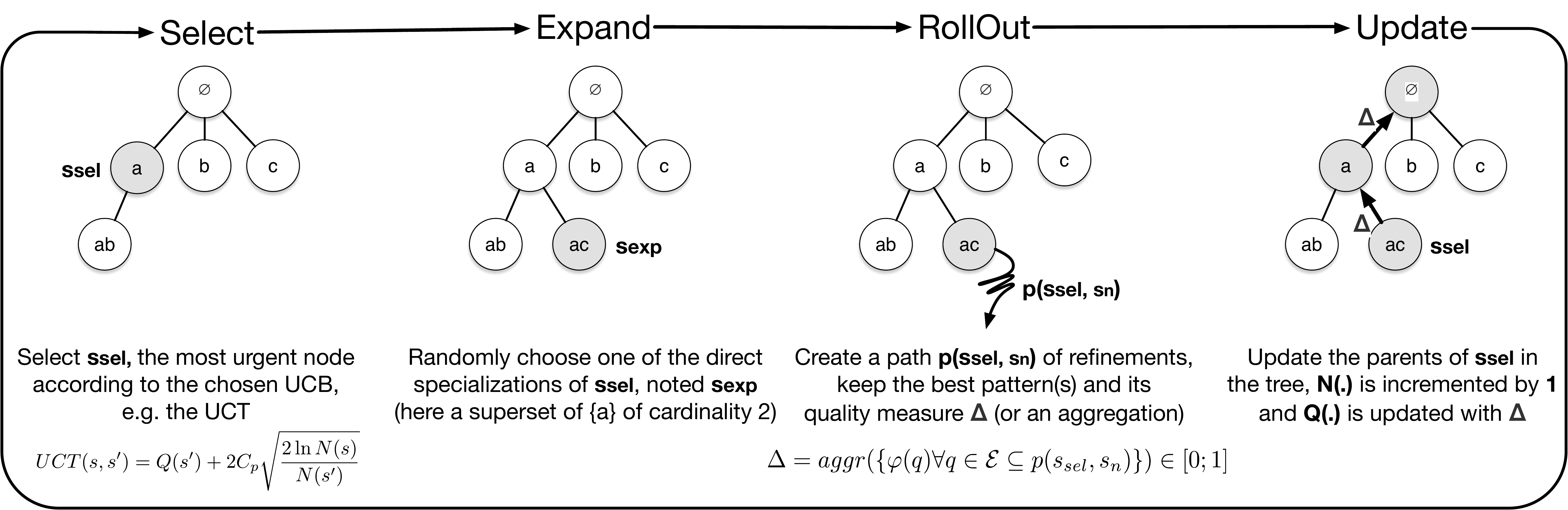}
\caption{A simple instanciation of MCTS for pattern mining.  \label{fig:mcts-itemset}}
\end{figure}

Before going into the formalization, let us illustrate how MCTS is applied to the pattern set discovery problem with Figure \ref{fig:mcts-itemset}.
We consider here itemset patterns for the sake of simplicity, that is, subgroups whose descriptions are sets of items.
We present an iteration of a MCTS for a transaction database with items $\mathcal{I} = \{a, b, c\}$.
The pattern search space is given by the lattice $\mathcal{S} = (2^\mathcal{I}, \subseteq)$.
The MCTS tree is built in a top-down fashion on this theoretical search space: 
The initial pattern, or root of the tree, is the empty set $\emptyset$.
Assume that pattern $s_{sel} = \{a\}$ has been chosen by the \textit{select} policy. 
During the \textit{expand}, one of its direct specializations in $\{ \{a,b\}, \{a,c\}\}$
is randomly chosen and added to the tree,e.g., $s_{exp} = \{a,c\}$.
During the roll out, a simulation is run from this node: 
it generates a chain  of specializations of $s_{exp}$ called a path $p(s_{sel}, s_n)$
 ({a chain is a set of comparable patterns w.r.t. $\subseteq$, or $\preceq$ in the general case}).
The quality measure $\varphi$ is computed for each pattern
of the path, and an aggregated value (max, mean, etc.) is returned and called $\Delta$.
Finally, all parents of $s_{exp}$ are updated: Their visit count $N(.)$ is incremented by one, 
while their quality estimation $Q(.)$ is recomputed with $\Delta$ (back propagation).
The new values of $N(.)$ and $Q(.)$ will directly impact the selection of the next iteration
when computing the chosen UCB, and thus the desired exploration/exploitation trade off.
When the budget is exceeded (or if the tree is fully expanded), all patterns are filtered with a chosen pattern set selection strategy ($filter(.)$).

The expected shape of the MCTS tree after a high number of iterations is illustrated in Figure~\ref{fig:exploration-mcts}. It suggests  a high diversity of the final pattern set if given enough budget (i.e., enough iterations). 
However, how to properly define each policy (select, expand, roll out and update), is not obvious. Table \ref{tab:allpolicies} sums up the different policies that we use or develop specifically for a pattern mining problem. 

\begin{table} \centering  \scriptsize
\begin{tabular}{c}
\hline {\textsc{\textbf{Select}}} \\
\hline
Choose one of the following UCB:\\
{  \textbf{UCB1}	or \textbf{UCB1-Tuned} or \textbf{SP-MCTS} or \textbf{UCT} } \\
\hline{\textsc{\textbf{Expand}}} \\
\hline
\textbf{direct-expand}: Randomly choose the next direct expansion  \\
\textbf{gen-expand}:   Randomly choose the next direct expansion until it changes the extent \\
\textbf{label-expand}: Randomly choose the next direct expansion until it changes the true positives \\
\textbf{Activate LO}: Generate each pattern only once (lectic enumeration)\\
\textbf{Activate PU}: Patterns with the same support/true positive set point to the same node\\
\hline  {\textsc{{\textbf{RollOut}}}} \\
\hline
\textbf{naive-roll-out}: Generate a random path of direct specializations of random length.\\
\textbf{direct-freq-roll-out}: Generate a random path of frequent direct specializations.\\
\textbf{large-freq-roll-out}: Generate a random paths of undirect specializations (random jumps).\\
\hline{\textsc{{\textbf{Memory}}}} \\
\hline
\textbf{no-memory}: No pattern found during the simulation is kept for the final result.\\
\textbf{top-k-memory}: Top-k patterns of a simulation are considered in memory.\\
\textbf{all-memory}: All patterns generated during the simulation are kept.\\
\hline{\textsc{{\textbf{Update}}}} \\
\hline
\textbf{max-update}: Only the maximum $\varphi$ found in a simulation is back propagated \\
\textbf{mean-update}: The average of all $\varphi$ is back-propagated\\
\textbf{top-k-mean-update}: The average of the best $k$ $\varphi$ is back-propagated\\
\hline
\end{tabular}
\caption{The different policies\label{tab:allpolicies}}
\end{table}

\subsection{The \textsc{Select} method}
The \textsc{Select} method has to select the most promising node $s_{sel}$ in terms of the exploration vs. exploitation trade-off.
For that, the well-known bounds like UCT or UCB1 can be used.
However, more sophisticated bounds have been designed for single player games.
The single-player MCTS (SP-MCTS), introduced by \cite{DBLP:conf/cg/SchaddWHCU08}, 
adds a third term to the UCB to take into account the variance $\sigma^2$ of the rewards obtained by the child so far.
SP-MCTS of a child $s'$ of a node $s$ is:
\[ \textit{SP-MCTS}(s,s') = Q(s') + C\sqrt{\frac{2 \ln N(s)}{N(s')}} + \sqrt{\sigma^2(s') + \frac{D}{N(s')}} \]
where the constant $C$ is used to weight the exploration term (it is fixed to $0.5$ in its original definition) and the term $\frac{D}{N(s')}$ inflates the standard deviation for infrequently visited children ($D$ is also a constant). 
In this way, the reward of a node rarely visited is considered as less certain: 
It is still required to explore it to get a more precise estimate of its variance. 
If the variance is still high, it means that the subspace from this node is not homogeneous w.r.t. the quality measure and further exploration is needed. 

Also, \cite{DBLP:journals/ml/AuerCF02}  designed  \emph{UCB1-Tuned} to reduce the impact of the exploration term of the original UCB1 by weighting it with either an approximation of the variance of the rewards obtained so far or the factor $1/4$. 
UCB1-Tuned of a child $s'$ of $s$ is:
\[ \textit{UCB1-Tuned}(s,s') = Q(s') + \sqrt{\frac{\ln N(s)}{N(s')} \min{(\frac{1}{4},\sigma^2(s') + \sqrt{\frac{2 \ln{N(s)}}{N(s')}})}}\]

The only requirement the pattern quality measure $\varphi$ must satisfy is, in case of UCT only, 
to take values in $[0,1]$: $\varphi$ can be normalized in this case.

\subsection{The \textsc{Expand} method}
The \textsc{Expand} step consists in adding a pattern specialization as a new node in the search tree.
In the following, we present different refinement operators, and how to avoid duplicate nodes in the search tree.

\subsubsection{The refinement operators}
A simple way to expand the selected node $s_{sel}$ is to choose uniformly an available attribute w.r.t. $s_{sel}$, that is to specialize $s_{sel}$ into $s_{exp}$ such that $s_{exp} \prec s_{sel}$: $s_{exp}$ is a refinement of $s_{sel}$. It follows that $ext(s_{exp}) \subseteq ext(s_{sel})$, and obviously  $supp(s_{exp}) \leq supp(s_{sel})$, known as the monotonocity property of the support.

\begin{definition}[Refinement operator]
A refinement operator is a function $\mathit{ref} : \mathcal{S} \rightarrow 2^\mathcal{S}$ that derives from a pattern $s$ a set of more specific patterns $\mathit{ref}(s)$ such that:
\begin{itemize}
	\item[](i) $\forall s' \in \mathit{ref}(s), s' \prec s$
	\item[](ii)$\forall s'_i, s'_j \in \mathit{ref}(s), i \neq j, s'_i \npreceq s'_j, s'_j \npreceq s'_i$
\end{itemize}
\label{def:undirec-spe}
\end{definition}

In other words, a refinement operator gives to any pattern $s$ a set of its specializations, that are pairwise incomparable (an anti-chain).
The \textit{refine} operation can be implemented in various ways given the kind of patterns we are dealing with. 
Most importantly, it can return all the direct specializations only to ensure that the exploration will, if given enough budget, explore the whole search space of patterns. Furthermore, it is unnecessary to generate infrequent patterns.

\begin{definition}[Direct-refinement operator]
A direct refinement operator is a refinement operator $\mathit{directRef} : \mathcal{S} \rightarrow 2^\mathcal{S}$ that derives from a pattern $s$ the set of direct more specific patterns $s'$ such that:
\begin{itemize}
	\item[](i) $\forall s' \in \mathit{directRef}(s)$, $s' \prec s$ 
	\item[](ii) $\not \exists s'' \in \mathcal{S}$ s.t. $ s' \prec s'' \prec s$
	\item[](iii) For any $s' \in \mathit{directRef}(s)$, $s'$ is frequent, that is $supp(s') \geq minSupp$
\end{itemize}
\label{def:direct-ref}
\end{definition}

The notion of direct refinement is well known in pattern mining. For instance, the only way to refine a nominal (resp. Boolean) attribute is to assign it a value of its domain (resp. the \textit{true} value). Refining an itemset consists in adding a item, while refining a numerical attribute can be done in two ways: Applying the minimal left change (resp. right change), that is, increasing the lower bound of the interval to the next higher value in its domain (resp. decreasing the upper bound to the next lower) as explained by \cite{DBLP:conf/ijcai/KaytoueKN11}. We still use the term \textit{restriction} to denote the operations that create a direct refinement of pattern.

\begin{definition}[The \textit{direct-expand} strategy]
We define the \textit{direct-expand} strategy as follows: From the selected node $s_{sel}$, we randomly pick a -- not yet expanded -- node $s_{exp}$ from $directRef(s_{sel})$ and add it in the search tree.
\end{definition}

As most quality measures $\varphi$ used in SD and EMM are solely based on the extent of the patterns,
considering only one pattern among all those having the same extent is enough.
However, with the \textit{direct-refinement operator}, a
large number of tree nodes may have the same extent  as their parent. 
This redundancy may bias the exploration and more iterations will be required. 
For that, we propose to use the notion of closed patterns and their generators.

\begin{definition}[Closed descriptions and their generators]
The equivalence class of a pattern $s$ is given by $[s] = \{ s' \in \mathcal{S}~|~ext(s)=ext(s') \}$.
Each equivalence class has a unique smallest element w.r.t. $\prec$ that is called the closed pattern: $s$ is said to be closed iff $\not \exists s'$ such that $s' \prec s$ and $ext(s) = ext(s')$. The non-closed patterns are called generators. 
\end{definition}

\begin{definition}[Generator-refinement operator]
A generator refinement operator is a refinement operator $\mathit{genRef} : \mathcal{S} \rightarrow 2^\mathcal{S}$ that derives from a pattern $s$ the set of more specific patterns $s'$ such that,
 $\forall s' \in \mathit{genRef}(s)$:
\begin{itemize}
	\item[](i) $s' \not\in [s]$  (different support)
	\item[](ii) $\not \exists s'' \in \mathcal{S}\backslash \mathit{genRef}(s)$ s.t. 
		$s'' \not\in [s]$, $s'' \not\in [s']$, $s' \prec s'' \prec s$ (direct next equivalence class)
	\item[](iii) $s'$ is frequent, that is $supp(s') \geq minSupp$ (frequent)
\end{itemize}
\label{def:gen-ref}
\end{definition}

\begin{definition}[The \textit{gen-expand} strategy]
To avoid the exploration of patterns with the same extent in a branch of the tree, we define the \textit{min-gen-expand} strategy as follows: From the selected node $s_{sel}$, we randomly pick a -- not yet expanded -- refined pattern from $\mathit{genRef}(s_{sel})$, called $s_{exp}$, and add it to the search tree.
\end{definition}

Finally, when facing a SD problem whose aim is to characterize a label $l \in \mathcal{C}$ we can adapt the previous refinement operator based on generators on the extents of both the subgroup and the label.
As many other measures, the WRAcc seeks to optimize the (weighted relative) precision or accuracy of the subgroup.
The accuracy is the ratio of true positives in the extent. We propose thus, for this kind of measures only,
the  \textbf{\textit{label-expand}} strategy: Basically, the pattern is refined until the set of true positives in the extent changes. This minor improvement performs very well in practice (see Section \ref{sec:quanti-exp}).

\subsubsection{Avoiding duplicates in the search tree}
We define several refinement operators to avoid the redundancy within a branch of the tree, i.e., do not expand $s_{sel}$ with a pattern whose extent is the same because the quality measure $\varphi$ will be equal.
However, another redundancy issue remains at the tree scale. 
Indeed, since the pattern search space is a lattice, a pattern can be generated in nodes from different branches of the Monte Carlo tree, that is, with different sequences of refinements, or simply permutations of refinements. 
As such, it will happen that a part of the search space is sampled several times in different branches of the tree. However, the visit count $N(s)$ of a node $s$ will not count visits of other nodes that denote exactly the same pattern: The UCB is clearly biased.  To tackle this aspect, we implement two methods: (i) Using a lectic order or (ii) detecting and unifying the duplicates within the tree. These two solutions can be used for any refinement operator. Note that enabling both these solutions at the same tame is useless since each of them ensures to avoid duplicates within the tree.

\medskip\noindent\textbf{Avoiding duplicates in the tree using a lectic order (LO). }

Pattern enumeration without duplicates is at the core of constraint-based pattern-mining [\cite{DBLP:reference/dmkdh/BoulicautJ10}]. Avoiding to generate patterns with the same extent is usually based on a total order on the set of attribute restrictions. This poset is written by $(R,\lessdot)$. 

\begin{example}
For instance, considering itemset patterns, $R = \mathcal{I}$ and a lectic order, usually the lexicographic order, is chosen on $\mathcal{I}$: 
 $a \lessdot b \lessdot c \lessdot d$ for $I = \{ a, b, c, d\}$ and $bc \lessdot ad$.
Consider that a node $s$ has been generated with a restriction  $r_i$:
we can expand the node only with restrictions $r_j$ such that   $r_i \lessdot r_j $. This total order also holds for numerical attributes by considering the minimal changes (see the work of \cite{DBLP:conf/ijcai/KaytoueKN11} for further details).
\end{example}

We can use this technique to enumerate the lattice with a depth-first search (DFS), which ensures that each element of the search space is visited exactly once.
An example is given in Figure \ref{fig:enum-expand}. 
However, it induces a strong bias: An MCTS algorithm would sample this tree instead of sampling the pattern search space. 
In other words, a small restriction w.r.t. $\lessdot$ has much less chances to be picked than a largest one.
Going back to the example in Figure \ref{fig:enum-expand} (middle), the item $a$ can be drawn only once through a complete DFS; $b$ twice; while $c$ four times (in bold). It follows that patterns on the left hand side of the tree have less chances to be generated, e.g., $prob(\{a,b\})=1/6$ while $prob(\{b,c\})=1/3$. 
These two itemsets should however have the same chance to be picked as they have the same size.
This variability is corrected by weighting the visit counts in the UCT with the normalized exploration rate (see Figure~\ref{fig:enum-expand} (right)).

\begin{definition}[Normalized exploration rate]
Let $\mathcal{S}$ be the set of all possible patterns.
The normalized exploration rate of a pattern $s$ is,
 \[   \rho_{norm}(s)  =   \frac{V_{total}(s)}{V_{lectic}(s)} = \frac{    |\{s'  | s' \preceq s , \forall s' \in \mathcal{S}\}|    }{     |\{s' |  (s \lessdot s' \wedge s' \prec s)   \vee s=s', \forall s' \in \mathcal{S}\}|      }  \]
\label{def:exploration-rate}   
\end{definition}

Given this normalized exploration rate, we can adapt the UCBs when enabling the lectic order. For example, we can define the \textit{DFS-UCT} of a child $s'$ of a pattern $s$ derived from the \textit{UCT} as follows:
\[  \textit{DFS-UCT}(s,s') = Q(s') + 2C_p\sqrt{\frac{2  \ln \left(N(s) \cdot \rho_{norm}(s)\right)}{N(s') \cdot \rho_{norm}(s')}}\]

\begin{proposition}[Normalized exploration rate for itemsets]
For itemsets, let $s_i$ be the child of $s$ obtained by playing action $r_i$ and $i$ is the rank of $r_i$ in $(R,\lessdot)$: $\rho_{norm}(s_i) = \frac{  2^{(|\mathcal{I}| - |s_i|)} }{ 2^{(|\mathcal{I}| - i - 1)}}$. 
\label{prop:rate-itemset}
\end{proposition}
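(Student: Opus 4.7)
\medskip

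The plan is to unpack each of the two quantities $V_{total}(s_i)$ and $V_{lectic}(s_i)$ directly from Definition~\ref{def:exploration-rate} in the itemset setting, and show that each is a power of two with an exponent that can be read off the data just presented.

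First I would fix the specialization order. For itemsets we have $\mathcal{S}=2^{\mathcal{I}}$ with $s' \preceq s \iff s' \supseteq s$; adding an item specializes the pattern. Hence $V_{total}(s_i) = |\{s' \in 2^{\mathcal{I}} : s' \supseteq s_i\}|$ is simply the number of supersets of $s_i$ in $2^{\mathcal{I}}$, which, by identifying each superset with the subset of $\mathcal{I}\setminus s_i$ of extra items to include, equals $2^{|\mathcal{I}|-|s_i|}$. This gives the numerator immediately.

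For the denominator I would describe the lectic subtree rooted at $s_i$. By construction of the lectic enumeration (Example and surrounding discussion, using $(R,\lessdot)$ with $R=\mathcal{I}$), once a node has been reached by playing the action $r_i$ of rank $i$, the only legal subsequent actions are the items strictly greater than $r_i$ in $\lessdot$. With ranks counted from $0$, there are exactly $|\mathcal{I}| - i - 1$ such items. The lectic descendants of $s_i$ (including $s_i$ itself, as allowed by the $s'=s$ disjunct of the definition) are then in bijection with the subsets of $\{r_j : j>i\}$, the bijection sending a subset $X$ to $s_i \cup X$. Hence $V_{lectic}(s_i) = 2^{|\mathcal{I}|-i-1}$. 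Forming the ratio yields the claimed formula
\[
  \rho_{norm}(s_i)=\frac{2^{|\mathcal{I}|-|s_i|}}{2^{|\mathcal{I}|-i-1}}.
\]

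I would sanity-check on the running example of Figure~\ref{fig:enum-expand}. For $\mathcal{I}=\{a,b,c\}$ with $a\lessdot b\lessdot c$, the formula gives $\rho_{norm}(\{a\})=1$, $\rho_{norm}(\{b\})=2$, $\rho_{norm}(\{c\})=4$, matching the intuitive counts ``$a$ drawn once, $b$ twice, $c$ four times'' stated in the text. I expect the only real obstacle to be a cosmetic one: being precise about the indexing convention (rank numbered from $0$) and about the fact that ``lectic descendants'' in the definition must include $s$ itself, since without the $s=s'$ disjunct the denominator would be off by one and the identification with subsets of $\{r_j : j>i\}$ would fail at $X=\emptyset$. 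Once these bookkeeping points are pinned down, the argument reduces to the two elementary subset-counting steps above.
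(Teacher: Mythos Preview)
Your proposal is correct and follows essentially the same route as the paper's own proof: identify $V_{total}(s_i)=2^{|\mathcal{I}|-|s_i|}$ and $V_{lectic}(s_i)=2^{|\mathcal{I}|-i-1}$ and take the ratio. The paper merely asserts these two counts without justification, whereas you supply the subset-counting arguments and the indexing convention explicitly; the added sanity check on $\mathcal{I}=\{a,b,c\}$ is a nice confirmation but not needed for the argument.
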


\begin{proof}
Let $V_{lectic}(s_i)$ be the size of the search space sampled under $s_i$ using a lectic enumeration, and $V_{total}(s_i)$ be the size of the search space without using a lectic enumeration. 
Noting $V_{total}(s_i) = 2^{(|\mathcal{I}| - |s_i|)}$ and $V_{lectic}(s_i) = 2^{(|\mathcal{I}| - i - 1)}$ for itemsets,
we have $\rho_{norm}(s_i) = \frac{V_{total}(s_i)}{V_{lectic}(s_i)} = \frac{  2^{(|\mathcal{I}| - |s_i|)} }{ 2^{(|\mathcal{I}| - i - 1)}}$.
\qed
\end{proof}

\begin{proposition}[Normalized exploration rate for a numerical attribute]
For a single numerical attribute $a$, $\rho_{norm}(.)$ is defined as follows :
\begin{itemize}
\item Let $s' = \langle \alpha_i \leq a \leq \alpha_j \rangle$ obtained after a left change: $\rho_{norm}(s')=  1$.
\item Let $s' = \langle \alpha_i \leq a \leq \alpha_j \rangle$ obtained after a right change.
Let $n$ be the number of values from $Dom(a)$ in $[\alpha_i,\alpha_j]$: ${\rho_{norm}(s')=  \frac{n+1}{2}}$.
\end{itemize}
\label{prop:rate-num}
\end{proposition}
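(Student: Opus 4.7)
The plan is to mirror the itemset case by identifying, for a single numerical attribute, what the lectic ordering on the two elementary restrictions ("left change" $L$ and "right change" $R$) gives us at the level of admissible descendant paths. First I would fix the convention $L \lessdot R$, which forces the canonical enumeration to apply all left changes before any right change: any interval $[\alpha_k,\alpha_l]$ contained in the parent interval $[v_1,v_m]$ is reached by the unique canonical sequence $L^{k-1}R^{m-l}$, so each element of the sub-lattice is visited exactly once by the lectic DFS. I would then separate the two cases according to which primitive produced $s'$, because the lectic rule "only restrictions $\geq$ the most recent one are admissible" behaves asymmetrically: after an $L$ both $L$ and $R$ remain legal, while after an $R$ only further $R$'s are allowed.

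For the left-change case, after refining into $s'=\langle \alpha_i \leq a \leq \alpha_j\rangle$ the lectic tree rooted at $s'$ contains exactly the intervals $[\alpha_k,\alpha_l]$ with $i\leq k \leq l \leq j$, since any of the $L^{k-i}R^{j-l}$ paths is still canonical. This set coincides with the full set of specializations of $s'$, so $V_{lectic}(s')=V_{total}(s')$ and $\rho_{norm}(s')=1$ immediately.

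For the right-change case, write $n=j-i+1$ for the number of values of $Dom(a)$ lying in $[\alpha_i,\alpha_j]$. The total number of subintervals of $[\alpha_i,\alpha_j]$ is the standard count $V_{total}(s')=\binom{n+1}{2}=\frac{n(n+1)}{2}$. For the lectic count, since the latest primitive was $R$, only further right changes are admissible, so the reachable nodes are exactly $[\alpha_i,\alpha_k]$ for $k=i,\dots,j$, giving $V_{lectic}(s')=n$. Dividing yields $\rho_{norm}(s')=\frac{n(n+1)/2}{n}=\frac{n+1}{2}$.

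The only delicate point — and what I expect to be the main obstacle to make fully rigorous — is justifying the canonical-path claim, i.e.\ that fixing $L\lessdot R$ together with the "next restriction must be $\geq$ the previous" rule really enumerates each subinterval exactly once and produces the two admissible descendant sets described above. I would dispatch this by a short combinatorial argument: any path from $[v_1,v_m]$ to $[\alpha_k,\alpha_l]$ consists of $k-1$ $L$'s and $m-l$ $R$'s in some order, and the lectic constraint keeps precisely the unique sorted one; applied to the subtree rooted at $s'$, the same argument gives the two counts above and closes the proof.
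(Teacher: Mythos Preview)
Your proposal is correct and follows essentially the same approach as the paper: both compute $\rho_{norm}$ as $V_{total}/V_{lectic}$, count $V_{total}=n(n+1)/2$ subintervals, and observe that after a left change both $L$ and $R$ remain admissible (so $V_{lectic}=V_{total}$) while after a right change only the $n$ further right changes are admissible. You supply more justification for the canonical-path enumeration than the paper does, but the counting argument is identical.
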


\begin{proof}
As explained in the proof of (Proposition \ref{prop:rate-itemset}),
$\rho_{norm}(s) = \frac{V_{total}(s)}{V_{lectic}(s)}$. For a numerical attribute, 
$V_{total}(s) = n(n+1)/2$, i.e. the number of all sub intervals. 
If $s$ was obtained after a left change, $V_{lectic}(s) = n(n+1)/2$ as both left and right changes can be applied.
If $s$ was obtained after a right change, $V_{lectic}(s)= n$, as only $n$ right changes can be applied.
It follows that $\rho_{norm}(s)  = \frac{n(n+1)/2}{n(n+1)/2} = 1$ if $s$ was obtained from a left change and $\rho_{norm}(s)  = \frac{n(n+1)/2}{n} = \frac{n+1}{2}$ otherwise.
\qed
\end{proof}

\begin{figure}\centering
\begin{tabular}{ccc} 
\includegraphics[width=.22\textwidth]{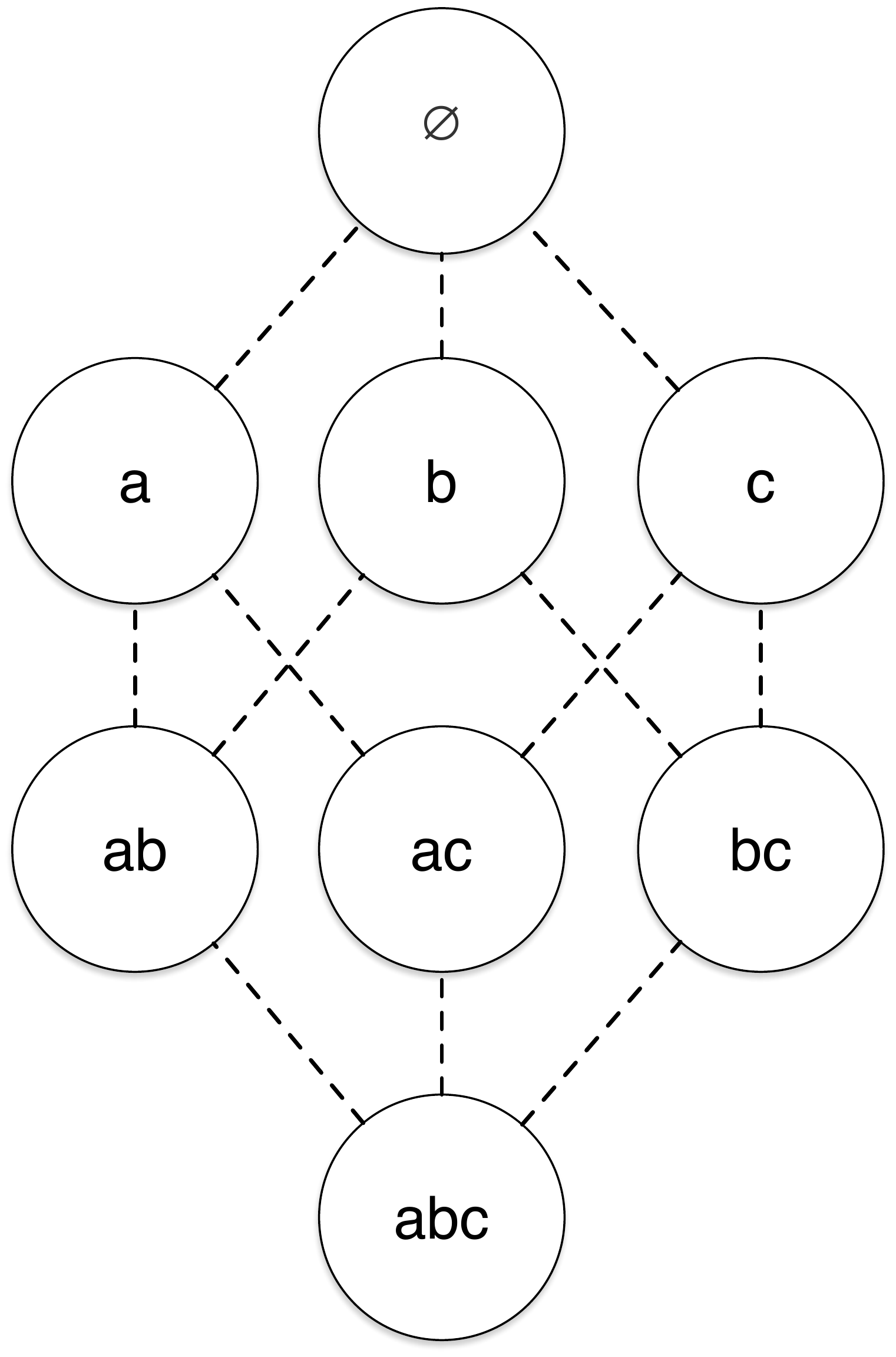} & 
\includegraphics[width=.22\textwidth]{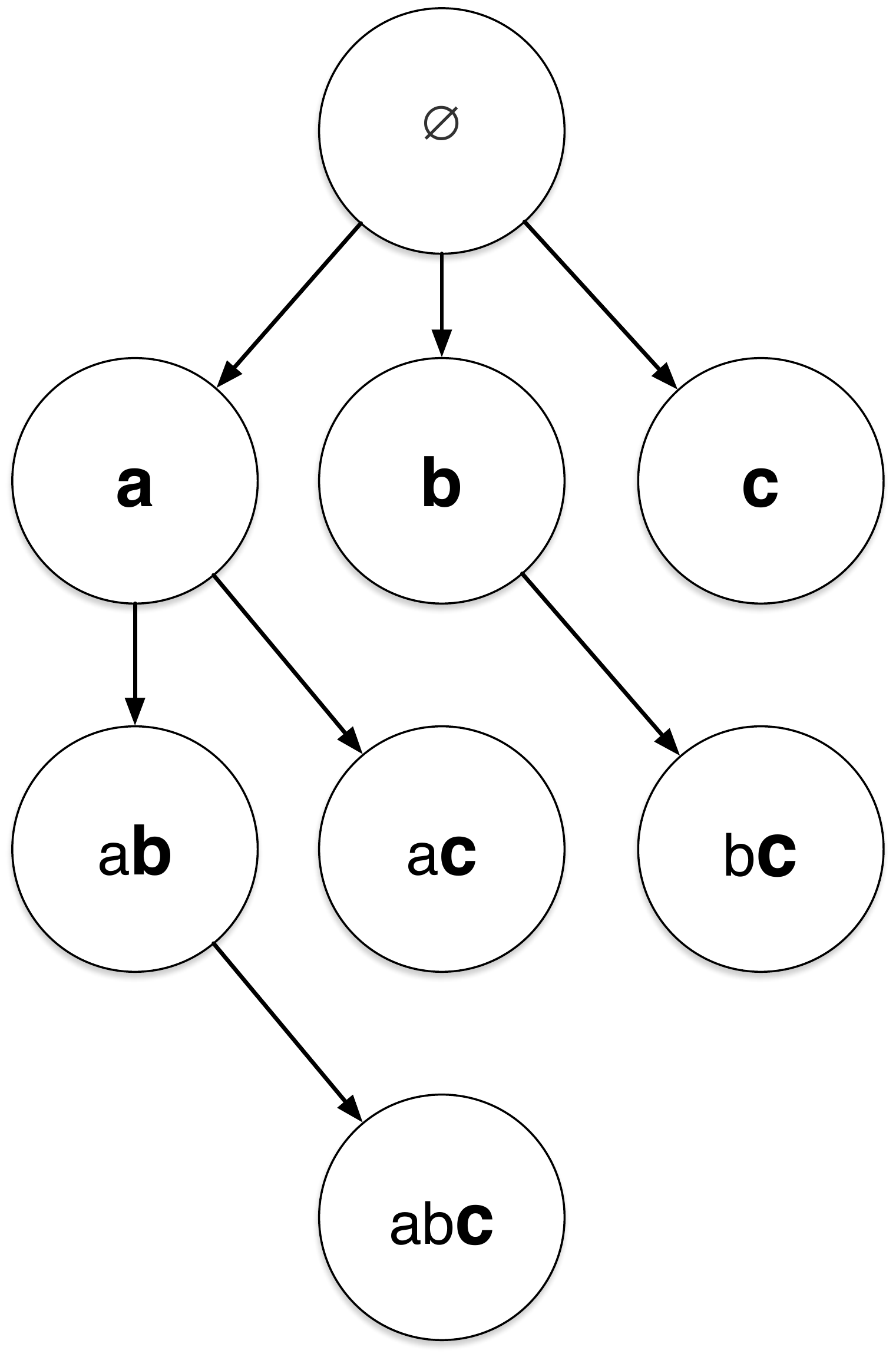} &
\includegraphics[width=0.45\textwidth]{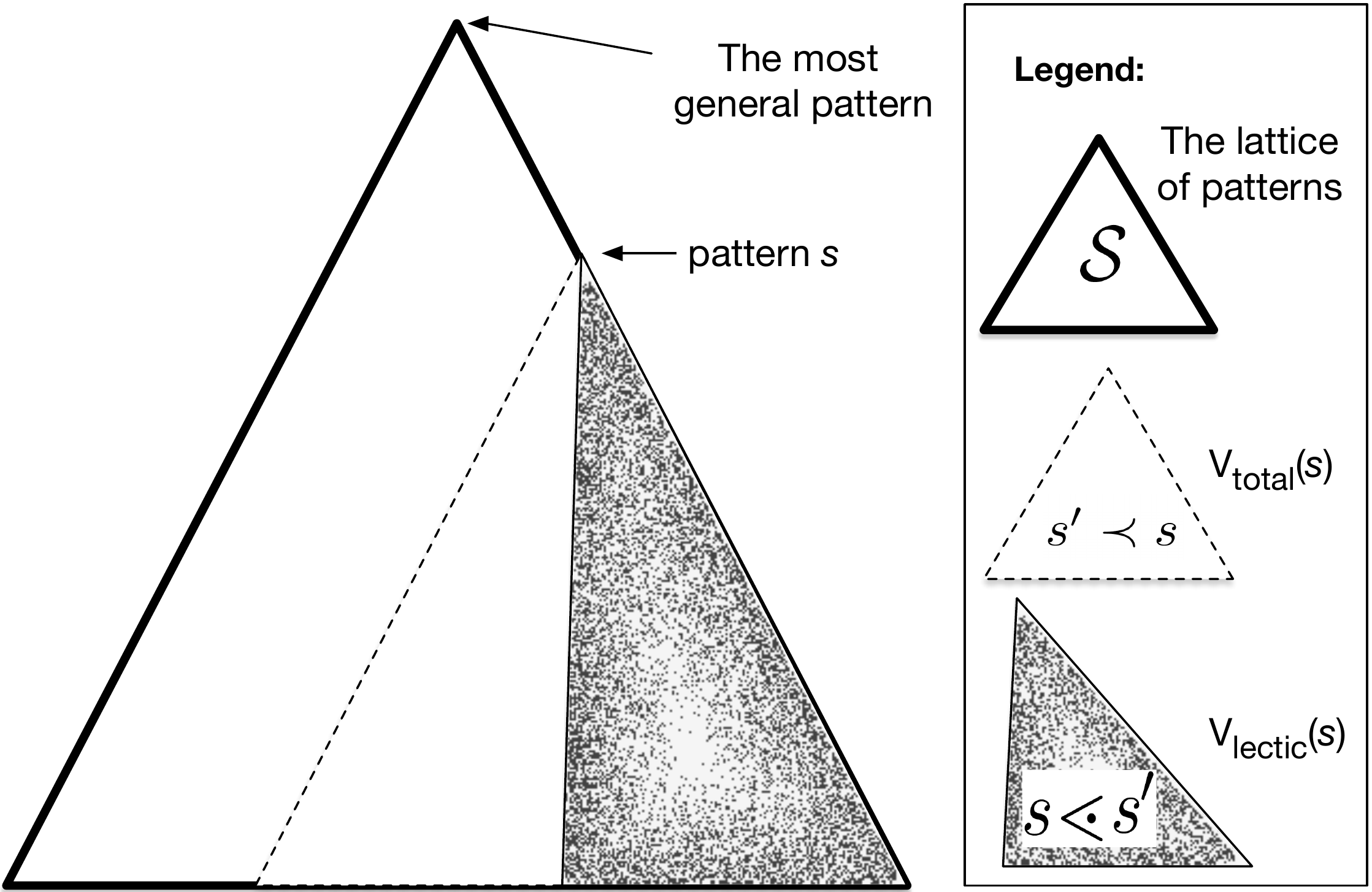}
\end{tabular}
\caption{Search space as a lattice (left), DFS of the search space (middle), and the principles of the normalized exploration rate.\label{fig:enum-expand}}
\end{figure}

\medskip\noindent\textbf{Avoiding duplicates in the tree using permutation unification (PU). }

The permutation unification is a solution that enables to keep a unique node for all duplicates of a pattern that can be expanded within several branches of the tree. 
This is inspired from \textit{Permutation AMAF} of \cite{DBLP:conf/icai/HelmboldP09},  a method used in traditional MCTS algorithms to update all the nodes that can be concerned by a play-out.
A unified node no longer has a single parent but a list of all duplicates' parent. 
This list will be used when back-propagating a reward.

This method is detailed in Algorithm \ref{algo:permuAMAF}.
Consider that the node $s_{exp}$ has been chosen as
an expansion of the selected node $s_{sel}$.
The tree generated so far is explored for finding $s_{exp}$ elsewhere in the tree: If $s_{exp}$ is not found, we proceed as usual; otherwise $s_{exp}$ becomes a pointer to the duplicate node in the tree. In our MCTS implementation, we will simply use a hash map to store each pattern and the node in which is has been firstly encountered. 

\begin{algorithm} 

\begin{algorithmic}[1] 
\State $H \leftarrow $ new Hashmap()
\Function{Expand}{$s_{sel}$}
		\State randomly choose $s_{exp}$ from non expanded children of $s_{sel}$

		\If{$(node \leftarrow H.get(s_{exp})) \not = null$}
			\State $node.parents.add(s_{sel})$
			\State $s_{exp} \leftarrow node$
		\Else{
				\State $s_{exp}.parents \leftarrow $ new List()
				\State $s_{exp}.parents.add(s_{sel})$
  				\State$H.put(s_{exp},s_{exp})$     \Comment{A pointer on the unique occurrence of $s_{exp}$}
		}
    	\EndIf
		\State add new child $s_{exp}$ to $s_{sel}$ in the tree \Comment{Expand $s_{sel}$ with $s_{exp}$}
    	\State return $s_{exp}$
\EndFunction{}
\end{algorithmic}
\center{The permutation unification principle.\label{algo:permuAMAF}}
\end{algorithm}

\subsection{The \textsc{RollOut} method}
From the expanded node $s_{exp}$ a simulation is run (\textsc{RollOut}).
With standard MCTS, a simulation is a random sequence
of actions that leads to a terminal node: A game state from which a reward can be computed (win/loss).
In our settings, it is not only the leaves that can be evaluated, but any pattern $s$ encountered during the simulation. 
Thus, we propose to define the notion of path (the simulation) and reward computation (which nodes are evaluated and how these different rewards are aggregated) separately.

\begin{definition}[Path policy]
Let $s_1$ the node from which a simulation has to be run (i.e., $s_1 = s_{exp}$). 
Let $n \geq 1 \in \mathbb{N}$, we define a path $p(s_1, s_n) = \{s_1, \dots, s_n\}$ as a chain in the lattice $(\mathcal{S},\prec)$, i.e., an ordered list of patterns starting from $s_1$ and ending with $s_n$ such that $\forall i \in \{1,\dots,n-1\}, s_{i+1}$ is a (not necessarily direct) refined pattern of $s_i$. 
\begin{itemize}
\item \textit{naive-roll-out}: 
a path of direct refinements is randomly created with length $pathLength \in \mathbb{N}^+$ a user-defined parameter.
\item \textit{direct-freq-roll-out}: The path is extended with a randomly chosen restriction until it meets an infrequent pattern $s_{n+1}$ using the direct refinement operator. Pattern $s_n$ is a leaf of the tree in our settings.
\item \textit{large-freq-roll-out} overrides the \textit{direct-freq-roll-out} policy by using specializations that are not necessarily direct. 
Several actions are added instead of one to create a new element of the path.
The number of added actions is randomly picked in $(1,...,jumpLength)$
where $jumpLength$ is given by the user ($jumpLength = 1$ gives the previous policy).
This techniques allows to visit deep parts of the search space with shorter paths.
\end{itemize}
\end{definition}

\begin{definition}[Reward aggregation policy]
Let $s_1$ be the node from which a simulation has been run
and $p(s_1, s_n)$ the associated random path.
Let $\mathcal{E} \subseteq p(s_1,s_n)$ be the subset of nodes to be evaluated.
The aggregated reward of the simulation is given by:
$\Delta = aggr( \{ \varphi(s) \forall s \in \mathcal{E} \}) \in [0;1]$ where $aggr$ is an aggregation function. We define several reward aggregation policies:
\begin{itemize}
\item \textit{terminal-reward}: $\mathcal{E} = \{ s_n \}$ and $aggr$ is the identity function.
\item \textit{random-reward}: $\mathcal{E} = \{ s_i \}$ with a random $1\leq i \leq n$ and $aggr$ is the identity function.
\item \textit{max-reward}: $\mathcal{E} = p(s_1, s_n)$ and $aggr$ is the $max(.)$ function 
\item \textit{mean-reward}: $\mathcal{E} = p(s_1, s_n)$ and $aggr$ is the $mean(.)$ function.
\item \textit{top-k-mean-reward}: $\mathcal{E} = $ \textit{top-k}$( p(s_1, s_n) )$, $aggr$ is the $mean(.)$ function and \textit{top-k(.)} returns the $k$ elements with the highest $\varphi$.
\end{itemize}
\end{definition}

A basic MCTS forgets any state encountered during a simulation. 
This is not optimal for single player games as relate \cite{DBLP:journals/tciaig/BjornssonF09}:
A pattern with a high $\varphi$ should not be forgotten as we might not expand the tree enough
to reach it. We propose to consider several memory strategies.
\begin{definition}[Roll-out memory policy]
A roll-out memory policy specifies which of the nodes of the path $p = (s_1, s_n)$ shall be kept in an auxiliary data structure $M$.
\begin{itemize}
\item \textit{no-memory}: Any pattern in $\mathcal{E}$ is forgotten.
\item \textit{all-memory}: All evaluated patterns in $\mathcal{E}$ are kept.
\item \textit{top-k-memory}: A list $M$ stores the best $k$ patterns in $\mathcal{E}$ w.r.t. $\varphi(.)$.
\end{itemize}
\end{definition}
This structure $M$ will be used to produce the final pattern set.

 \subsection{The \textsc{Update} method}
The backpropagation method updates the tree according to a simulation.
Let $s_{sel}$ be the selected node and $s_{exp}$ its expansion from which the simulation is run:
This step aims at updating the estimation $Q(.)$ and the number of visits $N(.)$ of each parent of $s_{exp}$ recursively.
Note that $s_{exp}$ may have several parents when we enable permutation unification (PU).
The number of visits is always incremented by one.
We consider three ways of updating $Q(.)$:
\begin{itemize}
\item \textit{mean-update}: $Q(.)$ is the average of the rewards $\Delta$ back-propagated through the node so far (basic MCTS).
\item \textit{max-update}: $Q(.)$ is the maximum reward $\Delta$ back-propagated through the node so far. This strategy enables to identify a local optimum within a part of the search space that contains mostly of uninteresting patterns. Thus, it gives more chance for this area to be exploited in the next iterations.
\item \textit{top-k-mean-update}: $Q(.)$ average of the $k$ best rewards $\Delta$ back-propagated through the node so far. It gives a stronger impact for the parts of the search space containing several local optima.
\end{itemize}

\textit{mean-update} is a standard in MCTS techniques. 
We introduce the \textit{max-update} and \textit{top-k-mean-update} policies 
as it may often happen that high-quality patterns are rare and scattered in the search space. 
The mean value of rewards from simulations would converge towards 0 (there are too many low quality subgroups), whereas the maximum value (and top-k average) of rewards enables to identify the promising parts of the search space.

\subsection{Search end and result output}
There are two ways a MCTS ends: Either the  computational budget is reached (number of iterations) or the tree is fully expanded (an exhaustive search has been possible, basically when the size of the search space is smaller than the number of iterations). Indeed, the number of tree nodes equals the number of iterations that have been performed.
It remains now to explore this tree and the data structure $M$ built by the memory policy to output the list of diverse and non-redundant patterns.

Let $\mathcal{P} = T \cup M$ be a pool of patterns, where $T$ is the set of patterns stored in the nodes of the tree.
The set $\mathcal{P}$ is totally sorted w.r.t. $\varphi$ in a list $\Lambda$. 
Thus, we have to pick the $k$-best diverse and non-redundant subgroups within this large pool of nodes $\Lambda$ to return the result set of subgroups $\mathcal{R} \subseteq \mathcal{P}$. 
For that, we choose to implement $filter(.)$ in a greedy manner as done by\cite{DBLP:journals/datamine/LeeuwenK12,DBLP:conf/dis/BoscGBRPBK16}.
 $\mathcal{R} = filter(\mathcal{P})$ as follows: A post-processing that filters out redundant subgroups from the diverse pool of patterns $\Lambda$ based on the similarity measure $sim$ and the maximum similarity threshold $\Theta$.  Recursively, we poll (and remove) the best subgroup $s^*$ from $\Lambda$, and we add $s^*$ to $\mathcal{R}$ if it is not redundant with any subgroup in $\mathcal{R}$. It can be shown easily that $redundancy(\mathcal{R}) = 0$.

Applying $filter(.)$ at the end of the search requires however that the pool of patterns $\mathcal{P}$ has a reasonable cardinality which may be problematic with MCTS in term of memory. The allowed budget always enables such post-processing in our experiments (up to one million iterations). 

\section{Related work} \label{sec:RW}
SD aims at extracting subgroups of individuals for which the distribution on the target variable is statistically different from the whole (or the rest of the) population~[\cite{DBLP:books/mit/fayyadPSU96/Klosgen96,DBLP:conf/pkdd/Wrobel97}]. 
Two similar notions have been formalized independently and then unified by \cite{DBLP:journals/jmlr/NovakLW09}: Contrast set mining and emerging patterns. 
Close to SD, redescription mining aims to discover redescriptions of the same group of objects in different views~\cite{DBLP:journals/tkde/LeeuwenG15}. 
Exceptional model mining (EMM) was first introduced by \cite{DBLP:conf/pkdd/LemanFK08} (see a comprehensive survey  by \cite{DBLP:journals/datamine/DuivesteijnFK16}). 
EMM generalizes SD dealing with more complex target concepts: There are not necessarily one but several target variables to discriminate. 
EMM seeks to elicit patterns whose extents induce a model that substantially deviates from the one induced by the whole dataset. 

First exploration methods that have been proposed for SD/EMM are exhaustive search ensuring that the best subgroups are found, e.g. \cite{DBLP:books/mit/fayyadPSU96/Klosgen96,DBLP:conf/pkdd/Wrobel97,DBLP:conf/ida/KavsekLJ03,DBLP:conf/ismis/AtzmullerL09}. Several pruning strategies have been used to avoid the exploration of uninteresting parts of the search space. 
These pruning strategies are usually based on the monotonic (or anti-monotonic) property of the support or upper bounds on the quality measure [\cite{DBLP:conf/pkdd/GrosskreutzRW08,KaytoueEtAl/Mach16}].
To the best of our knowledge, the most efficient algorithms are (i) \emph{SD-MAP*} from \cite{DBLP:conf/ismis/AtzmullerL09} which is based on the FP-growth paradigm [\cite{DBLP:conf/sigmod/HanPY00}]
and (ii) an exhaustive exploration with optimistic estimates on different quality measures [\cite{DBLP:journals/datamine/LemmerichAP16}].
When an exhaustive search is not possible, heuristic search can be used. The most widely used techniques in SD and EMM are \textit{beam search}, \textit{evolutionary} algorithms and \textit{sampling} methods.
Beam search performs a level-wise exploration of the search space: 
A beam of a given size (or dynamic size for recent work) is built from the root of the search space.
This beam only keeps the most promising subgroups to extend at each level [\cite{DBLP:journals/ml/LavracCGF04,DBLP:conf/ida/MuellerRSKRK09,DBLP:journals/datamine/LeeuwenK12}].
The redundancy issue due to the beam search is tackled with the pattern skyline paradigm by \cite{DBLP:conf/pkdd/LeeuwenU13}, and with a ROC-based beam search variant for SD by \cite{DBLP:conf/sdm/MeengDK14}. 
Another family of SD algorithms relies on evolutionary approaches. They use a fitness function to select which individuals to keep at the next generated population.
\emph{SDIGA}, from \cite{DBLP:journals/tfs/JesusGHM07}, is based on a fuzzy rule induction system where a rule is a pattern in disjunctive normal form (DNF).
Other approaches have been then proposed, generally ad-hoc solutions suited for specific pattern languages and selected quality measures [\cite{DBLP:journals/isci/RodriguezRRA12,DBLP:conf/hais/PachonVDL11,DBLP:journals/tfs/CarmonaGJH10}].

Finally, pattern sampling techniques are gaining interest. 
 \cite{DBLP:conf/ida/MoensB14} employ controlled direct pattern sampling (CDPS). 
It enables to create random patterns with the help of a procedure based on a controlled distribution as did \cite{DBLP:conf/kdd/BoleyLPG11}. 
This idea was extended by \cite{BendimeradEtAl/ICDM2016} for a particular EMM problem to discover exceptional models induced by attributed graphs. 
Pattern sampling is attractive as it supports direct interactions with the user for using his/her preferences to drive the search. Besides, with sampling methods, a result is available anytime. 
However, traditional sampling methods used in pattern mining need a given probability distribution over the pattern space: This distribution depends on both the data and the measures [\cite{DBLP:conf/kdd/BoleyLPG11,DBLP:conf/ida/MoensB14}]. 
Each iteration is independent and consists of drawing a pattern given this probability distribution. 
Moreover, these probability distributions exhibit the problem of the long tail: There are many more uninteresting patterns than interesting ones. 
Thus, the probability to draw an uninteresting pattern is still high, and not all local optima may be drawn: There are no guaranties on the diversity of the result set. 
Recently, the sampling algorithm \textsc{Misere} has been  proposed by \cite{DBLP:conf/pkdd/GayB12,DBLP:conf/icdm/EghoGBVC15,DBLP:journals/kais/EghoGBVC17}.
Contrary to the sampling method of Moens and Boley, \textsc{Misere} does not require any probability distribution. It is agnostic of the quality measure but it still employs a discretization of numerical attribute in a pre-processing task. To draw a pattern, \textsc{Misere} randomly picks an object in the data, and thus it randomly generalizes it into a pattern that is evaluated with the quality measure. Each draw is independent and thus the same pattern can be drawn several times.
Finally, MCTS samples the search space without any assumption about the data and the measure. Contrary to sampling methods, it stores the result of the simulations of the previous iterations and it uses this knowledge for the next iterations: The probability distribution is learned incrementally. 
If given enough computation budget, the exploration/exploitation trade-off guides the exploration to all local optima (an exhaustive search).
To the best of our knowledge, MCTS has never been used in pattern mining, however, \cite{DBLP:conf/icml/GaudelS10} designed the algorithm FUSE (Feature UCT Selection) which extends MCTS to a feature selection problem. 
This work aims at selecting the features from a feature space that are the more relevant w.r.t. the classification problem. 
For that, Gaudel and Sebag explore the powerset of the features (i.e., itemsets where the items are the features) with a MCTS method to find the sets of features that minimize the generalization error. 
Each node of the tree is a subset of feature, and each action consists of adding a new feature in the subset of features.
The authors focus on reducing the high branching factor by using $\textit{UCB1-Tuned}$ and \emph{RAVE} introduced by \cite{DBLP:conf/icml/GellyS07}. 
The latter enables to select a node even if it remains children to expand.
The aim of FUSE is thus to return the best subset of features (the most visited path of the tree), or to rank the features with the RAVE score.

\section{Empirical evaluation on how to parameterize \algo{}\label{sec:quanti-exp}}

Our MCTS implementation for pattern mining, called \algo{} is publicly available\footnote{\url{https://github.com/guillaume-bosc/MCTS4DM}}.  
As there are many ways to configure \algo{}, we propose first to study the influence of the parameters on runtime, pattern quality and diversity. We both consider benchmark and artificial data. 
The experiments were carried out on an Intel Core i7 CPU 4 Ghz machine with 16 GB RAM running under Windows 10. 

\subsection{Data}

Firstly, we gathered benchmark datasets used in the recent literature 
of SD and EMM, that is, from \cite{DBLP:journals/datamine/LeeuwenK12,DBLP:conf/icdm/DownarD15,DBLP:journals/tkde/LeeuwenG15,DBLP:conf/pkdd/LeeuwenK11,DBLP:conf/icdm/DuivesteijnK11}.
Table \ref{tab:benchmark} lists them, mainly taken from the UCI repository, and we provide some of their properties.

Secondly, we used a real world dataset from neuroscience. It concerns olfaction (see Table \ref{tab:benchmark}).This data provides a very large search space of numerical attributes (more details on the application are presented by \cite{DBLP:conf/dis/BoscGBRPBK16}).

Finally, to be able to specifically evaluate diversity, a ground-truth is required. Therefore, we create an artificial data generator  to produce datasets where patterns with a controlled WRAcc are hidden.
 The generator takes the parameters given in Table \ref{tab:gen-params} and it works as follow. 
A data table with nominal attributes is generated with a binary target.  The number of objects, attributes and attributes values are controlled with the parameters $nb\_obj$, $nb\_attr$  and $domain\_size$. 
Our goal is to hide $nb\_patterns$ patterns in noise: We generate random descriptions of random lengths $Ground = \{d_i ~|~i \in [1, nb\_patterns]\}$.
For each pattern, we generate $pattern\_sup$  objects positively labeled with a probability of $1 - noise\_rate$ to be covered by the description $d_i$, and $noise\_rate$  for not being covered. 
We also add $pattern\_sup \times out\_factor$  negative examples for the pattern $d_i$: It will allow patterns with different WRAcc. Finally, we add random objects until we reach a maximum number of transactions $nb\_obj$. 

\begin{table}\centering 
\caption{Benchmark datasets experimented on in the SD and EMM literature. \label{tab:benchmark}}
\begin{tabular}{ccccc}
\hline
Name            	& \# Objects & \# Attributes & Type of attributes & Target attribute \\
\hline
\textit{Bibtex}	& 7,395	& 1,836 & Binary	& TAG\_statphys23 \\
\textit{BreastCancer}	& 699			& 9		& Numeric	& Benign	 \\		
\textit{Cal500}				& 502			& 68	& Numeric	& Angry-Agressive	 \\		
\textit{Emotions}			& 594			& 72	& Numeric	& Amazed-suprised	 \\	
\textit{Ionosphere}		& 352			& 35	& Numeric	& Good	 \\	
\textit{Iris}      				& 150			& 4		& Numeric	& Iris-setosa	 \\	
\textit{Mushroom} 		& 8,124		& 22	& Nominal	& Poisonous	 \\	
\textit{Nursery}     			& 12,961	& 8		& Nominal	& class=priority	 \\	
\textit{Olfaction}     			& 1,689	& 82		& Numeric	& Musk	 \\	
\textit{TicTacToe   }     	& 958			& 9		& Nominal	& Positive	 \\	
\textit{Yeast} 					& 2,417		& 103	& Numeric	& Class1	 \\	
\hline
\end{tabular}
\end{table}

\begin{table} \centering 
\caption{Parameters of the artificial data generator\label{tab:gen-params}.}
\begin{tabular}{clcccc}
\hline
Name & Description  & $\mathcal{P}_{small}$ & $\mathcal{P}_{medium}$   & $\mathcal{P}_{large}$  \\ 
\hline\hline
$nb\_obj$            &  Number of objects    &  2,000 & 20,000 & 50,000 \\
$nb\_attr$            &  Number of attributes &  5     & 5  & 25 \\
$domain\_size$   &  Domain size  per attribute & 10   & 20 & 50\\
$nb\_patterns$     & Number of hidden patterns   & 3   & 5 & 25\\
$pattern\_sup$     & Support of each hidden pattern   & 100 & 100 & 100\\
$out\_factor$        &  Proba. of a pattern labeled $-$ & 0.1  & 0.1& 0.1\\
$noise\_rate$       &  Proba. of a object to be noisy  & 0.1 & 0.1  & 0.1 \\
\hline
\end{tabular}
\end{table}

\subsection{Experimental framework}
We perform a large pool of experiments to assess this new exploration method for pattern mining. For that, we have designed an experimental framework that enables to test the different combinations of factors for all the strategies we introduced in previous sections. Each experiment are run on the benchmark datasets. An experiment consists in varying a unique strategy parameter while the others are fixed. Since \algo{} uses random choices, each experiment is run five times and only the mean of the results is discussed.  

\paragraph{Default parameters.}
For each benchmark dataset, we provide a set of default parameters (Table~\ref{tab:defaultParamsBench}). Indeed, due to the specific characteristics of each dataset, a common set of default parameters is unsuitable.  Nevertheless, all datasets share a subset of common parameters:
\begin{itemize}
	\item The maximum size of the result set is set to $maxOutput = 50$.
	\item The maximum redundancy threshold is set to $\Theta = 0.5$. 
	\item The maximum description length is set to $maxLength = 5$. This is a widely used constraint in SD that enables to restrict the length of the description, i.e., it limits the number of \emph{effective} restrictions in the description.
	\item The quality measure used is $\varphi = WRAcc$ for the first label only.
	\item The SP-MCTS is used as the default UCB.
	\item The permutation unification (PU) strategy is used by default.
	\item The refinement operator for \textsc{Expand} is set to \textit{tuned-min-gen-expand}.
	\item The \textit{direct-freq-roll-out} strategy is used for the \textsc{Roll-Out}
	\item The reward aggregation policy is set to \textit{max-reward}.
	\item The memory policy is set to \textit{top-1-memory}.
	\item The update policy is set to \textit{max-udpate}.
\end{itemize}

\begin{table}[t!] \centering
\caption{The default parameters for each dataset\label{tab:defaultParamsBench}.}
\begin{tabular}{cccl}
\hline
Dataset & minSupp	& \# iterations	& Path Policy	\\ 
\hline
\textit{Bibtex}	& 50 & 50k & \textit{direct-freq-roll-out}\\
\textit{BreastCancer}	& 10 & 50k & \textit{large-freq-roll-out} ($jumpLength = 30$)\\
\textit{Cal500}	& 10 & 100k & \textit{large-freq-roll-out} ($jumpLength = 30$)\\
\textit{Emotions}	& 10 & 100k & \textit{large-freq-roll-out} ($jumpLength = 30$)\\
\textit{Ionosphere}	& 10 & 50k & \textit{large-freq-roll-out} ($jumpLength = 30$)\\
\textit{Iris}	& 10 & 50k & \textit{large-freq-roll-out} ($jumpLength = 30$) \\
\textit{Mushroom}	& 30 & 50k & \textit{direct-freq-roll-out}\\
\textit{Nursery}	& 50 & 100k & \textit{direct-freq-roll-out}  \\
\textit{Olfaction}	& 10 & 100k & \textit{large-freq-roll-out} ($jumpLength = 30$)\\
\textit{TicTacToe}	& 10 & 100k & \textit{direct-freq-roll-out}  \\
\textit{Yeast}	& 20 & 100k & \textit{large-freq-roll-out} ($jumpLength = 30$) \\
\hline 
\end{tabular}
\end{table}

\paragraph{List of experiments. }
Evaluating \algo{} is performed with six different batches of experiments:
\begin{itemize}
	\item Section~\ref{xp:UCB} is about the choice of the UCB.
	\item Section~\ref{xp:Expand} deals with the several strategies for the \textsc{Expand} method.
	\item Section~\ref{xp:RollOut} presents the leverage of all the possibilities for the \textsc{RollOut}.
	\item Section~\ref{xp:Memory} shows out the impact of the \textsc{Memory} strategy.
	\item Section~\ref{xp:Update} compares the behaviors of all the strategies for the \textsc{Update}.	
	\item Section~\ref{xp:NbIterations} performs the experiments by varying the computational budget.
	\item Section~\ref{xp:Completeness} studies if \algo{} is able to retrieved a diverse pattern set.
\end{itemize}
For simplicity and convenience, for each experiment we display the same batch of figures. For each dataset we show (i) the boxplots of the quality measure $\varphi$ of the subgroups in the result set, (ii) the histograms of the runtime and (iii) the boxplots of the description length of the subgroups in the result set depending on the strategies that are used. In this way, the impacts of the strategies are easy to understand. 

We do not evaluate memory consumption in this section, as it increases linearly with the number of iterations (to which should be added the number of patterns kept by the memory policy).

\subsection{The \textsc{Select} method}\label{xp:UCB}
The choice of the UCB is decisive, because it is the base of the exploration / exploitation trade off. Indeed, the UCB chooses which part of the search tree will be expanded and explored. We presented four existing UCBs and an adaptation with a \emph{normalized exploration rate} to take into account an enumeration based on a lectic order (LO).
As such, we need to consider also the \textit{expand} methods (standard, \textit{lectic order}  LO and \textit{permutation unification} PU) at the same time.

Figure~\ref{fig:UCB-runtime} presents the results. Comparing the runtime for all the strategies leads to conclude that there is no difference in the computation of the several UCBs (see Figure~\ref{fig:UCB-runtime}(a)). Indeed, the impact of the UCBs lies in its computation, and there is no UCB that is more time-consuming than others.
The difference we can notice, is that when LO is used, the runtime is lower. This result is expected because with LO, the search space is less large since each subgroup is unique in the search space (this is not due to the chosen UCB).
PU has also a smaller search space, but it requires call to updates pointers towards subgroups with the same extent, and requires thus more time. 

Figure~\ref{fig:UCB-runtime}(b) depicts the boxplots of the quality measure of the result set when varying the UCB. The results suggest that the \emph{UCB1-Tuned} and \emph{DFS-UCT} lead to weaker quality result for several datasets: On the \emph{Cal550}, \emph{Emotions} and \emph{Yeast} datasets, the quality measures of the result set are worse than the results of other UCBs (see, e.g., Figure~\ref{fig:UCB-runtime}(b)). This is due to the fact that the search space of these datasets is larger than the other with many local optima, and the \emph{UCB1-Tuned} is designed to explore less, thus less local optima are found. Besides, the \emph{SP-MCTS} seems to be more suitable for SD problems: The quality is slightly better than other UCBs for the \emph{BreastCancer} and \emph{Emotions} datasets. LO leads to a worse quality in the result set, whereas PU seems to be more efficient. 

The use of these different UCBs also do not impact the description length of the subgroups within the result set. For some datasets, the \emph{permutation unification} leads to longer descriptions (see for instance Figure~\ref{fig:UCB-runtime}(c)).

\begin{figure}\centering
\setlength{\tabcolsep}{0pt}
\renewcommand{\arraystretch}{1.5}\renewcommand{\arraystretch}{1.5}
\begin{tabular}{ccc}
\includegraphics[width=.33\columnwidth]{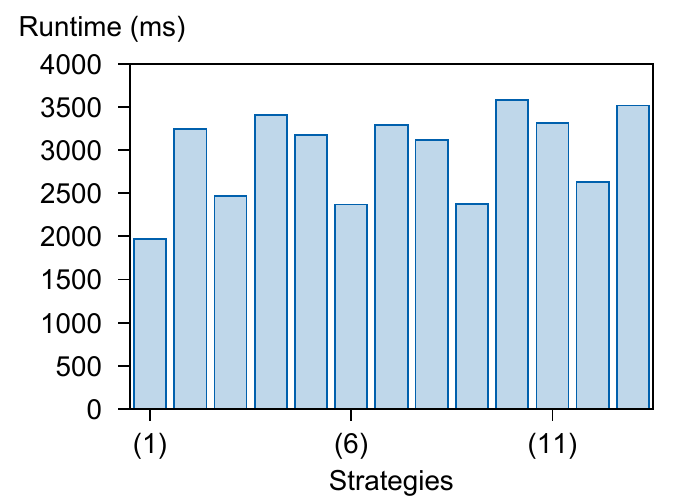} &
\includegraphics[width=.33\columnwidth]{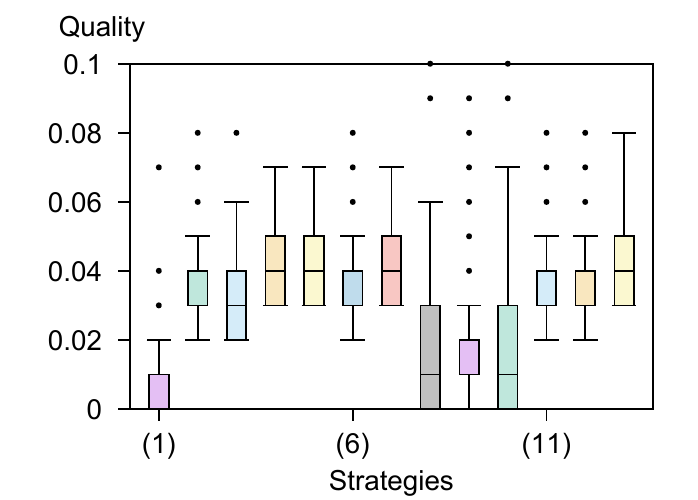}&
\includegraphics[width=.33\columnwidth]{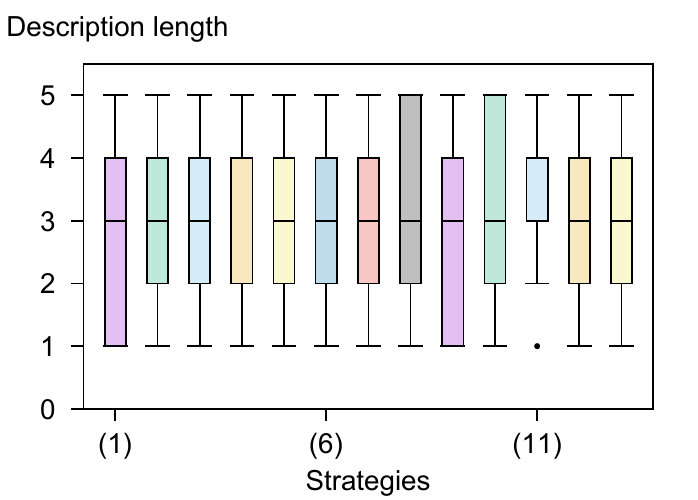}\\
(a) Runtime: \textit{BreastCancer} & (b) Avg. quality: \textit{Emotions} & (c) Descr length: \textit{Mushroom}\\
\hline
\multicolumn{3}{c}{
Select Policies: 	 
(1)	 DFS-UCT with LO 
 }\\
\multicolumn{3}{c}{
(2)	 UCB1	
(3)	 UCB1	with LO
(4)	 UCB1	with PU }\\
\multicolumn{3}{c}{
(5)	 SP-MCTS
(6)	 SP-MCTS	with LO
(7)	 SP-MCTS	with PU 
 }\\
\multicolumn{3}{c}{
(8)	 UCB1-Tuned
(9)	 UCB1-Tuned	with LO
(10)	 UCB1-Tuned	with  PU 
}\\
\multicolumn{3}{c}{
(11)	 UCT	
(12)	 UCT	with LO 
(13)	 UCT	with PU
}\\
\hline
\end{tabular} 
	\caption{Impact of the \textsc{Select} strategy\label{fig:UCB-runtime}.}
\end{figure}

\subsection{The \textsc{Expand} method}\label{xp:Expand}
Considering the \textsc{Expand} policy, we introduced three different refinement operators, namely \textit{direct-expand}, \textit{gen-expand} and \textit{label-expand}, and we presented two methods, namely LO and PU, to take into account that several nodes in the search tree are exactly the same. 
The several strategies we experiment with are given in Figure~\ref{fig:Expand-runtime}(bottom).
Let us consider the leverage on the runtime of these strategies in Figure~\ref{fig:Expand-runtime}(a). Once again, using LO implies a decrease of the runtime. Conversely, PU requires more time to run. There is very little difference in the runtime when varing the refinement operator: \emph{direct-expand} is the faster one, and \emph{label-expand} is more time consuming.

Considering the quality of the result set varying the expand strategies, we can assume that the impact differs w.r.t. the dataset (see Figure~\ref{fig:Expand-runtime}(b)). Surprisingly, LO improves the quality of the result set for some datasets (e.g. the Iris dataset in Figure~\ref{fig:Expand-runtime}(b)). This contradicts what we observe in the Emotions dataset of the previous experiment in Section~\ref{xp:UCB}. 
Most importantly, the results using \textit{label-expand} are better than other ones in most of the datasets.
Actually, this is due that this expand favors pattern with a better accuracy which is part of the WRAcc.

The description length of the extracted subgroups are quite constant when varying the expand strategies (see Figure~\ref{fig:Expand-runtime}(c)). With LO, the description lengths are slightly smaller than with other strategies.

\begin{figure}\centering
\setlength{\tabcolsep}{0pt}
\renewcommand{\arraystretch}{1.5}
\begin{tabular}{ccc}
\includegraphics[width=.33\columnwidth]{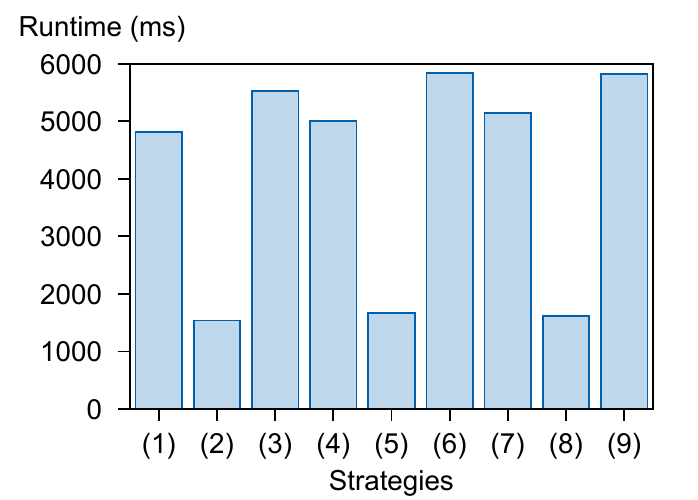}&
\includegraphics[width=.33\columnwidth]{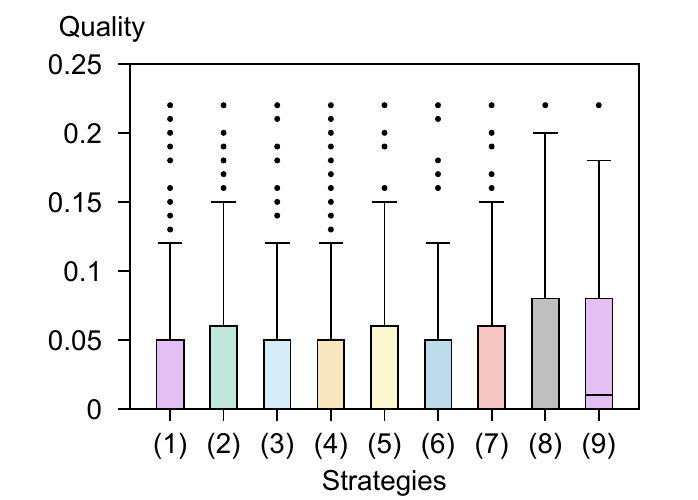} &
\includegraphics[width=.33\columnwidth]{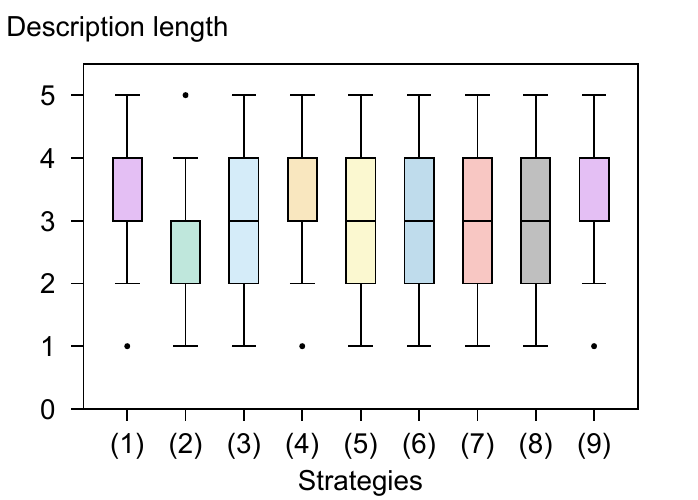} \\
(a) Runtime: \textit{Nursery}\\ & (b) Avg. quality: \textit{Iris} & (c) Descr length: \textit{Mushroom}\\
\hline
\multicolumn{3}{c}{
(1)    \textit{direct-expand}
(2)	 \textit{direct-expand}	with  LO
(3)	 \textit{direct-expand}	with  PU }\\
\multicolumn{3}{c}{
(4)	 \textit{gen-expand}	
(5)	 \textit{gen-expand} with LO
(6)	 \textit{gen-expand}	with PU }\\
\multicolumn{3}{c}{
(7)	 \textit{label-expand}	
(8)	 \textit{label-expand} with LO  
(9)	 \textit{label-expand} with PU 
}\\
\hline
\end{tabular} 
	\caption{Impact of the \textsc{Expand} strategy\label{fig:Expand-runtime}.}
\end{figure}

\subsection{The \textsc{RollOut} method}\label{xp:RollOut}
For the \textsc{RollOut} step we derived several strategies that combine both the path policy and the reward aggregation policy in Table~\ref{tab:strategies-rollout}. Clearly, the experiments show that the runs using the direct refinement operator (\textit{naive-roll-out} and \textit{direct-freq-roll-out}) are time consuming (see Figure~\ref{fig:RollOut-runtime}(a)).  In the \textit{BreastCancer} data, the runtime are twice longer with the direct refinement operator than with the \textit{large-freq-roll-out} path policy. In other datasets (e.g., \textit{Ionosphere} or \textit{Yeast}), the runtime is even more than 3 minutes (if the run lasts more than 3 minutes to perform the number of iterations, the run is ignored). Besides, it is clear that the \textit{random-reward} aggregation policy is less time consuming than other strategies. Indeed, with \textit{random-reward}, the measure of only one subgroup within the path is computed, thus it is faster.

Figure~\ref{fig:RollOut-runtime}(b) is about the quality of the result set. The \textit{naive-roll-out} and \textit{direct-freq-roll-out} path policies lead to the worst results. Besides, the quality of the result set decreases with the \textit{random-reward} reward aggregation policy in other datasets (e.g., \textit{Emotions}). Basically, these strategies evaluate only random nodes and thus they are not able to identify the promising parts of the search space. Finally, there are not large differences between other strategies.

As can be seen in Figure~\ref{fig:RollOut-runtime}(c), the description length of the subgroups is not very impacted by the strategies of the \textsc{Roll-Out} step. The results of the \textit{random-reward} reward aggregation policy are still different from other strategies: The description length is smaller for the \textit{Mushroom} dataset. Using \textit{large-freq-roll-out} with $jumpLength = 100$ leads to smaller descriptions for the \textit{Mushroom} dataset. Finally, the description length is not or almost not influenced by the \textsc{Roll-Out} strategies.

\begin{table}[t!] \centering \scriptsize
\caption{The list of strategies used to experiment with the \textsc{RollOut}  method\label{tab:strategies-rollout}.}
\begin{tabular}{cll} 
\hline
Strategy & Path Policy  & Reward Aggregation Policy \\ 
\hline
(1)	& \textit{naive-roll-out} ($pathLength = 20$)	& \textit{terminal-reward} \\
\hline
(2)	& \textit{direct-freq-roll-out}	& \textit{max-reward} \\
(3)	& \textit{direct-freq-roll-out}	& \textit{mean-reward} \\
(4)	& \textit{direct-freq-roll-out}	& \textit{top-2-mean-reward} \\
(5)	& \textit{direct-freq-roll-out}	& \textit{top-5-mean-reward} \\
(6)	& \textit{direct-freq-roll-out}	& \textit{top-10-mean-reward} \\
(7)	& \textit{direct-freq-roll-out}	& \textit{random-reward} \\
\hline
(8)	& \textit{large-freq-roll-out}	($jumpLength = 10)$& \textit{max-reward} \\
(9)	& \textit{large-freq-roll-out}	($jumpLength = 10)$& \textit{mean-reward} \\
(10)	& \textit{large-freq-roll-out}	($jumpLength = 10)$& \textit{top-2-mean-reward} \\
(11)	& \textit{large-freq-roll-out}	($jumpLength = 10)$& \textit{top-5-mean-reward} \\
(12)	& \textit{large-freq-roll-out}	($jumpLength = 10)$& \textit{top-10-mean-reward} \\
(13)	& \textit{large-freq-roll-out}	($jumpLength = 10)$& \textit{random-reward} \\
\hline
(14)	& \textit{large-freq-roll-out}	($jumpLength = 20)$& \textit{max-reward} \\
(15)	& \textit{large-freq-roll-out}	($jumpLength = 20)$& \textit{mean-reward} \\
(16)	& \textit{large-freq-roll-out}	($jumpLength = 20)$& \textit{top-2-mean-reward} \\
(17)	& \textit{large-freq-roll-out}	($jumpLength = 20)$& \textit{top-5-mean-reward} \\
(18)	& \textit{large-freq-roll-out}	($jumpLength = 20)$& \textit{top-10-mean-reward} \\
(19)	& \textit{large-freq-roll-out}	($jumpLength = 20)$& \textit{random-reward} \\
\hline
(20)	& \textit{large-freq-roll-out}	($jumpLength = 50)$& \textit{max-reward} \\
(21)	& \textit{large-freq-roll-out}	($jumpLength = 50)$& \textit{mean-reward} \\
(22)	& \textit{large-freq-roll-out}	($jumpLength = 50)$& \textit{top-2-mean-reward} \\
(23)	& \textit{large-freq-roll-out}	($jumpLength = 50)$& \textit{top-5-mean-reward} \\
(24)	& \textit{large-freq-roll-out}	($jumpLength = 50)$& \textit{top-10-mean-reward} \\
(25)	& \textit{large-freq-roll-out}	($jumpLength = 50)$& \textit{random-reward} \\
\hline
(26)	& \textit{large-freq-roll-out}	($jumpLength = 100)$& \textit{max-reward} \\
(27)	& \textit{large-freq-roll-out}	($jumpLength = 100)$& \textit{mean-reward} \\
(28)	& \textit{large-freq-roll-out}	($jumpLength = 100)$& \textit{top-2-mean-reward} \\
(29)	& \textit{large-freq-roll-out}	($jumpLength = 100)$& \textit{top-5-mean-reward} \\
(30)	& \textit{large-freq-roll-out}	($jumpLength = 100)$& \textit{top-10-mean-reward} \\
(31)	& \textit{large-freq-roll-out}	($jumpLength = 100)$& \textit{random-reward} \\
\hline 
\end{tabular}
\end{table}

\begin{figure}\centering
\setlength{\tabcolsep}{0pt}
\renewcommand{\arraystretch}{1.5}
\begin{tabular}{ccc}
\includegraphics[width=.33\columnwidth]{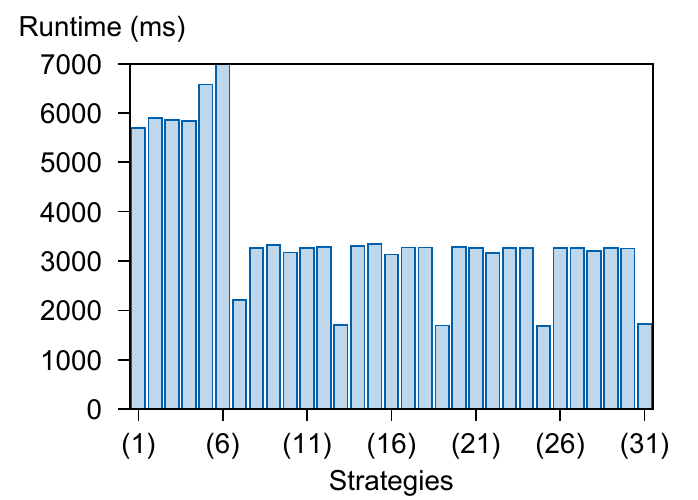} &
\includegraphics[width=.33\columnwidth]{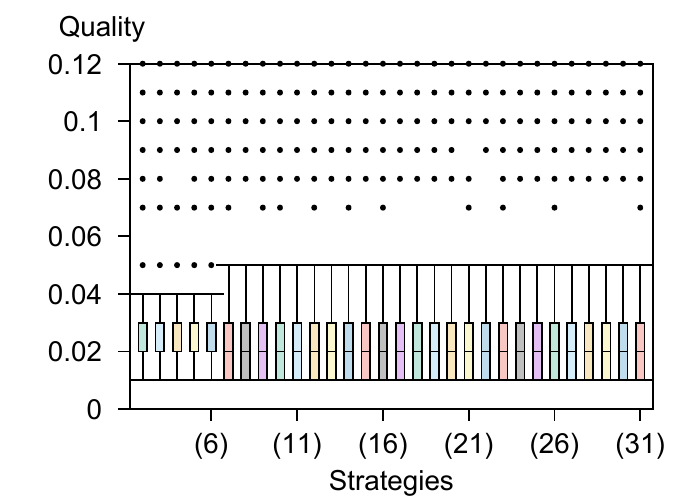}&
\includegraphics[width=.33\columnwidth]{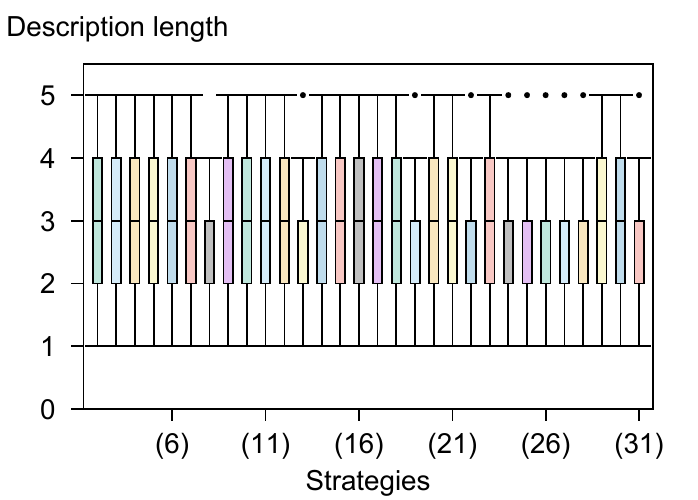}\\
(a) Runtime: \textit{BreastCancer} & (b) Avg. quality: \textit{Mushroom} & (c) Descr length: \textit{Mushroom}
\end{tabular} 
	\caption{Impact of the \textsc{Roll-Out} strategy\label{fig:RollOut-runtime}.}
\end{figure}

\subsection{The \textsc{Memory} method}\label{xp:Memory}
We derived six strategies for the \textsc{Memory} step  given in Figure~\ref{fig:Memory-runtime}(bottom). Obviously, the \textit{all-memory} policy is slower than other strategies because all the nodes within the path of the simulation have to be stored (see Figure~\ref{fig:Memory-runtime}(a)). Conversely, the \textit{no-memory} policy is the fastest strategy. The runtimes of the \textit{top-k-memory} policies is comparable.

Figure~\ref{fig:Memory-runtime}(b) shows that the quality of the result set is impacted by the choice of the memory policies. We can observe that the \textit{no-memory} is clearly worse than other strategies. Indeed, in the \textit{Emotion} dataset, the best subgroups are located more deeper in the search space, thus, if the solutions encountered during the simulation are not stored it would be difficult to find them just be considering the subgroups that are expanded in the search tree. Surprisingly, the \textit{all-memory} policy does not lead to better result. In fact the path generated during a simulation contains a lot of redundant subgroups: Storing all these nodes is not required to improve the quality of the result set. Only few subgroups within the path are related to different local optima.

As expected in Figure~\ref{fig:Memory-runtime}(c), the descriptions of the subgroups obtained with the \textit{no-memory} policy are smaller than those of other strategies. Indeed, with the \textit{no-memory} policy, the result sets contains only subgroups that are expanded in the search tree, in other words, the subgroups obtained with the \textsc{Expand} step.

\begin{figure}\centering
\setlength{\tabcolsep}{0pt}
\renewcommand{\arraystretch}{1.5}\scriptsize
\begin{tabular}{ccc}
\includegraphics[width=.33\columnwidth]{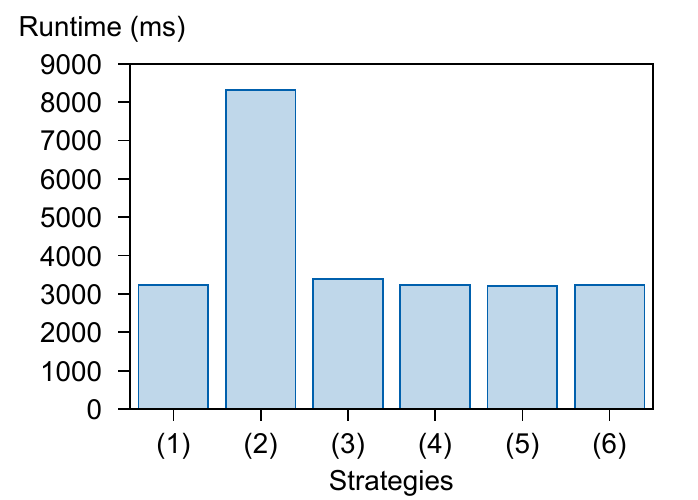} &
\includegraphics[width=.33\columnwidth]{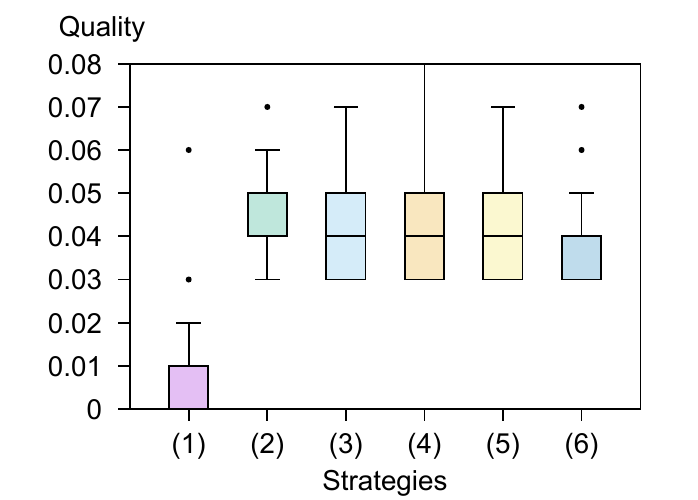}&
\includegraphics[width=.33\columnwidth]{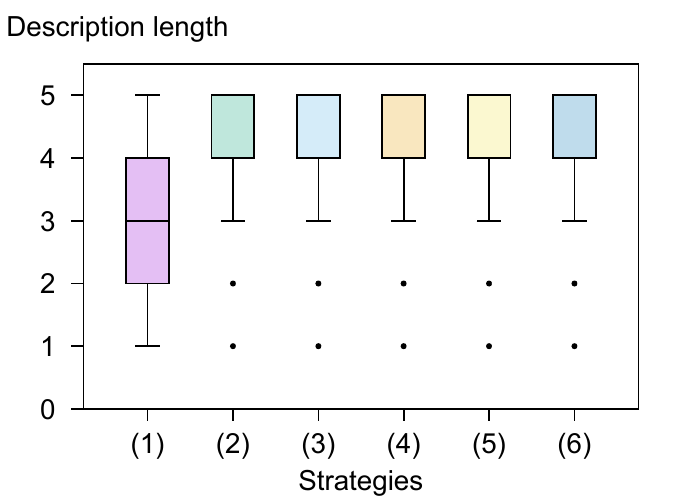}\\
(a) Runtime: \textit{Ionosphere} & (b) Avg. quality: \textit{Emotions} & (c) Descr. length: \textit{BreastCancer}\\
\hline
\multicolumn{3}{c}{
(1)	 \textit{no-memory}	
(2)	 \textit{all-memory}	
(3)	 \textit{top-1-memory}	 
}\\
\multicolumn{3}{c}{
(4)	 \textit{top-2-memory}	
(5)	 \textit{top-5-memory}	
(6)	 \textit{top-10-memory}	
}\\
\hline
\end{tabular} 
	\caption{Impact of the \textsc{Memory} strategy\label{fig:Memory-runtime}.}
\end{figure}

\subsection{The \textsc{Update} method}\label{xp:Update}
Figure~\ref{fig:Update-runtime}(bottom) presents the different strategies we use to implement the \textsc{Update} step. The goal of this step is to back-propagate the reward obtained by the simulation to the parent nodes. The runtime of these strategies are comparable (see Figure~\ref{fig:Update-runtime}(a)). However, we notice that the \textit{top-k-mean-update} policy is a little more time consuming. Indeed, we have to maintain a list for each node within the built tree that stores the top-k best rewards obtained so far.

Figure~\ref{fig:Update-runtime}(b) shows the quality of the result set when varying the \textsc{Update} policies. For most of the datasets, since the proportion of local optima is very low within the search space, the \textit{max-update} is more efficient than the \textit{mean-update}. Indeed, using the \textit{max-update} enables to keep in mind that there is an interesting pattern that is reachable from a node. However, Figure~\ref{fig:Update-runtime}(b) presents the opposite phenomena: The \textit{mean-update} policy leads to a better result. In fact, since there are a lot of local optima in the \textit{Ionosphere} dataset, the \textit{mean-update} can find the areas with lots of interesting solutions. Moreover, using the \textit{top-k-mean-update} leads to the \textit{mean-update} when $k$ increases.

The description of the subgroups in the result set is comparable when varying the policies of the \textsc{Update} method (see Figure~\ref{fig:Update-runtime}(c)). Indeed, the aim of the \textsc{Update} step is just to back-propagate the reward obtained during the simulation to the nodes of the built tree to guide the exploration for the following iterations. This step does not have a large influence on the length of the description of the subgroups.

\begin{figure}\centering
\setlength{\tabcolsep}{0pt}
\renewcommand{\arraystretch}{1.5}\scriptsize\scriptsize
\begin{tabular}{ccc}
\includegraphics[width=.33\columnwidth]{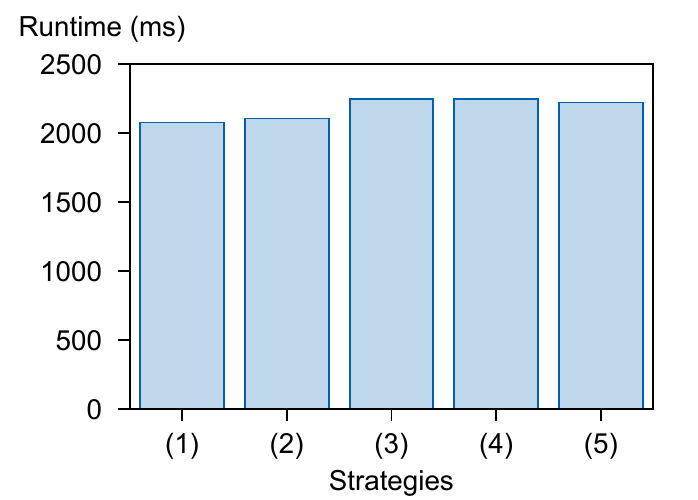}&
\includegraphics[width=.33\columnwidth]{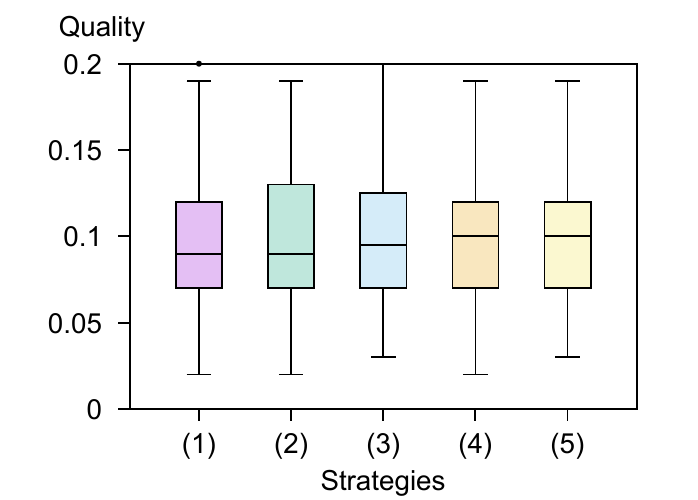}&
\includegraphics[width=.33\columnwidth]{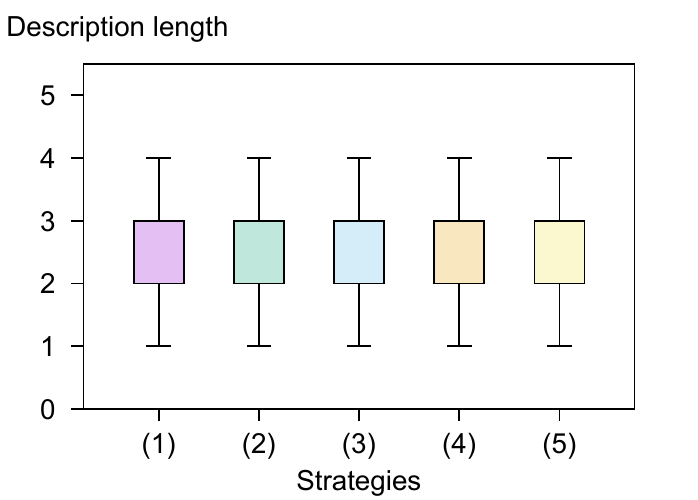}\\
(a) Runtime: \textit{TicTacToe} & (b) Avg. quality: \textit{Ionosphere} & (c) Descr length: \textit{Iris}\\
\hline
\multicolumn{3}{c}{
(1)	 \textit{max-update}	
(2)	 \textit{mean-update}	
(3)	 \textit{top-2-mean-update}	 
}\\
\multicolumn{3}{c}{
(4)	 \textit{top-5-mean-update}	
(5)	 \textit{top-10-mean-update}	
}\\
\hline
\end{tabular} 
	\caption{Impact of the \textsc{Update} strategy\label{fig:Update-runtime}.}
\end{figure}

\subsection{The number of iterations}\label{xp:NbIterations}
We study the impact of different computational budgets allocated to \algo{},
that is, the maximum number of iterations the algorithm can perform.
As depicted in Figure~\ref{fig:NbIterations-runtime}(a),
 the runtime is linear with the number of iterations. 
 The x-axis is not linear w.r.t. the number of iterations, 
 please refer to the bottom of Figure~\ref{fig:NbIterations-runtime} 
 to know the different values of the number of iterations. 

Moreover, as expected, the more iterations, the better the quality of the result set. Figure~\ref{fig:NbIterations-runtime}(b) shows that a larger computational budget leads to a better quality of the result set, but, obviously, it requires more time. Thus, with this exploration method, the user can have some results anytime. For the \emph{BreastCancer} dataset, the quality decreases from 10 to 100 iterations: This is due to the fact that with 10 iterations there are less subgroups extracted (12 subgroups) than with 100 iterations (40 subgroups), and the mean quality of the result set with 100 iterations contains also subgroups with lower quality measures.

\begin{figure}[b]\centering
\setlength{\tabcolsep}{0pt}
\renewcommand{\arraystretch}{1.5}\renewcommand{\arraystretch}{1.5}
\begin{tabular}{ccc}
\includegraphics[width=.33\columnwidth]{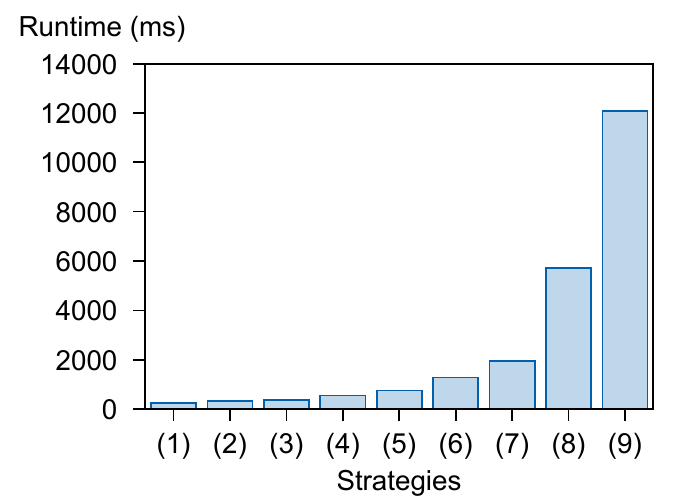} &
\includegraphics[width=.33\columnwidth]{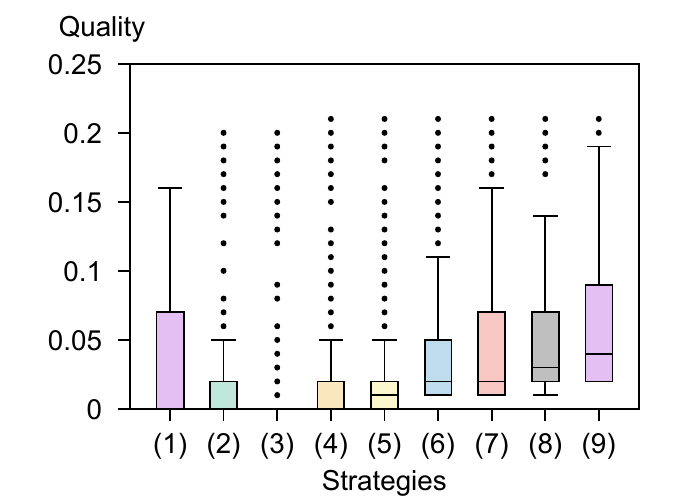}&
\includegraphics[width=.33\columnwidth]{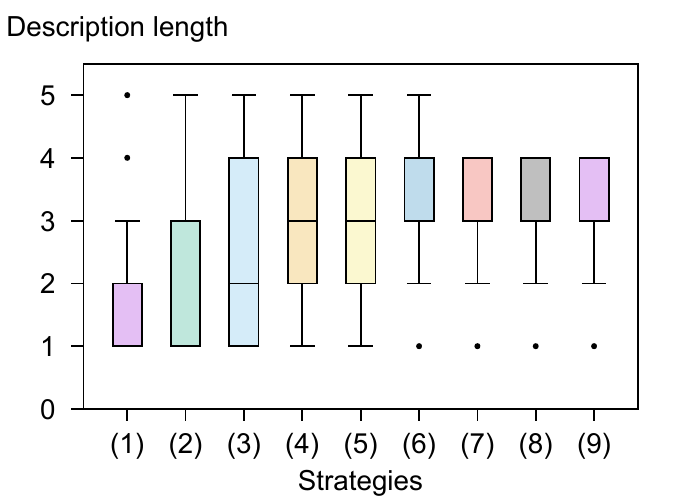}\\
(a) Runtime: Cal500 & (b) Avg. quality: BreastCancer & (c) Descr length: Nursery\\
\hline
\multicolumn{3}{c}{
(strategy)\#iterations: (1)10	 
(2)50	
(3)100	
(4)500	
(5)1K
(6)5K 
(7)10K	 
(8)50K	
(9)100K
}\\
\hline
\end{tabular} 
	\caption{Impact of the maximal budget (number of iterations)\label{fig:NbIterations-runtime}.}
\end{figure}

\subsection{Evaluating pattern set diversity when a ground truth is known\label{xp:Completeness}}

Artificial datasets are generated according to default parameters given in Table \ref{tab:gen-params}.
 Then, we study separately the impact of each parameter on the ability to retrieve the hidden patterns with our MCTS algorithm.
After a few trials, 
we use the following default MCTS parameters: The \textit{single player UCB} (SP-MCTS) for the select policy; 
the \textit{label-expand} policy with PU activated;  the \textit{direct-freq-roll-out} policy for the simulations,
the \textit{max-update} policy as aggregation function of the rewards of a simulation,
the \textit{top-10-memory}  policy and finally the \textit{max-update} policy for the back-propagation.    

The ability to retrieve hidden patterns is measured with a Jaccard coefficient between the support of the hidden patterns and the one discovered by the algorithm:
\begin{definition}[Evaluation measure]
Let $\mathcal{H}$ be the set of hidden patterns, and $\mathcal{F}$ the set of patterns found by an MCTS mining algorithm, the quality, of the found collection is given by:
\[ qual(\mathcal{H}, \mathcal{F}) = avg_{\forall h \in \mathcal{H}} ( max_{\forall f \in \mathcal{F}}(Jaccard(ext(h),ext(f)))  ) \]
that is, the average of the quality of each hidden pattern, which is the best Jaccard coefficient with a found pattern. We thus measure here the \textit{diversity}. It is a pessimistic measure in the sense that
it takes its maximum value $1$ iff all patterns have been completely retrieved.\label{def:diversity-measure}
\end{definition}
It can be noticed that we do not use the Definition \ref{def:pattern-set-diversity} for diversity: As a ground truth is available, we opt for a measure that quantifies its recovering.

\smallskip

\noindent\textit{Varying the noise parameter.} We start with the set of parameters $\mathcal{P}_{small}$.
The results are given in Figure \ref{fig:artificial1} with different minimal 
support values (used during the expand step and the simulations).
Recall that a hidden pattern is random set of symbols $attribute = value$ when dealing with nominal attributes,
repeated in $pattern\_sup$ object descriptions: The noise 
makes that each generated object description may not support the pattern.
Thus, the noise directly reduces the support of a hidden pattern:
increasing the noise requires to decrease the minimal support of the algorithm. 
This is clearly observable on the different figures. When the minimum
support is set to the same value as the support of the hidden patterns ($minSupp=100$),
the noise has a strong impact and it is difficult to retrieve the hidden patterns,
even when the whole tree (of frequent patterns) has been expanded. 
Reducing the minimal support to $1$  makes the search very resistant to noise.
Note that when two lines exactly overlaps, it means that the search space
of frequent patterns was fully explored: MCTS performed an exhaustive search.

\begin{figure}\centering
\setlength{\tabcolsep}{0pt}
\renewcommand{\arraystretch}{1.5}
\begin{tabular}{cc}
\includegraphics[width=0.5\textwidth]{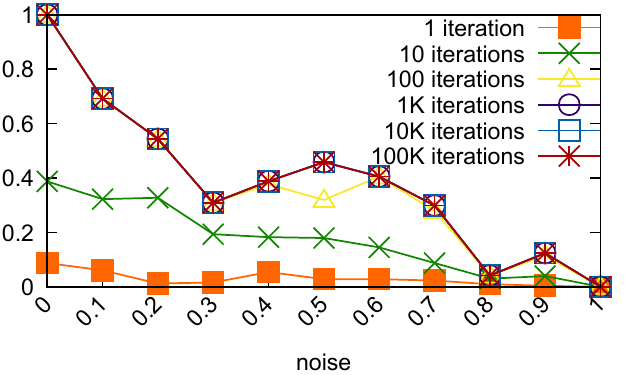} &
\includegraphics[width=0.5\textwidth]{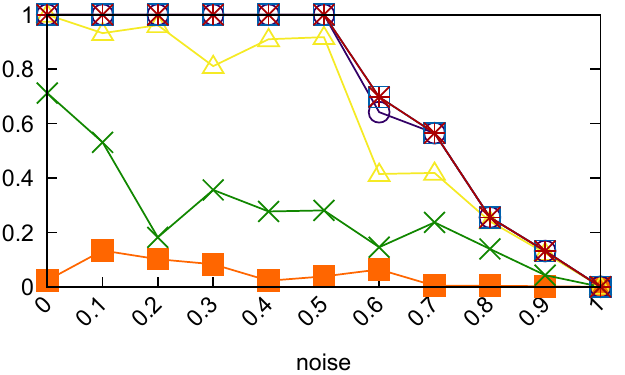} \\
(i) $minSup=100$ & (ii) $minSupp = 50$ \\
\multicolumn{2}{c}{\includegraphics[width=0.5\textwidth]{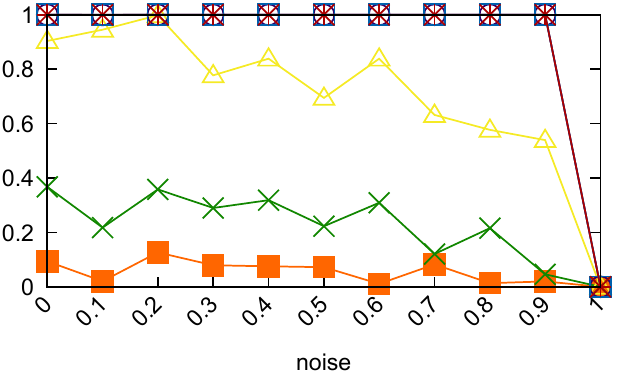}} \\
\multicolumn{2}{c}{ (iii) $minSup=10$}
\end{tabular} 
\caption{Ability to retrieve hidden patterns ($qual(\mathcal{H}, \mathcal{F})$ in Y-axis) when introducing noise and mining with different minimum supports $minSup$.\label{fig:artificial1}}
\end{figure}

\smallskip

\noindent\textit{Varying the out factor parameter.} Each pattern is inserted in $pattern\_sup$ transactions
 (or less when there is noise) as positive examples (class label $+$). 
We also add  $pattern\_sup \times out\_factor$ negative examples (class label $-$). 
When $out\_factor=1$, each pattern appears as much  in positive and negative examples.
This allows to hide patterns with different quality measure, 
and especially different $WRAcc$ measures.
The Table \ref{fig:artificial2} (row (1)) shows that this parameter has no impact:
patterns of small quality are retrieved easily in a small number of iterations. 
The UCB hence drives the search towards promising parts that have the best rewards.

\smallskip

\noindent\textit{Varying the number of hidden patterns.} We claim that the UCB will guide the search towards interesting parts (exploitation) but also unvisited parts (exploration) of the search space. It follows that 
all hidden patterns should be retrieved and well retrieved. We thus vary the number of random patterns  
between $1$ and $20$ and observe that they are all retrieved (Table \ref{fig:artificial2} (row (2))).
When increasing the number of hidden more patterns, retrieving all of them requires more iterations in the general case. 

\smallskip

\noindent\textit{Varying the support size of the hidden patterns.} Patterns with a high support (relative to the total number of objects)
should be easier to be retrieved as a simulation has more chance to discover them, even partially. We observe that patterns
with small support can still be retrieved but it requires more iterations to retrieve them in larger datasets
(Table \ref{fig:artificial2} (row (3))). 

\smallskip

\noindent\textit{Varying the number of objects.} The number of objects directly influences the computation of 
the support of each node: Each node stores a \textit{projected database} that lists which objects belong 
to the current pattern. The memory required for our MCTS implementation follows a linear complexity
w.r.t. the number of iterations. This complexity can be higher depending on the chosen memory policy
(e.g. in these experiments, the top-10 memory policy was chosen). The time needed to compute the support
of a pattern is higher for larger dataset, but it does not change the number of iterations required to find
a good result.  This is reported in (Table \ref{fig:artificial2} (row (4))). Run times will be discussed later.

\smallskip

\noindent\textit{Varying the number of attributes and the size of attributes domains.} These two parameters
directly determine the branching factor of the exploration tree. It takes thus more iterations to 
fully expand a node and to discover all local optima. Here again, all patterns are well discovered but
larger datasets require more iterations (Table \ref{fig:artificial2} (row (5) and (6))).

\begin{longtable}{ccc}  
\toprule
     & $\mathcal{P}_{medium}$ & $\mathcal{P}_{large}$ \\
\midrule
\endhead
\rotatebox{90}{~~~(1) $out\_factor$} &
\includegraphics[width=0.43\textwidth]{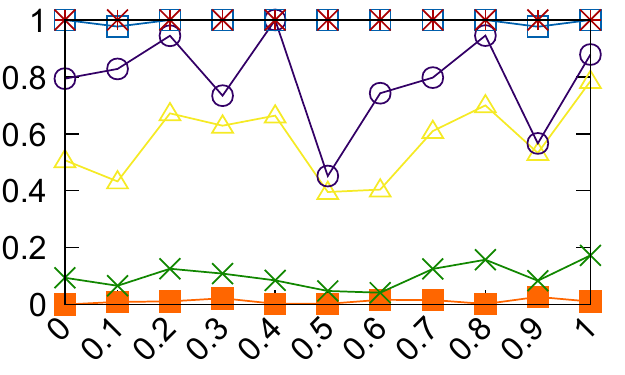} &
\includegraphics[width=0.43\textwidth]{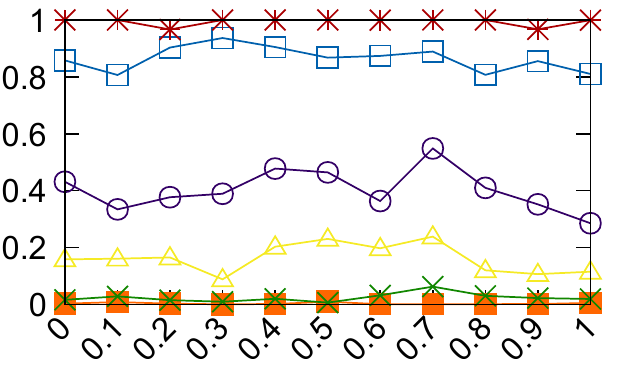} \\[6pt]
\rotatebox{90}{~~(2) $nb\_patterns$} &
\includegraphics[width=0.43\textwidth]{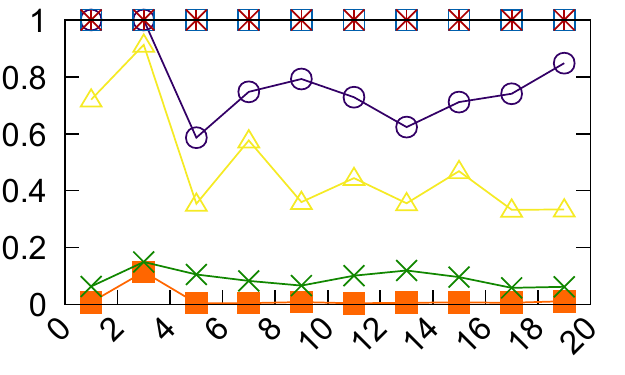} &
\includegraphics[width=0.43\textwidth]{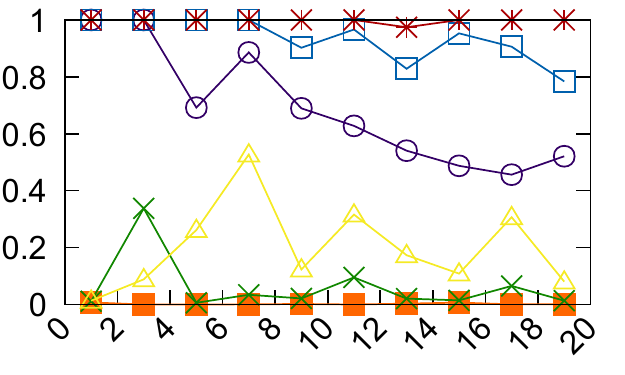} \\[6pt]
\rotatebox{90}{~~(3) $pattern\_sup$} &
\includegraphics[width=0.43\textwidth]{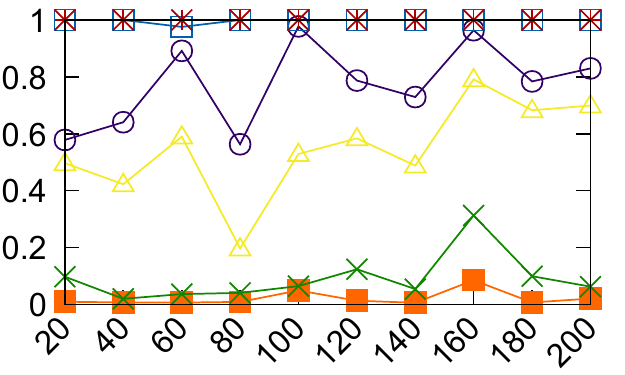} &
\includegraphics[width=0.43\textwidth]{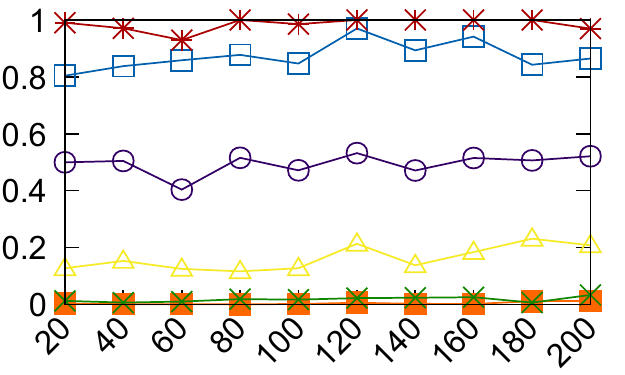} \\[6pt]
\rotatebox{90}{~~~~~(4) $nb\_obj$} &
\includegraphics[width=0.43\textwidth]{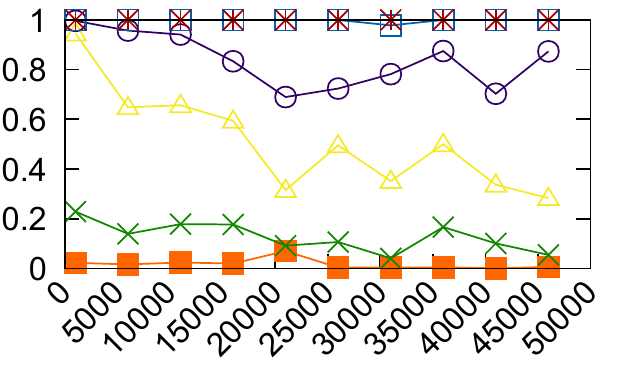} &
\includegraphics[width=0.43\textwidth]{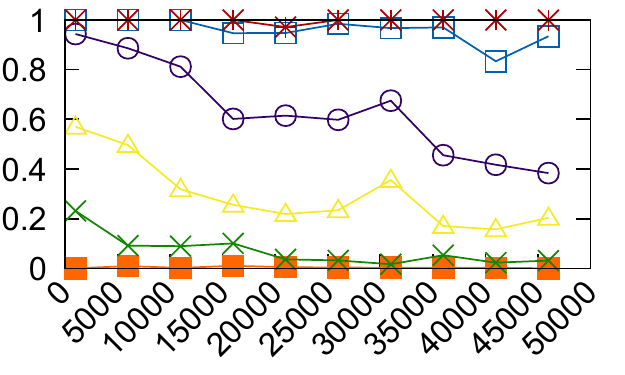}  \\[6pt]
\rotatebox{90}{~~~~~~(5) $nb\_attr$} &
\includegraphics[width=0.43\textwidth]{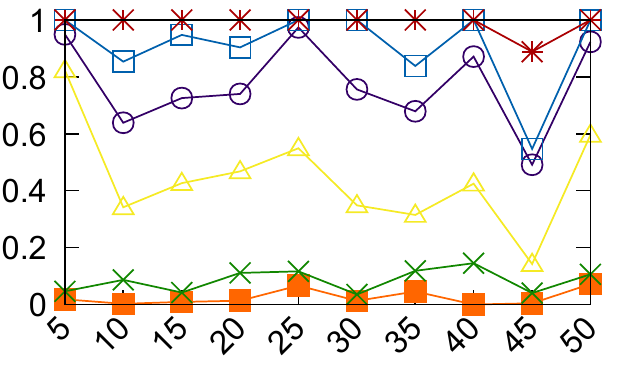} &
\includegraphics[width=0.43\textwidth]{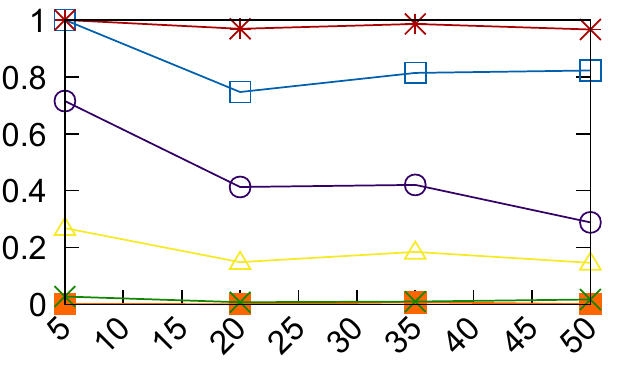}  \\[6pt]
\rotatebox{90}{~(6) $domain\_size$} &
\includegraphics[width=0.43\textwidth]{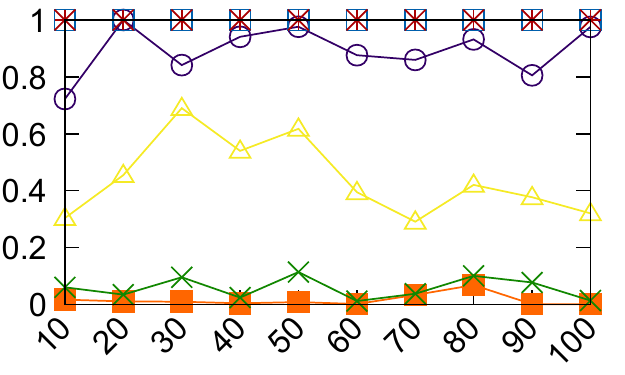} &
\includegraphics[width=0.43\textwidth]{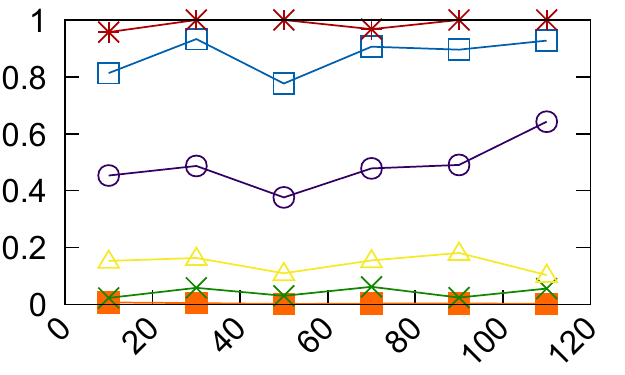}  \\[6pt]
\hline
 &  $\mathcal{P}_{medium}$ & $\mathcal{P}_{large}$  \\
\bottomrule
\multicolumn{3}{c}{Legend: \includegraphics[width=6cm]{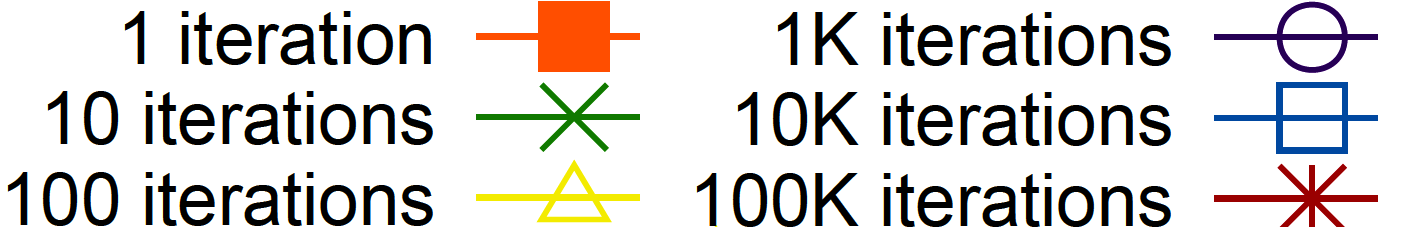}}\\
\caption{Evaluation of the ability to retrieve hidden patterns in artificial data generated according to different parameters (average of 5 runs for each point).\\ $qual(\mathcal{H}, \mathcal{F})$ in Y-axis, parameters given in the first columns as X-axis. \label{fig:artificial2}}
\end{longtable}

\section{Comparisons with existing algorithms}\label{sec:xp-comparisons}

We compare \algo{} to other SD approaches (exhaustive search, beam search, genetic algorithms and sampling) in terms of computational time, diversity and redundancy of the pattern set, and memory usage. In addition to the benchmark data we used in the previous section, we generate 5 new artificial datasets for which parameters are given in Table~\ref{tab:artificialData}. In this empirical study, we consider a timeout of 5 minutes that is enough to capture the behavior of the algorithms that are not based on a computational budget, such as \textsc{SD-Map} or beam search approaches. Indeed, \algo{} and sampling methods use a computational budget.

\begin{table}[t!] \centering 
\setlength{\tabcolsep}{4pt}
\renewcommand{\arraystretch}{1.3}
\caption{Parameters of the artificial data generator\label{tab:artificialData}. The format of name of the data is given by their parameters by $[nb\_obj]$\_$[nb\_attr]$\_$[domain\_size]$. }
\begin{tabular}{cccccc}
\hline
Name& 5000\_10\_200 & 5000\_50\_50   & 5000\_50\_200 & 20000\_10\_200 & 20000\_50\_200  \\ 
\hline\hline
$nb\_obj$            &  5,000 & 5,000 & 5,000 & 20,000 & 20,000\\
$nb\_attr$            &  10     & 50  & 50 & 10 & 50 \\
$domain\_size$   & 200   & 50 & 200 & 200 & 200\\
$nb\_patterns$     & 5   & 5 & 5 & 5 & 5\\
$pattern\_sup$     & 200 & 200 & 200 & 200 & 200\\
$out\_factor$       & 0.05  & 0.05  & 0.05  & 0.05  & 0.05  \\
$noise\_rate$       & 0.05  & 0.05  & 0.05  & 0.05  & 0.05   \\
\hline
\end{tabular}
\end{table}

\begin{table}
\centering
\caption{Diversity in the result set for artificial data. The value is the $qual(\mathcal{H}, \mathcal{F})$ where $\mathcal{H}$ is the set of hidden patterns in the artificial data and $\mathcal{F}$ is the set of found patterns by the algorithm The character '-' means that the algorithm exceeds the time limit of 5 minutes.\label{tab:diversity_artificial}}
\setlength{\tabcolsep}{2.5pt}
\renewcommand{\arraystretch}{2.5}
\rotatebox{90}{
\scalebox{0.75}{
\begin{tabular}{|c|c|c|ccccccc|ccc|ccccccc|ccccc|}
\hline
\multirow{2}{*}{Data} & \multirow{2}{*}{$minSupp$} & \multirow{2}{*}{\textsc{SD-Map}} & \multicolumn{7}{c|}{\algo{}} & \multicolumn{3}{c|}{\textsc{BeamSearch}} & \multicolumn{7}{c|}{\textsc{Misere}} & \multicolumn{5}{c|}{\textsc{SSDP}} \\
 & &  & 1K & 5K & 10K & 50K & 100K & 500K & 1,000K & 10 & 100 & 500 & 1K & 5K & 10K & 50K & 100K & 500K & 1,000K & 100 & 500 & 1K & 5K & 10k \\
 \hline
5000\_10\_200 & 50 &1 & 1 & 1 & 1 & 1 & 1 & 1 & 1 & 1 & 0.38 & 1 & 1 & 1 & 1 & 1 & 1 & 1 & 1 & 0.72 & 0.72 & 0.72 & 0.72 & 0.72  \\
5000\_50\_50 & 50 & - & 0.82 & 1 & 1 & 1 & 1 & - & - & 0.99 & 1 & - & 0.73 & 0.94 & 0.93 & 0.99 & 0.99 & 0.99 & 1 & 0.48 & 0.72 & 0.72 & 0.72 & 0.72 \\
5000\_50\_200 & 50 & - & 1 & 1 & 1 & 1 & 1 & 1 & 1 & 1 & 1 & - & 0.7 & 0.89 & 0.97 & 0.98 & 0.98 & 0.99 & 1 & 0.68 & 0.72 & 0.72 & 0.72 & 0.72 \\
20000\_10\_200 & 100 & 1 & 1 & 1 & 1 & 1 & 1 & 1 & 1 & 1 & 1 & - & 0.77 & 0.89 & 0.96 & 0.97 & 1 & 1 & - & 0.73 & 0.73 & 0.73 & 0.72 & -  \\
20000\_50\_200 & 100 & - & 0.69 & 1 & 1 & 1 & - & - & - & - & - & - & 0.6 & 0.86 & 0.87 & 0.95 & 0.98 & - & - & 0.61 & 0.72 & 0.72 & 0.72 & - \\
\hline
\end{tabular}
}
}
\end{table}

\subsection{\textsc{SD-Map}}
\textsc{SD-Map*}, an improvement of \textsc{SD-Map}, is considered as the most efficient exhaustive method for subgroup discovery [\cite{DBLP:conf/pkdd/AtzmullerP06,DBLP:conf/ismis/AtzmullerL09}]. It employs the FP-Growth principle to enumerate the search space [\cite{DBLP:journals/datamine/HanPYM04}]. It operates a greedy discretization as a pre-processing step to handle numerical data. It can consider several quality measures to evaluate the interestingness of a subgroup ($WRAcc$, $F1$ score, etc.). The source code is available at \url{http://www.vikamine.org}. 

\medskip\noindent\textbf{Runtime.}
\textsc{SD-Map*} is very efficient when dealing with dataset of reasonable search space size. We empirically study the scalability of this algorithm compared to those of \algo{} for several numbers of iterations. Figure~\ref{fig:runtime-SDMAP}~(a) displays the runtime of \textsc{SD-Map*} on the \textit{Mushroom} dataset when varying the minimum support threshold. Clearly, for high minimum support thresholds, \textsc{SD-Map*} is able to provide the results quickly. However, the runtime is exponential w.r.t. this threshold, and thus this algorithm cannot be applied to extract small subgroups. Conversely, \algo{} is tractable for very low minimum support thresholds: Many iterations can be performed. Figure~\ref{fig:runtime-SDMAP}~(b) displays the runtimes for the \textit{Ionosphere} data and once again \algo{} is able to perform lots of iteration in a linear time w.r.t. the minimum support threshold whereas \textsc{SD-Map*} fails.

\begin{figure}[t]\centering
\setlength{\tabcolsep}{0pt}
\renewcommand{\arraystretch}{1.5}
\begin{tabular}{cc}
\includegraphics[width=.5\columnwidth]{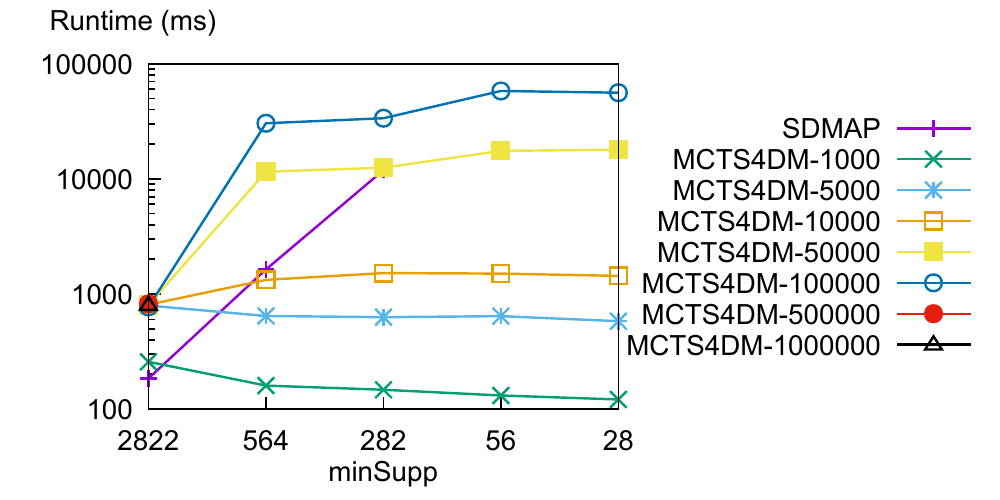}&
\includegraphics[width=.5\columnwidth]{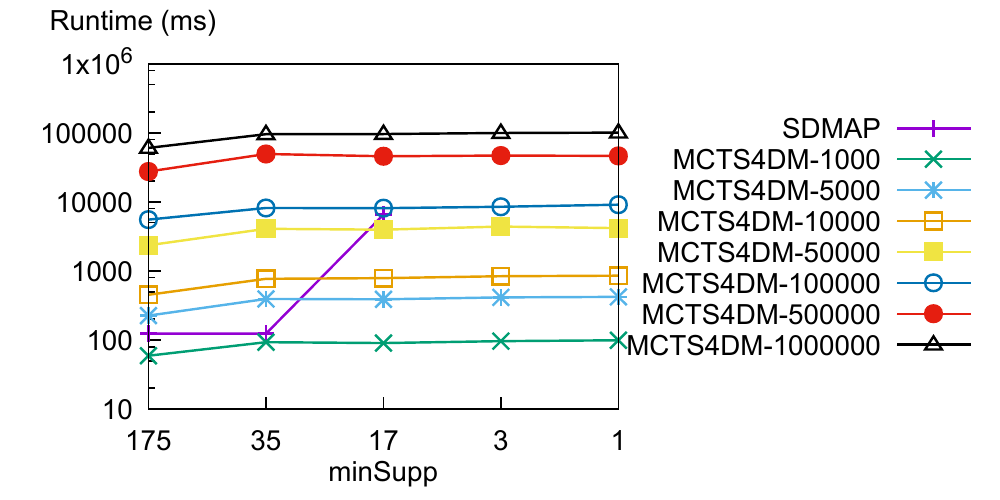}\\
(a) \textit{Mushroom} & (b) \textit{Ionosphere}
\end{tabular} 
	\caption{Runtime of \textsc{SD-Map} and \algo{} when varying \textit{minSupp}.\label{fig:runtime-SDMAP}}
\end{figure}

\medskip\noindent\textbf{Redundancy and diversity in the result set.}
\textsc{SD-Map*} is an exhaustive search, thus the diversity of the result set is either perfect if the run can finish or null: Table~\ref{tab:diversity_artificial} gives the diversity using the formula of Definition~\ref{def:diversity-measure} on artificial data since the ground truth is known. However when dealing with numerical attributes, \textsc{SD-Map*} does not ensure a perfect diversity anymore. Indeed, since it handles numerical attributes by performing a discretization in a pre-processing step, there is no guarantee to extract the best patterns. For example, in the \textit{BreastCancer} dataset, the quality measure of the best subgroup extracted by \textsc{SD-Map*} is 0.18 whereas \algo{} has found a subgroup whose quality measure is 0.21 with $50,000$ iterations in only $0.213$ms.
Figure~\ref{fig:redundancy-SDMAP}~(a) and Figure~\ref{fig:redundancy-SDMAP}~(b) show  the redundancy of the result set (computed with the formula in Definiton~\ref{def:redundancy}) extracted on the \textit{Mushroom} dataset respectively with $minSupp = 264$ and $minSupp= 282$. Obviously, the lower the maximum similarity threshold $\Theta$, the more redundant the result set. Compared to \textsc{SD-Map*}, \algo{} produces few redundancy when performing few iterations, but few iterations are not enough to provide good results:  The more  iterations, the more redundancy.
Surprisingly, the result set of \algo{} can be more redundant than those of \textsc{SD-Map*} that represents our baseline. Indeed, the set of redundant patterns for the main local optima is larger than for other small local optima, i.e., there are many more patterns that are similar with the main local optima than with small local optima. Since \algo{} generally finds at first the main local optima, the redundancy measure is higher than those of \textsc{SD-Map*} because there are, in proportion, more redundant subgroups in the result set than local optima.
When $minSupp$ decreases, \textsc{SD-Map*} becomes more redundant compared to some \algo{} runs.

\begin{figure}[t]\centering
\setlength{\tabcolsep}{0pt}
\renewcommand{\arraystretch}{1.5}
\begin{tabular}{cc}
\includegraphics[width=.5\columnwidth]{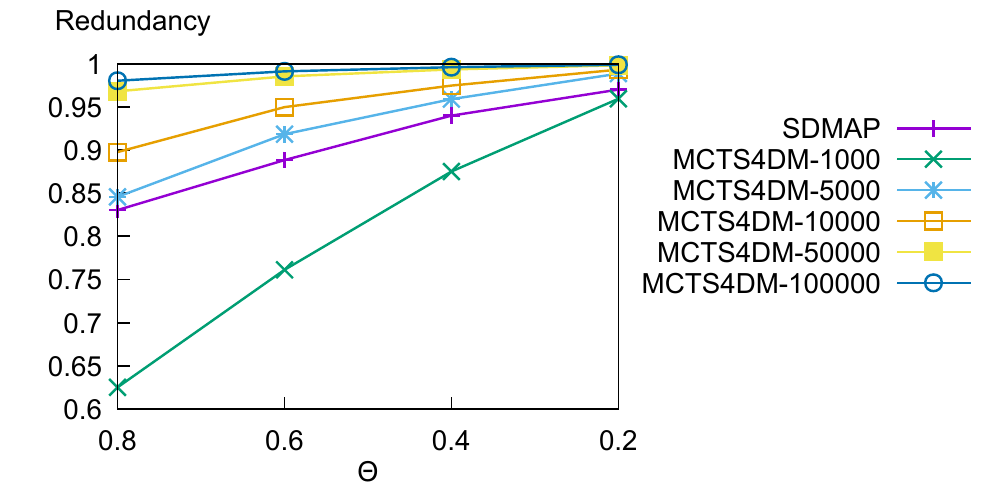}&
\includegraphics[width=.5\columnwidth]{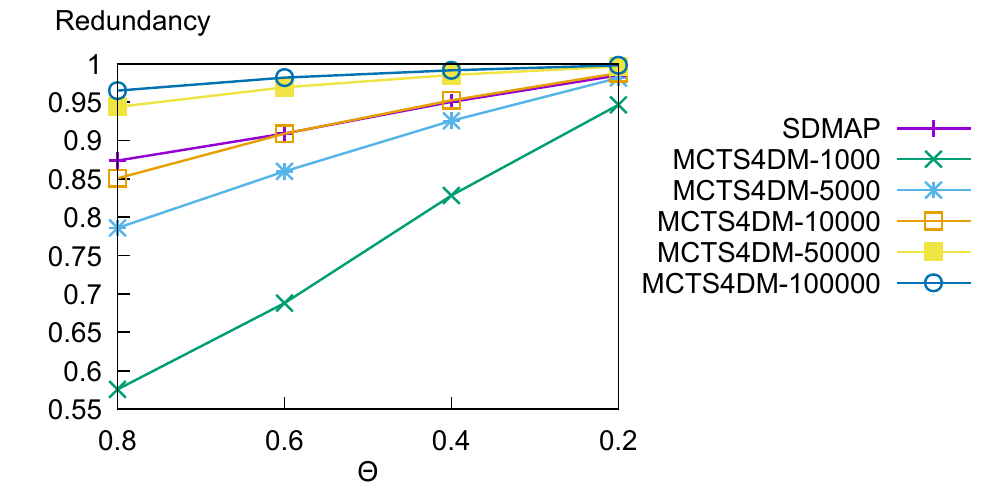}\\
(a) minSupp = 564 & (b) minSupp=282
\end{tabular} 
	\caption{The redundancy in the result set for the \textit{Mushroom} data varying $\Theta$.\label{fig:redundancy-SDMAP}}
\end{figure}

\medskip\noindent\textbf{Memory usage.}
Figure~\ref{fig:memory}~(a) displays the memory usage of our algorithm \algo{} on the \textit{Mushroom} data with different numbers of iterations. As expected, the more iterations, the higher the memory usage. It grows  linearly with the number of iterations (the creation of the storage structures avoids to see the linear growth of the memory during the first iterations). Figure~\ref{fig:memory}~(b) displays the memory usage when varying the minimum support threshold in the \textit{Mushroom} dataset for all the considered algorithms. Here, we only discuss the case of \textsc{SD-Map*} compared to \algo{}. Although \textsc{SD-Map*} is an exhaustive search, its memory usage is similar to (but slightly lower than) those of \algo{} with $100k$ iterations. It confirms that the implementation of \textsc{SD-Map*} is efficient.

\begin{figure}[t]\centering
\setlength{\tabcolsep}{0pt}
\renewcommand{\arraystretch}{1.5}
\begin{tabular}{cc}
\includegraphics[width=.41\columnwidth]{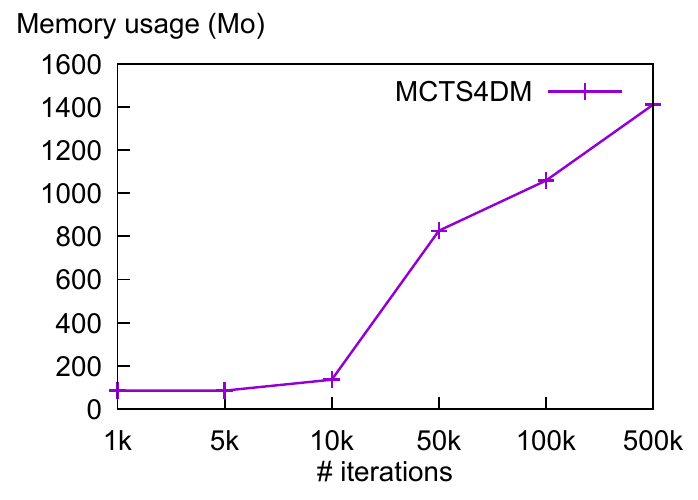}&
\includegraphics[width=.59\columnwidth]{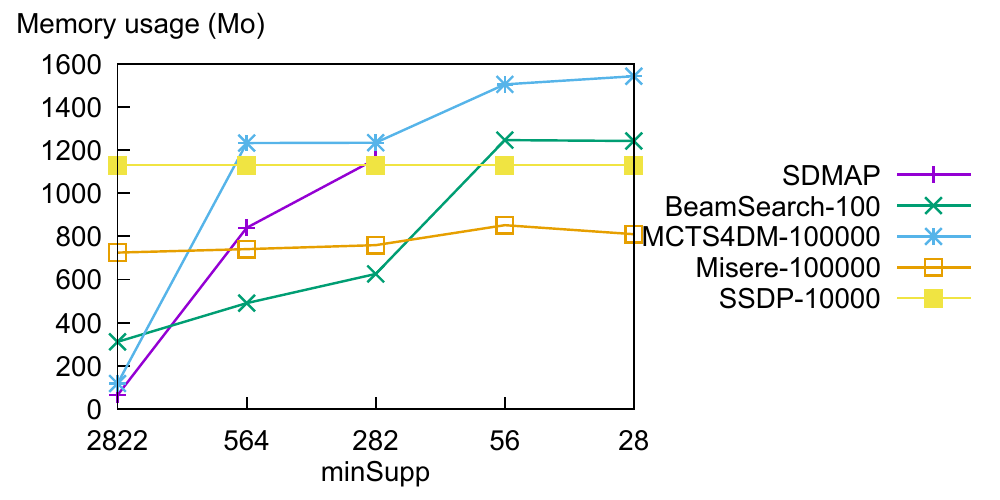}\\
(a) \algo{} & (b) All algorithms
\end{tabular} 
	\caption{The memory usage in the \textit{Mushroom} dataset.\label{fig:memory}}
\end{figure}

\subsection{Beam search}
The beam search strategy is the most popular heuristic method in subgroup discovery. Cortana is a tool that enables to run SD tasks with beam search approaches and its source code is available at \url{http://datamining.liacs.nl/cortana.html}. Beam search, originally introduced in \cite{lowerre1976harpy}, is a greedy method that partially explores the search space with several hill climbings run in parallel. It proceeds in a level-wise approach considering at each level the best subgroups to extend at the next level. The number of subgroups that are kept to be extended at the next level is called the beam width. 

\medskip\noindent\textbf{Runtime.}
By definition, beam search can only find, yet very quickly, local optima reachable from the most general pattern with a hill climbing. Figure~\ref{fig:runtime-BeamSearch} shows the runtimes with different beam widths. Obviously, the larger the beam width, the longer the runtimes. However, even with a beam width of 500, the runtime is lower than those of \algo{} with $100k$ iterations. This is due to the greedy nature of beam search that expands subgroups only if the quality measure increases. However, the local optima that are located deeper in the search space are often missed since the quality measure is not monotone. For large data, such as \textit{Bibtex}, the beam search is not tractable since it is required to expand all the first level of the search tree to build up the beam. Thus, in our settings, the timeout of 5 minutes is reached with beam search whereas  \algo{} can proceed to $100k$ iterations in 4 minutes.

\begin{figure}[t]\centering
\includegraphics[width=.7\columnwidth]{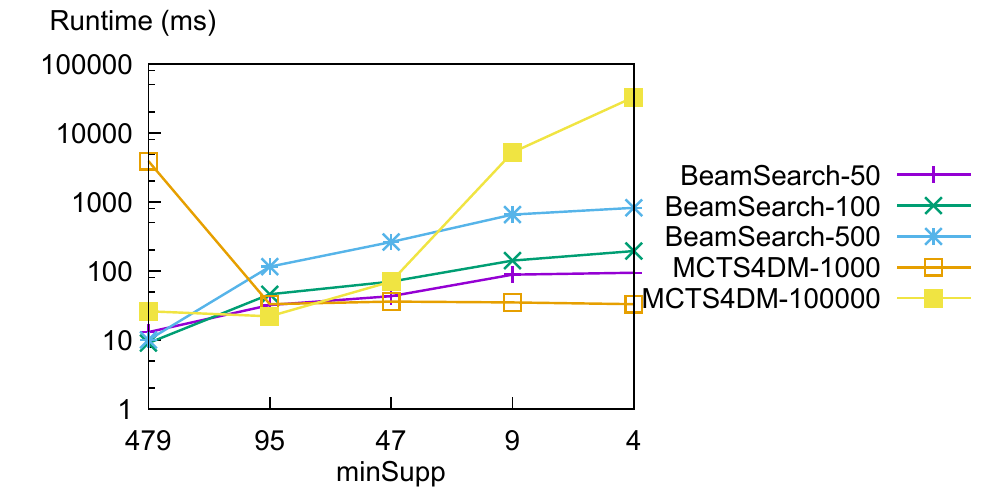}
	\caption{Runtime of the beam search exploration and \algo{} when varying \textit{minSupp} in the \textit{tictactoe} dataset.\label{fig:runtime-BeamSearch}}
\end{figure}

\medskip\noindent\textbf{Redundancy and diversity in the result set.}
Due to the greedy approach of the beam search, the redundancy in the result set is the main problem. Figure~\ref{fig:redundancy-BeamSearch}~(a) compares the redundancy in the \textit{TicTacToe} data obtained with several beam searches and with \algo{}. Clearly, the beam search leads to a more redundant result set than \algo{}. This remark holds for all data we experimented with. For example, in the \textit{Olfaction} dataset, there is a high difference in the redundancy in the result set obtained with a beam search as well (Figure~\ref{fig:redundancy-BeamSearch}~(b)).
Besides this high redundancy in the result set with a beam search, the diversity is not as good as with \algo{}. Even if in Table~\ref{tab:diversity_artificial}, the beam search extracts the local optima for some datasets, it may require large beam widths that are time consuming. Figure~\ref{fig:diversity-BeamSearch} illustrates the diversity for the \textit{BreastCancer} data with $\Theta = 0.2$. With $100k$ iterations, \algo{} leads to a much more diverse result set than a beam search.
\begin{figure}[t]\centering
\setlength{\tabcolsep}{0pt}
\renewcommand{\arraystretch}{1.5}
\begin{tabular}{cc}
\includegraphics[width=.5\columnwidth]{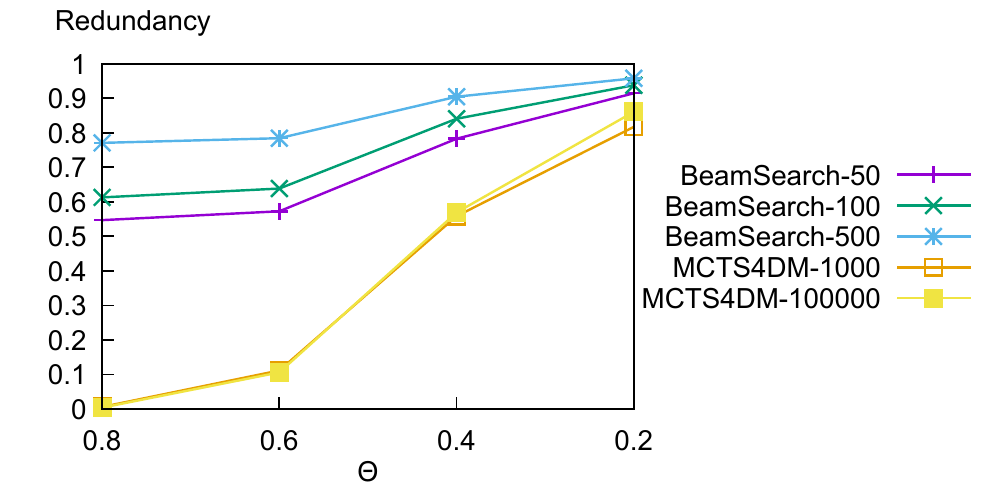}&
\includegraphics[width=.5\columnwidth]{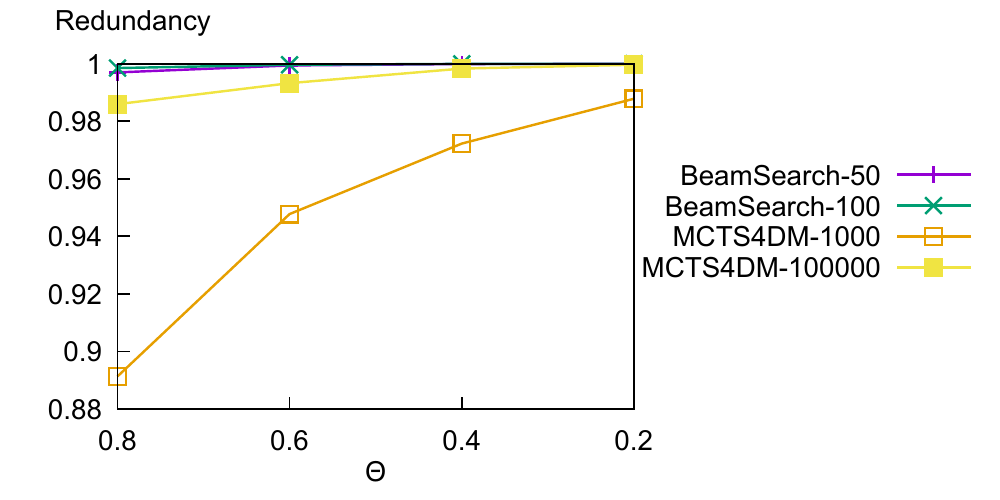}\\
(a) \textit{TicTacToe} ($minSupp = 47$) & (b) \textit{Olfaction} ($minSupp = 84$)
\end{tabular} 
	\caption{The redundancy in the result set for the \textit{TicTacToe} and \textit{Olfaction} data for the beam search strategy.\label{fig:redundancy-BeamSearch}} 
\end{figure}

\begin{figure}[t]\centering
\includegraphics[width=.7\columnwidth]{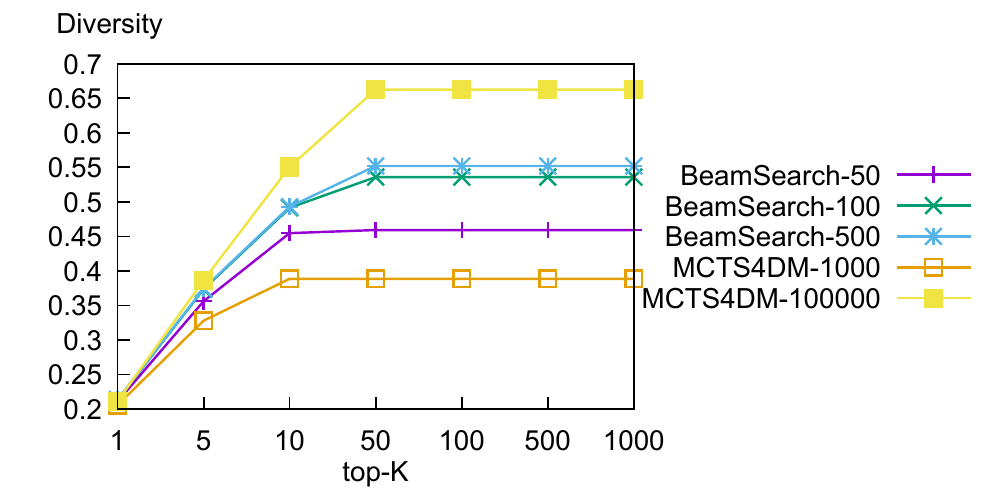}
	\caption{The diversity of the result set in \textit{BreastCancer} data when $\Theta = 0.2$.\label{fig:diversity-BeamSearch}}
\end{figure}

\medskip\noindent\textbf{Memory consumption.}
The size of the set of patterns extracted with a beam search is generally lower than the size of the set of patterns obtained with tens of thousands iterations with \algo{}. Thus, the memory usage is lower for a beam search. Figure~\ref{fig:memory} displays the memory usage of a beam search with a beam width set to 100 in the \textit{Mushroom} data. We observe that the memory usage increases similarly to (but it is lower than) those of \algo{} when varying the minimum support thresholds. 

\subsection{Evolutionary algorithms}
The evolutionary approaches aim at solving problems imitating the process of natural evolution. Genetic algorithms are a branch of the evolutionary approaches that use a fitness function to select which individuals to keep at the next generated population~[\cite{holland1975adaptation}]. In this empirical study, we evaluate the efficiency of \algo{} from \cite{Lucas2017} against the evolutionary algorithm \textsc{SSDP}. 

\medskip\noindent\textbf{Runtime.}
\textsc{SSDP} is free from the minimum support constraint: It explores the whole search space without pruning w.r.t. the support of the patterns. Therefore, the runtimes of \textsc{SSDP} are the same for all minimum support thresholds (Figure~\ref{fig:runtime-SSDP}). However, when varying the population size, it comes with large changes in the runtimes. The runtimes of \textsc{SSDP} are quite similar to those of \algo{} when varying the number of iterations. However, in general \textsc{SSDP} is not scalable when considering a large population size.

\begin{figure}[t]\centering
\includegraphics[width=.7\columnwidth]{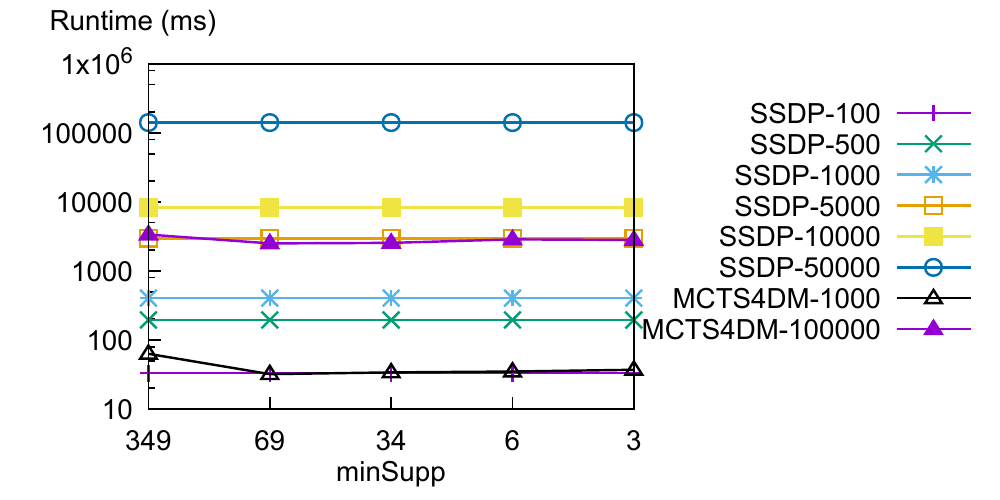}
	\caption{Runtime of \textsc{SSDP} and \algo{} when varying \textit{minSupp} in \textit{BreastCancer}.\label{fig:runtime-SSDP}}
\end{figure}

\medskip\noindent\textbf{Redundancy and diversity in the result set.} 
On one hand, \textsc{SSDP} seems to provide less redundant pattern sets, due to the mutation and cross-over operations of this evolutionary algorithm. Figure~\ref{fig:diversity-SSDP}~(a) deals with the \textit{Iris} data with $minSupp = 7$: The redundancy of \textsc{SSDP} is generally better than those of our algorithm \algo{}. This is the same conclusion in Figure~\ref{fig:redundancy-SSDP}~(b) for \textit{Mushroom} with $minSupp = 56$.  
On the other hand, the diversity in the result set of \textsc{SSDP} is lower than those of \algo{}. In Table~\ref{tab:diversity_artificial}, \textsc{SSDP} fails to extract all hidden patterns in our artificial data. Figure~\ref{fig:diversity-SSDP}~(a) and Figure~\ref{fig:diversity-SSDP}~(b) display the same result for benchmark datasets. Clearly, \algo{} is able to extract much more interesting subgroups than \textsc{SSDP}. Thus, even if the result set of \algo{} can be redundant, it provides a more diverse set of patterns compared to the result set extracted by \textsc{SSDP}. This is due to the population size that is not enough large to provide a high diversity (but \textsc{SSDP} is not tractable for large population sizes).

\begin{figure}[t]\centering
\setlength{\tabcolsep}{0pt}
\renewcommand{\arraystretch}{1.5}
\begin{tabular}{cc}
\includegraphics[width=.5\columnwidth]{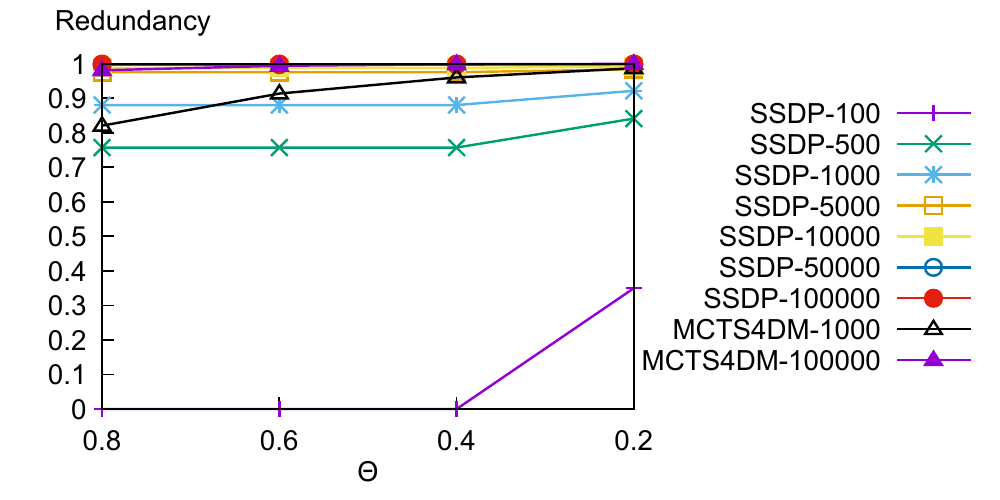}&
\includegraphics[width=.5\columnwidth]{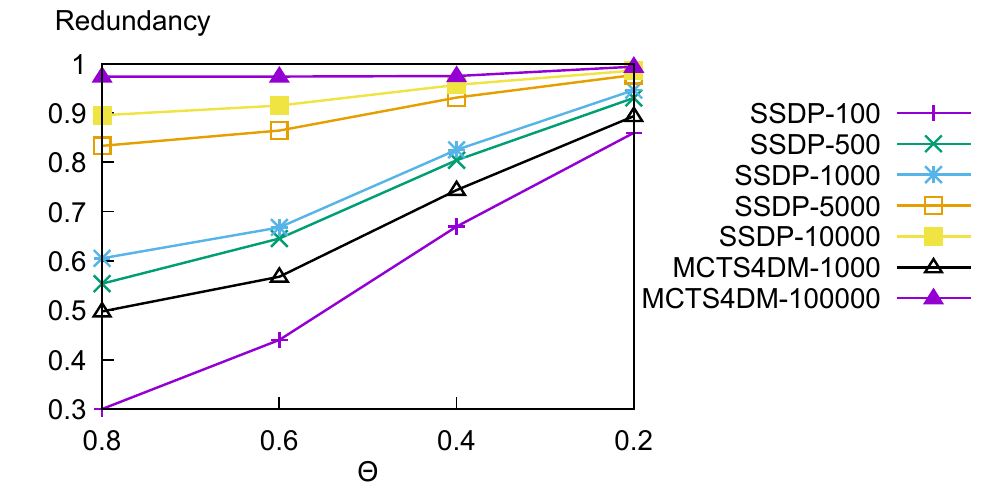}\\
(a) \textit{Iris} ($minSupp = 7$) & (b) \textit{Mushroom} ($minSupp = 56$)
\end{tabular} 
	\caption{The redundancy in the result set for the iris and mushroom data.\label{fig:redundancy-SSDP}}
\end{figure}

\begin{figure}[t]\centering
\setlength{\tabcolsep}{0pt}
\renewcommand{\arraystretch}{1.5}
\begin{tabular}{cc}
\includegraphics[width=.5\columnwidth]{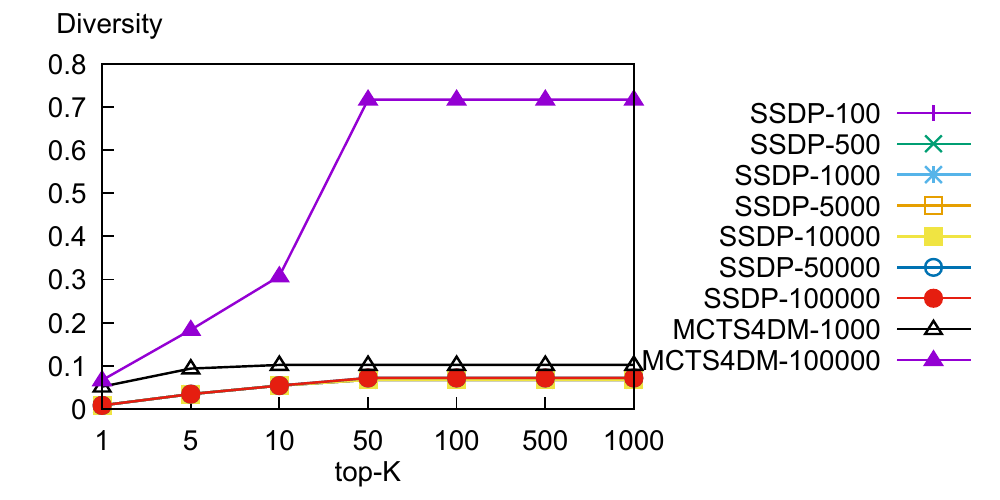}&
\includegraphics[width=.5\columnwidth]{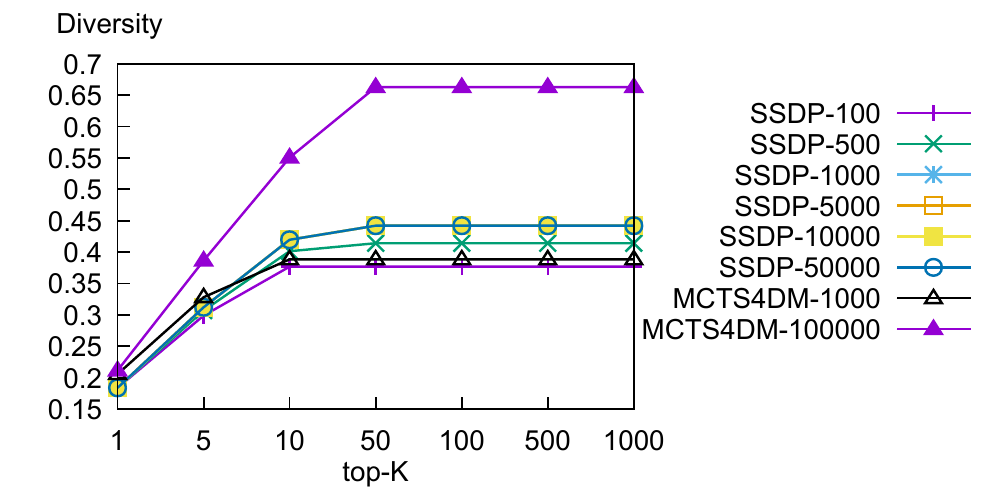}\\
(a) \textit{Emotions} ($minSupp = 5$) & (b) \textit{BreastCancer} ($minSupp = 34$)
\end{tabular} 
	\caption{The diversity of the result set when $\Theta = 0.2$.\label{fig:diversity-SSDP}}
\end{figure}

\medskip\noindent\textbf{Memory consumption.}
The memory usage of \textsc{SSDP} depends on the size of the population. In \textit{mushroom}, when considering a population of size $1,000$, the memory usage is higher than those of \algo{} with $100k$ iterations for high minimum support thresholds but it is lower for low minimum support thresholds (see Figure~\ref{fig:memory}). Indeed, since \textsc{SSDP} does not use any minimum support threshold, its memory usage is independent w.r.t. $minSupp$.

\subsection{Sampling approach}
Sampling methods are useful to provide interactive applications. Indeed, they enable a result anytime. We experiment with the sampling algorithm \textsc{Misere}~[\cite{DBLP:conf/pkdd/GayB12,DBLP:conf/icdm/EghoGBVC15,DBLP:journals/kais/EghoGBVC17}]. Its principle consists in drawing uniformly an object from the data, and then uniformly pick one of its possible generalizations.
Each sample is independent and thus a pattern can be drawn several times.
We chose \textsc{Misere} as it can consider any pattern quality measure (in contrast to other sampling approaches such as \cite{DBLP:conf/ida/MoensB14,DBLP:conf/kdd/BoleyLPG11}), and it performs very well.

\medskip\noindent\textbf{Runtime.}
Since this strategy consists in randomly drawing patterns, the runtime is linear with the number of draws. Varying the minimum support thresholds does not really impact the runtime (Figure~\ref{fig:runtime-Misere}). An iteration with \algo{} is almost only twice much longer than a draw with \textsc{Misere}. This is explained by the fact that  \textsc{Misere} only draws one pattern at once without additional memory (i.e., the Monte Carlo tree). Conversely, in one iteration, MCTS additionally performs \textsc{Select}, \textsc{Expand}, \textsc{Memory} and \textsc{Update} steps.

\begin{figure}[t]\centering
\includegraphics[width=.7\columnwidth]{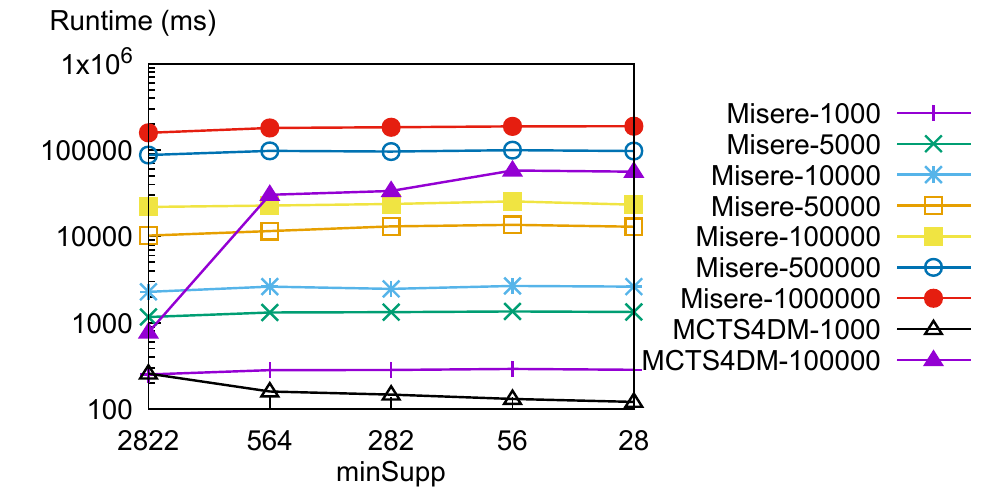}
	\caption{Runtime of \textsc{Misere} and \algo{} when varying \textit{minSupp} on the \textit{Mushroom} dataset.\label{fig:runtime-Misere}}
\end{figure}

\medskip\noindent\textbf{Redundancy and diversity in the result set.}
Since \textsc{Misere} proceeds in independent draws of patterns without exploiting the result of the previous draws of patterns, it leads to a result set that contains little redundancy. Indeed, it can draw an interesting pattern that is close to its local optimum, but it would not try to find this optimum at the next draws.
 Figure~\ref{fig:redundancy-Misere} illustrates this: The result set is much less redundant than those of \algo{}. For example, considering a result set containing $1,000$ draws of patterns and another obtained with $1,000$ iterations from \algo{}.   \algo{} returns pattern set 10 times more redundant than \textsc{Misere}. However, since \textsc{Misere} does not exploit the result of the previous draws, it leads to less diversity. In Table~\ref{tab:diversity_artificial}, \textsc{Misere} may require lots of draws to find all local optima. Figure~\ref{fig:redundancy-Misere}~(b) shows for \textit{Bibtex} that the diversity is better with \algo{} than with \textsc{Misere}. Contrary to \algo{}, there is no guarantee that \textsc{Misere} will explore the whole search space, even given a large computational budget. Nevertheless, in Table~\ref{tab:diversity_artificial}, we can notice that, in practice, \textsc{Misere} can extract all patterns hidden in artificial data, but it might require a lot of draws to find them.

\begin{figure}[t]\centering
\setlength{\tabcolsep}{0pt}
\renewcommand{\arraystretch}{1.5}
\begin{tabular}{cc}
\includegraphics[width=.5\columnwidth]{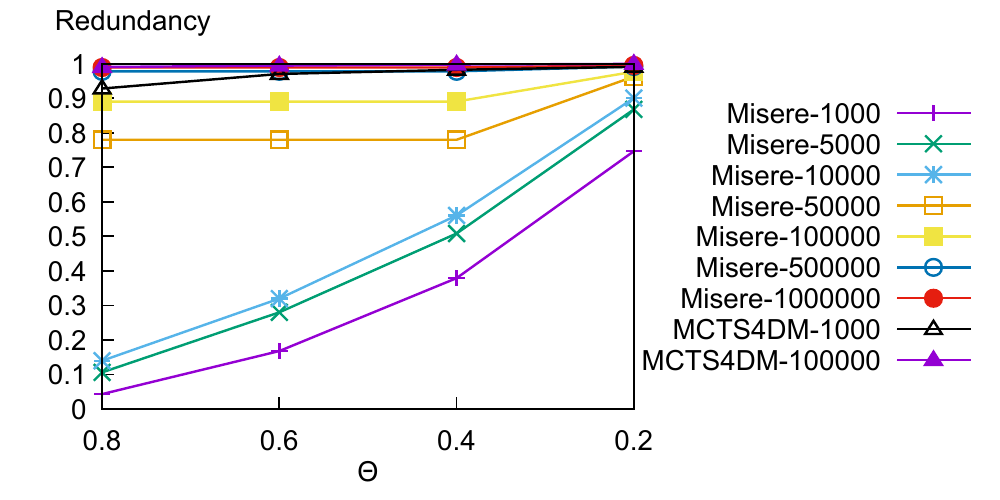}&
\includegraphics[width=.5\columnwidth]{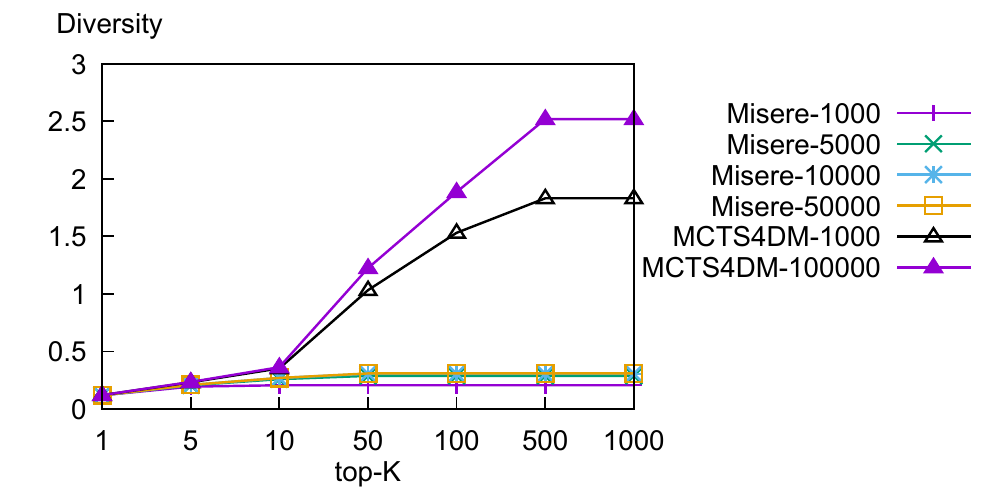}\\
(a) Redundancy in \textit{Cal500}  & (b) Diversity in \textit{Bibtex}
\end{tabular} 
	\caption{The redundancy and the diversity in the result set for the \textit{Cal500} and \textit{Bibtex} data.\label{fig:redundancy-Misere}}
\end{figure}

\medskip\noindent\textbf{Memory consumption.}
As expected, since this sampling method performs independent draws, the memory usage is low. In our settings, only the patterns that are drawn are stored. Figure~\ref{fig:memory} illustrates the memory usage of \textsc{Misere} when it has randomly picked $100k$ patterns. It is constant w.r.t. the minimum support thresholds, and this is the exploration method that requires the less memory for low minimum support thresholds.

\subsection{Considering several measures}

\algo{} can consider any pattern quality measures. Up to now, we experimented with the popular \textit{WRAcc} measure only. We empirically evaluate \algo{} with several quality measures that are also used in SD. We consider some of the quality measures available in Cortana: The entropy, the F1 score, the Jaccard coefficient and the accuracy (or precision). The measures we use are not equivalent since they do not sort the patterns in the same order : Each measure induces a specific profile on the pattern space. 

\algo{} is not measure-dependent since it does not use any prior knowledge to explore the search space. During the first iterations, \algo{} randomly samples the search space, then once it has an estimation -- that is usually rather not reliable at the beginning -- of the distribution of the quality measure on the pattern space, it biases the exploration to focus on the promising areas (\emph{exploitation}) and the areas that have been rarely visited (\emph{exploration}). The strategies we developed are useful to handle the specific profile induced by a quality measure on the pattern space, e.g., if there are lots of local optima in the search space the exploration strategy should be different than if there are few local optima. For instance, the \textit{mean-update} strategy is the most efficient strategy when we are facing a pattern space with lots of local optima since it enables to exploit the areas that are deemed to be interesting in average. Thus, \algo{} can be used with any quality measure. The choice of the strategies only impacts how fast it will find the interesting patterns.

We compare our approach with the sampling method \textsc{Misere} which is the most efficient opponent based on the previous results (see Figure~\ref{fig:measures}). We experiment on the \textit{Mushroom} dataset to reach low minimum support thresholds using the four quality measures (the entropy, the F1 score, the Jaccard coefficient and the accuracy). The results suggest that \algo{} is able to provide a good diversity regardless the quality measure that is used. \algo{} finds a result set with a better diversity for all quality measures. The results on the other datasets are similar but not reported here.

\begin{figure}[t]\centering
\setlength{\tabcolsep}{0pt}
\renewcommand{\arraystretch}{1.5}
\begin{tabular}{cc}
\includegraphics[width=.5\columnwidth]{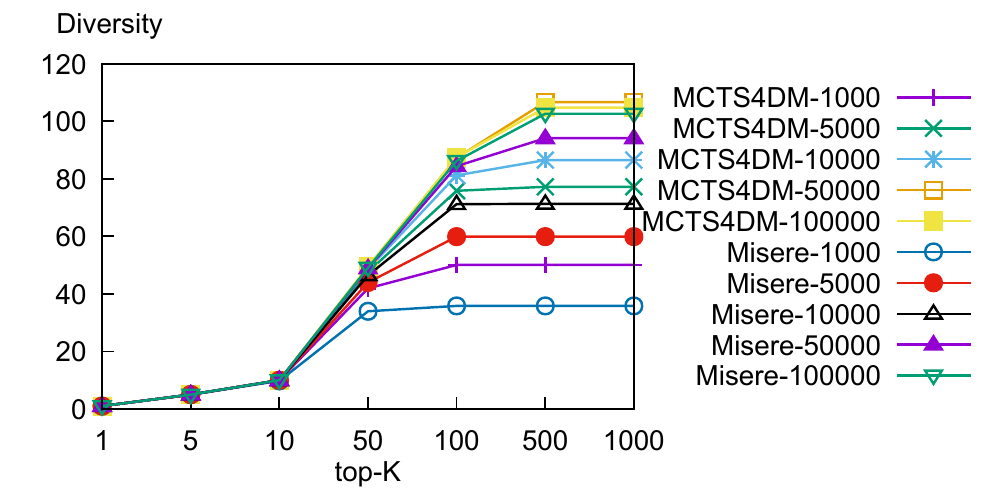}&
\includegraphics[width=.5\columnwidth]{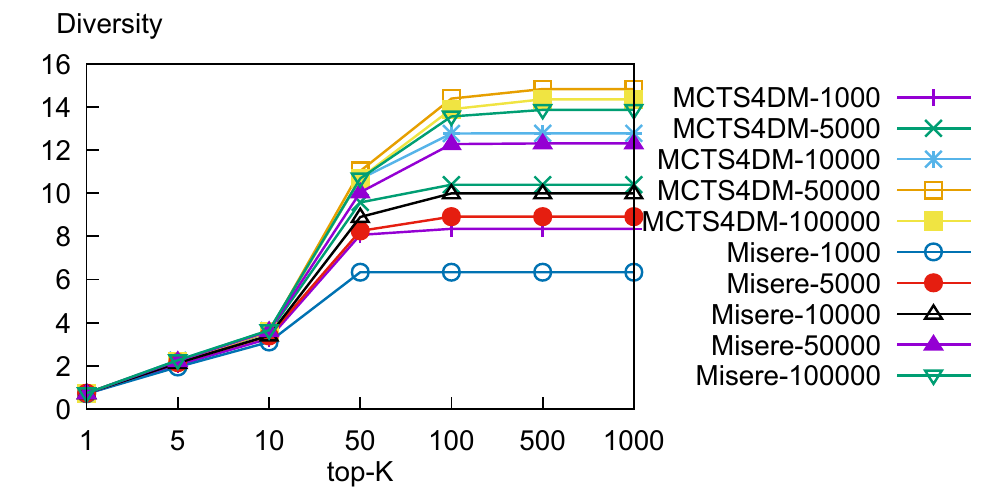}\\
(a) Entropy  & (b) F1 \\
\includegraphics[width=.5\columnwidth]{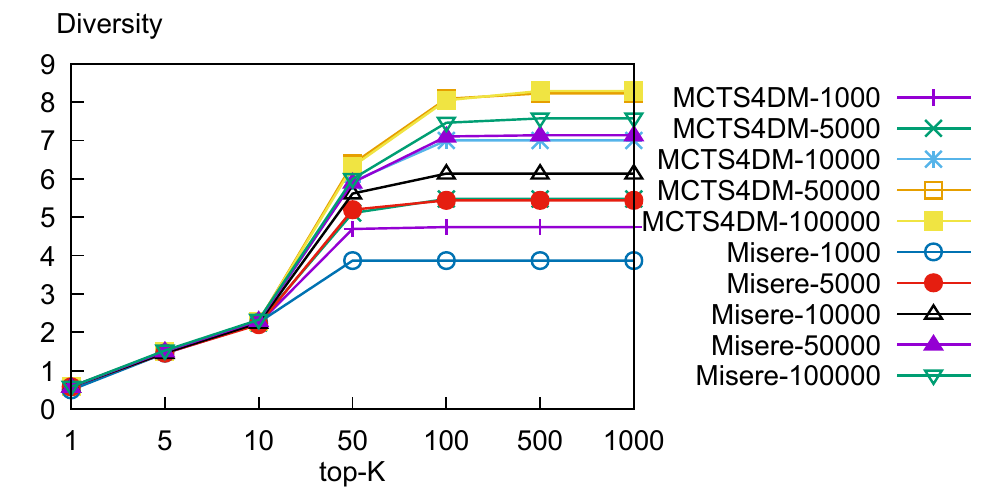}&
\includegraphics[width=.5\columnwidth]{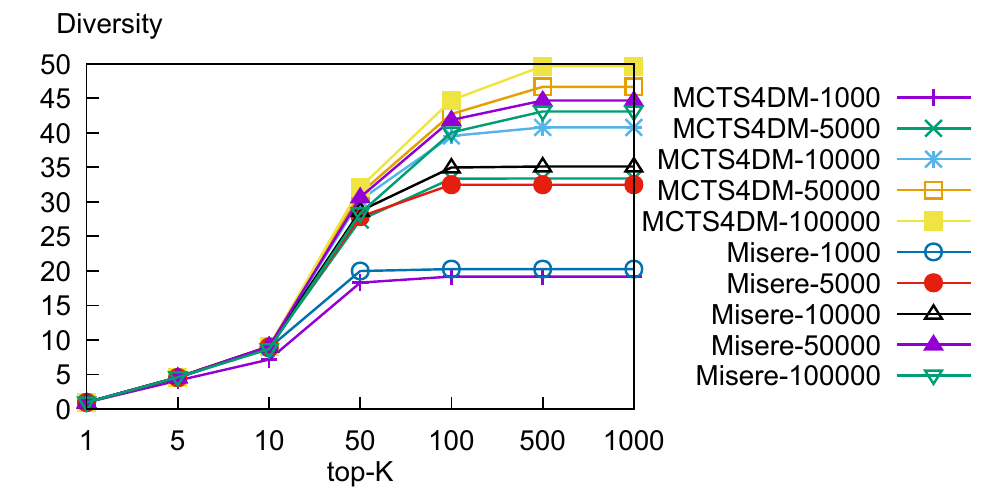}\\
(c) Jaccard coefficient  & (d) Accuracy
\end{tabular} 
	\caption{The diversity of the result set for several quality measures in the \textit{Mushroom} dataset.\label{fig:measures}}
\end{figure}

\section{Discussion}

Based on the empirical study reported in the two previous sections, we now provide a summary of the main results. First, we experimented with several strategies we defined for our algorithm \algo{}. Our conclusions are the following:
\begin{itemize}
\item \textsc{Select}: Concerning the choice of the upper confidence bound, it seems more suitable to use the \textit{SP-MCTS} for SD problems, although it has a limited impact. Activating LO leads to worse results, but with PU we are able to get more interesting patterns. 
This is a quite interesting fact as LO is a widely used technique in pattern mining (enumerate each pattern only once with a lectic order).
\item \textsc{Expand}: We advise to use the \textit{label-gen} strategy that enables to reach more quickly the best patterns, but it can require more computational time. 
\item \textsc{RollOut}: For nominal attributes, the \textit{direct-freq-roll-out} is an efficient strategy. However, when facing numerical attributes, we recommend to employ the \textit{large-freq-roll-out} since it may require a lot of time to reach the maximal frequent patterns. 
\item \textsc{Memory}: Using a memory strategy is essential since it enables to store the patterns encountered during the \textsc{RollOut} step. The \textit{top-1-memory} is enough to avoid to miss interesting patterns that are located deeper in the search space. 
\item \textsc{Update}:  When there are potentially many local optima in the search space, we recommend to set the \textit{mean-update} strategy for the \textsc{Update} step. Indeed it enables to exploit the areas that are deemed to be interesting in average. However, when there are few local optima among lots of uninteresting patterns, using \textit{mean-update} is not optimal since the mean of the rewards would converge to 0. In place, the \textit{max-update} should be used to ensure that an area containing a local optima is well identified.
\end{itemize}

Our second batch of experiments compared \algo{} with the main existing approaches for SD. For that, we experimented with one of the most efficient exhaustive search in SD, namely \textsc{SD-Map*}, a beam search, the recent evolutionary algorithm \textsc{SSDP} and a sampling method implemented in the algorithm \textsc{Misere}. The results suggest that \algo{} leads, in general, to a more diverse result set when an exhaustive search is not tractable. The greedy property of the beam search leads to a low diversity in the result set, and the lack of memory in sampling methods avoid to exploit interesting patterns to find the local optima (a pattern may be drawn several times). There is no guarantee that evolutionary algorithms and sampling approaches converge to the optimal pattern set even with an infinite computational budget. 

\smallskip

MCTS comes with several advantages but has some limits:
\begin{itemize}
\item[+] It produces a good pattern set anytime and it converges to an exhaustive search if given enough time and memory (a \textit{best-first search}). 

\item[+] It is agnostic of the pattern language and the quality measures: It handles numerical patterns without discretization in a pre-processing step and it still provides a high diversity using several quality measures.

\item[+] \algo{} is aheuristic: No hypotheses are required to run the algorithm whereas with some sampling methods, a probability distribution (based on the quality measure and the pattern space) has to be given as a parameter. 
 
\item[-] \algo{} may require a lot of memory. This memory usage becomes more and more important with the increase of the number of iterations. 
 
\item[-] Despite the use of UCB,  it is now well known that MCTS algorithms explore too much the search space.  As MCTS basically requires to expand all the children of a node before exploiting one of them, this problem is even stronger when dealing with very high branching factor (number of direct specializations of a pattern).
This problem has been in part tackled by the progressive widening approach that enables to exploit a child of a node before all of the other children of the node have been expanded [\cite{DBLP:conf/icml/GaudelS10,DBLP:journals/tciaig/BrownePWLCRTPSC12}].
\end{itemize}

\section{Conclusion}
Heuristic search of supervised patterns becomes mandatory with large datasets. However, classical heuristics lead to a weak diversity in pattern sets: Only few local optima are found. We advocate for the use of MCTS for pattern mining: An exploration strategy leading to \emph{``any-time"} pattern mining that can be adapted with different measures and policies. The experiments show that MCTS provides a much better diversity in the result set than existing heuristic approaches. For instance, interesting subgroups are found by means of a reasonable amount of iterations  and the quality of the result iteratively improves. 

MCTS is a powerful exploration strategy that can be applied to several, if not all, pattern mining problems that need to optimize a quality measure given a subset of objects. For example, \cite{DBLP:conf/ijcai/BelfodilKRK17} have already tuned MCTS4DM for mining convex polygon patterns in numerical data. 
In general, the main difficulties are to be able to deal with large branching factors, and jointly deal with several quality measures. This opens new research perspectives for mining more complex patterns such as sequences and graphs.

\section*{Acknowledgments}
The authors would like to thank the anonymous reviewers for their constructive and insightful comments. They also warmly thank  Sandy Moens, Mario Boley, Tarc\'{i}sio Lucas,  Renato Vimiero, Albrecht Zimmermann, Marc Plantevit,  Aimene Belfodil, Abdallah Saffidine, Dave Ritchie and especially C\'eline Robardet for discussions, advice or code sharing. This work has been partially supported by the European Union (GRAISearch, FP7-PEOPLE-2013-IAPP) and the \textit{Institut rh\^{o}nalpin des syst\`{e}mes complexes} (IXXI).

\bibliographystyle{plain}

\end{document}